  \providecommand\BibTeX{{%
    \normalfont B\kern-0.5em{\scshape i\kern-0.25em b}\kern-0.8em\TeX}}}
\newtheorem{theorem}{Theorem}[section]
\newtheorem{lemma}[theorem]{Lemma}
\newtheorem{prop}[theorem]{Property}
\theoremstyle{definition}
\newtheorem{definition}{Definition}[section]
\newcommand{\st}{\text{s.t. }}
\newcommand{\br}[1]{\langle a #1 b\rangle}
\newcommand{\callsym}[1]{\langle {#1}}
\newcommand{\retsym}[1]{{#1} \rangle}
\newcommand{\ac}{\callsym{a}}
\newcommand{\bc}{\retsym{b}}
\newcommand{\brp}[3]{\callsym{#1} #3 \retsym{#2}}
\newcommand{\la}{\leftarrow}
\newcommand{\ra}{\rightarrow}
\newcommand{\La}{\Leftarrow}
\newcommand{\Ra}{\Rightarrow}
\newcommand{\deriv}{\rightarrow^{*}}
\newcommand{\trans}{\delta}
\newcommand{\hd}{\text{head }}
\newcommand{\RTrans}{\mathcal{T}}
\newcommand{\tail}{\text{tail}}
\newcommand{\none}{\text{None}}
\newcommand{\mc}{m_{\text{call}}}
\newcommand{\mr}{m_{\text{ret}}}
\newcommand{\ml}{m_{\text{pln}}}
\newcommand{\mmc}{\mathcal{M}_{\text{call}}}
\newcommand{\mmr}{\mathcal{M}_{\text{ret}}}
\newcommand{\mml}{\mathcal{M}_{\text{pln}}}
\newcommand{\true}{\text{true}}
\newcommand{\false}{\text{false}}
\newcommand{\ptBigstepSym}{\Ra}
\newcommand{\ptBig}[3]{{#1} \ptBigstepSym \left({#2},{#3}\right)}
\newcommand{\ptSmallstepSym}{\rightarrow}
\newcommand{\ptSmall}[5]{({#1},{#2}) \stackrel{#3}{\ptSmallstepSym} ({#4},{#5})}
\newcommand{\ptSmallStar}[5]{({#1},{#2}) \stackrel{#3}{\longrightarrow^*} ({#4},{#5})}
\newcommand{\firstNT}[1]{\mathrm{firstNT}(#1)}
\newcommand{\lastNT}[1]{\mathrm{lastNT}(#1)}
\newcommand{\derivRule}[2]{{#1} \rightarrow {#2}}
\newcommand{\derivStar}[2]{{#1} \rightarrow^* {#2}}
\newcommand{\connect}{\diamond}
\newcommand{\parseInv}[1]{\mathrm{Inv}(#1)}
\newcommand{\pPDAtrans}{\mathcal{P}}
\newcommand{\gPDA}{\mathcal{T}_g}
\newcommand{\gPDAtrans}{\mathcal{G}}
\newcommand{\extractor}[1]{\mathrm{extract}(#1)}
\newcommand{\secondRound}[1]{{{#1}}}
\newcommand{\emptyL}{[]}
\newcommand{\singletonL}[1]{[{#1}]}
\newcommand{\consL}[2]{{#1}{::}{#2}}
\newcommand{\set}[1]{\{#1\}}
\newcommand{\predset}[2]{\{#1\,\mid\,#2\}}
\newcommand{\Pow}{\mathop{\mathcal{P}}}
\begin{document}

\title{A Derivative-based Parser Generator for Visibly Pushdown Grammars}

\author{Xiaodong Jia}
\affiliation{%
  \institution{The Pennsylvania State University}
  \streetaddress{201 Old Main}
  \city{State College}
  \state{Pennsylvania}
  \country{USA}
  \postcode{16802}
}

\author{Ashish Kumar}
\affiliation{%
  \institution{The Pennsylvania State University}
  \streetaddress{201 Old Main}
  \city{State College}
  \state{Pennsylvania}
  \country{USA}
  \postcode{16802}}

\author{Gang Tan}
\affiliation{%
  \institution{The Pennsylvania State University}
  \streetaddress{201 Old Main}
  \city{State College}
  \state{Pennsylvania}
  \country{USA}
  \postcode{16802}
}

\renewcommand{\shortauthors}{Xiaodong Jia, Ashish Kumar and Gang Tan}

\begin{abstract}
  In this paper, we present a derivative-based, functional recognizer
  and parser generator for visibly pushdown grammars.  The generated
  parser accepts ambiguous grammars and produces a parse forest
  containing all valid parse trees for an input string in linear time.
  Each parse tree in the forest can then be extracted also in linear
  time.  Besides the parser generator, to allow more flexible forms of
  the visibly pushdown grammars, we also present a translator that
  converts a tagged CFG to a visibly pushdown grammar in a sound way,
  and the parse trees of the tagged CFG are further produced by
  running the semantic actions embedded in the parse trees of the
  translated visibly pushdown grammar. The performance of the parser
  is compared with a popular parsing tool ANTLR
  \secondRound{and other popular hand-crafted parsers.}
  The correctness of the core parsing algorithm is
  formally verified in the proof assistant Coq.
\end{abstract}


\begin{CCSXML}
<ccs2012>
<concept>
<concept_id>10011007.10011006.10011041.10011688</concept_id>
<concept_desc>Software and its engineering~Parsers</concept_desc>
<concept_significance>500</concept_significance>
</concept>
<concept>
<concept_id>10011007.10010940.10010992.10010998.10010999</concept_id>
<concept_desc>Software and its engineering~Software verification</concept_desc>
<concept_significance>300</concept_significance>
</concept>
<concept>
<concept_id>10003752.10003766.10003771</concept_id>
<concept_desc>Theory of computation~Grammars and context-free languages</concept_desc>
<concept_significance>500</concept_significance>
</concept>
</ccs2012>
\end{CCSXML}

\ccsdesc[500]{Software and its engineering~Parsers}
\ccsdesc[300]{Software and its engineering~Software verification}
\ccsdesc[500]{Theory of computation~Grammars and context-free languages}

\keywords{parser generators, formal verification, derivative-based parsing}


\maketitle

\section{Introduction}
Parsing is a fundamental component in computer systems. Modern parsers
used in high-performance settings such as web browsers and network
routers need to be efficient, as their performance is critical to the
performance of the whole system. Furthermore, high-assurance parsers
are becoming increasingly more important for security, in settings
such as web applications, where their parsers are directly processing
potentially adversarial inputs from the network. In these settings,
formally verified parsers are highly desirable.

Most parsing libraries are based on Context-Free Grammars (CFGs) or
their variants. Although very flexible, CFGs have limitations in terms
of efficiency and formal verification. First, not all CFGs can be
converted to deterministic pushdown automata (PDA); the inherent
nondeterminism in some CFGs causes the worst-case running time of
general CFG-based parsing algorithms to be $O(n^3)$. Moreover,
formally verifying general CFG parsing algorithms is a
difficult task. To our best knowledge, there is no formally verified
CFG parsing algorithm due to its complexity.

To achieve efficient parsing, many parsing frameworks place
restrictions on what CFGs can be accepted, at the expense of
placing the burden on users to refactor their grammars to satisfy
those restrictions. Please see the related-work section for discussion
about common kinds of restrictions, leading to parsing
frameworks such as LL(k), LR(k)~\cite{LRParsing}, PEGs (Parsing Expression Grammars)~\cite{Ford04PEG}, etc.

This paper explores an alternative angle of building parsers based on
Visibly Pushdown Grammars (VPGs)~\cite{AlurM09NWA}. In VPGs, users
explicitly partition all terminals into three kinds: plain, call,
and return symbols. This partitioning makes the conversion of a VPG
to a deterministic PDA always possible, which provides the foundation for
efficient algorithms. Compared to requiring users to refactor their grammars to satisfy
restrictions placed by parsing frameworks such as LL(k), asking users to
specify what nonterminals are call and return symbols is less of a burden.

VPGs have been used in program analysis, XML processing, and other
fields, but their potential in parsing has not been fully exploited.  In
this paper, we show that VPGs bring many benefits in parsing. First,
we show an efficient, linear-time parsing algorithm for VPGs. Second,
our algorithm is amenable to formal verification.  Overall, this paper
makes the following contributions.

\begin{itemize}

    \item We present a derivative-based algorithm for VPG recognition
      and parsing. The algorithm is guaranteed to run in linear
      time. The generated parser accepts ambiguous grammars and
      produces a parse forest for the input string, where each parse
      tree in the forest can be extracted in linear time.

    \item We mechanize the correctness proofs of the parsing algorithm
      in Coq.

    \item We present a surface grammar called tagged CFGs to allow
      more convenient use of our parsing framework. Users can use
      their familiar CFGs for developing grammars and provide
      additional tagging information on nonterminals. A sound
      translator then converts a tagged CFG to a VPG.  

\end{itemize}

The remainder of this paper is organized as follows.
We first introduce VPGs in Section~\ref{sec:background} and  
discuss related work in Section~\ref{sec:related_work}.
Section~\ref{sec:recognition} presents a derivative-based VPG recognizer, 
which enlightens the parsing algorithm discussed in Section~\ref{sec:parsing}. 
The translator and tagged CFGs are discussed in Section~\ref{sec:surface_grammar}.
We then evaluate the VPG parser in Section~\ref{sec:eval}.

\section{Background} \label{sec:background}
As a class of grammars, VPGs~\cite{AlurM09NWA} have been used in
program analysis, XML processing, and other fields. Compared with
CFGs, VPGs enjoy many good properties. It is always possible to build
a deterministic PDA from a VPG. The terminals in a VPG are partitioned
into three kinds and the stack action associated with an input symbol
is fully determined by the kind of the symbol: an action of pushing to
the stack is always performed for a \emph{call symbol}, an action of
popping from the stack is always performed for a \emph{return symbol},
and no stack action is performed for a \emph{plain
  symbol}. Furthermore, VPGs enjoy all closure properties, including
intersection and complement.  As will be shown in this paper, these
properties enable the building of linear-time parsers for VPGs, and
make VPGs amenable to formal verification.  The expressive power of VPG
is between regular grammars and CFGs, and is sufficient for describing
the syntax of many practical languages, such as JSON, XML, and HTML.

We next give a formal account of VFGs. A grammar $G$ is represented as a tuple
$(V, \Sigma, P, L_0)$, where $V$ is the set of nonterminals, $\Sigma$
is the set of terminals, $P$ is the set of production rules, and
$L_0\in V$ is the start symbol.  
\secondRound{The alphabet $\Sigma$ is partitioned
into three sets: $\Sigma_l$, $\Sigma_c$, $\Sigma_r$, which contain
plain, call and return symbols, respectively.  Notation-wise, a terminal in
$\Sigma_c$ is tagged with $\langle$ on the left, and a terminal in
$\Sigma_r$ is tagged with $\rangle$ on the right. For example, $\ac$
is a call symbol in $\Sigma_c$, and $\bc$ is a return symbol
in $\Sigma_r$.}

We first formally define \emph{well-matched VPGs}.  
Intuitively, a well-matched VPG generates
only well-matched strings, in which a call symbol is always matched
with a return symbol in a derived string.
\begin{definition}[Well-matched VPGs]
    A grammar $G = (V, \Sigma, P, L_0)$ is a well-matched VPG with respect to the partitioning $\Sigma=\Sigma_l\cup\Sigma_c\cup \Sigma_r$, if every production rule in $P$ is in one of the following forms.
    \begin{enumerate}
        \item $L\ra\epsilon$, where $\epsilon$ stands for the empty string;
        \item $L\ra c L_1$, where $c\in\Sigma_l$;
        \item $L\ra \br{L_1}L_2$, where $\ac\in\Sigma_c$ and $\bc\in\Sigma_r$.
    \end{enumerate}
\end{definition}
Note that in $L\ra c L_1$ terminal $c$ must be a plain
symbol, and in $L\ra \br{L_1}L_2$ a call symbol must be matched with a
return symbol; these requirements ensure that any derived string must
be well-matched.

\secondRound{The following is an example of a well-matched VPG, which is taken 
from a VPG for XML:
$$
\text{element} \ra \text{OpenTag content CloseTag}
   \mid \text{SingleTag}
$$
In this example, nonterminals start with a lowercase character,
such as ``element'', and terminals start with an uppercase character,
such as ``OpenTag''. The grammar shows a typical usage of VPGs to
model a \emph{hierarchically nested matching} structure of XML texts:
``OpenTag'' is matched with ``CloseTag'', and ``content'' nested in
between can be ``element'' itself (not shown in the above snippet) and forms
an inner hierarchy.}

In the rest of the paper, we use the term VPGs for well-matched VPGs and
use the term general VPGs to allow the case of \emph{pending calls and
  returns}, which means that a call/return symbol may not have its
corresponding return/call symbol in the input string. To accommodate
pending symbols, general VPGs, in addition, allow rules in the forms of
$L\ra \ac L'$ and $L\ra \bc L'$, which we call \emph{pending rules}. 
Further, the set of nonterminals $V$
is partitioned into $V^0$ and $V^1$: nonterminals in $V^0$ only
generate well-matched strings, while nonterminals in $V^1$ can
generate strings with pending symbols.

\begin{definition}[General VPGs]
\label{def:generalVPG}
    A grammar $G = (V, \Sigma, P, L_0)$ is a general
    VPG with respect to the partitioning $\Sigma=\Sigma_l\cup\Sigma_c\cup \Sigma_r$ and $V=V^0\cup V^1$, if every rule in $P$ is in one of the following forms:
    \begin{enumerate}
        \item $L\ra\epsilon$;
        \item $L\ra i L_1$, where $i \in \Sigma$, and if $L\in V^0$ then (1) $i\in\Sigma_l$ and (2) $L_1\in V^0$;
        \item $L\ra \br{L_1}L_2$, where $\ac \in\Sigma_c$, $\bc  \in \Sigma_r$, $L_1\in V^0$, and if $L\in V^0$, then $L_2\in V^0$.
    \end{enumerate}
\end{definition}
The above definition imposes constraints on how $V^0$ and $V^1$
nonterminals can be used in a rule. For example, in $L\ra
\br{L_1}L_2$, nonterminal $L_1$ must be a well-matched nonterminal; so
$P$ cannot include rules such as $L_1 \ra \ac L_3$, since $L_1$ is
supposed to generate only well-matched strings.

The notion of a derivation in VPGs is the same as the one in CFGs.
We write $w \ra w'$ to mean a single derivation step according to a grammar, where $w$
and $w'$ are strings of terminals or nonterminals. We write
$\derivStar{L}{w}$ to mean that $w$ can be derived from $L$ via a
sequence of derivation steps.

\section{Related work} \label{sec:related_work}
Most parser libraries rely on the formalism of Context-Free Grammars
(CFGs) and user-defined semantic actions for generating parse
trees. Many CFG-based parsing algorithms have been proposed in the
past, including LL(k), LR(k)~\cite{LRParsing},
Earley~\cite{EarleyParsing},
CYK~\cite{CockeCYKParsing,YoungerCYKParsing,KasamiCYKParsing}, among
many others.  
\secondRound{LL(k) and LR(k) algorithms are commonly used, but their input
grammars must be unambiguous. Users often have to change/refactor
their grammars to avoid conflicts in LL(k) and LR(K) parsing tables,
a nontrivial task. Earley, CYK, and GLR parsing can handle
any CFG, but their worst-case running time is $O(n^3)$. In contrast,
our VPG parsing accepts ambiguous grammars and is linear time.}

Our VPG parsing algorithm relies on \emph{derivatives}.  One major
benefit of working with derivatives is that it is amenable to formal
verification, as proofs related to derivatives involve algebraic
transformations on symbolic expressions; they are easier to develop in
proof assistants than it is to reason about graphs (required when
formalizing LL and LR algorithms).  \citet{Brzozowski64}
first presented the concept of derivatives and used it to build a
recognizer for regular expressions.  The idea was
revived by \citet{Owens09Derivatives} and generalized to generate
parsers for CFGs~\cite{Might11Derivatives}, with an exponential worst-case time
complexity.  More recently, a symbolic
approach~\cite{henriksen2019derivative} for parsing CFGs with
derivatives was presented, with cubic time complexity. Finally,
\citet{darragh2020parsing} presented a formally
verified, derivative-based, linear-time parsing algorithm for 
LL(1) context-free expressions.

Owl is an open-source
project\footnote{\url{https://github.com/ianh/owl}} that provides
a parser generator for VPGs. It has the same goal as our work,
but differs in the following critical aspects:
(1) Owl supports only well-matched VPGs, while our parsing library supports full VPGs;
(2) Owl adopts a different algorithm and is implemented in an imperative way, 
while  our parsing library is derivative-based and functional; 
(3) Owl does not provide formal assurance, while our parsing library is
formally verified in Coq;
(4) Owl rejects ambiguous VPGs, while our parsing library accepts ambiguous grammars 
and generates parse forests; and
(5) Owl does not support semantic actions embedded in grammars, while 
our parsing library accepts semantic actions.

Due to parsers' importance to security, many efforts have been made to
build secure and correct parsers. One obvious approach is testing,
through fuzz testing or differential testing
(e.g., \citet{Petsios17Nezha}).  However, testing cannot show the absence
of bugs. Formal verification has also been applied to the building of
high-assurance parsers. 
\citet{JourdanPL12ParserVal} applied the methodology of translation validation 
and implemented a verified parser validator for LR(1) grammars.
RockSalt~\cite{Morrisett12RockSalt} included a verified parser for
regular expression based DSL. 
\secondRound{\citet{lasser2019verified} and ~\citet{Edelmann20zippy} presented verified LL(1) parsers
but we are not aware of fully verified LL(k) parsers.
\citet{Lasser21costar} implemented a verified ALL(*) parser, which is the
algorithm behind ANTLR4.}
\citet{Koprowski10TRX}
implemented a formally verified parser generator for Parsing
Expression Grammars (PEG).
\citet{Ramananandro19EverParse} presented a verified parser generator
for tag-length-value binary message format descriptions.  We formalize
our derivative-based, VPG parsing algorithm and its correctness proofs in Coq.

\section{VPG based recognition} \label{sec:recognition}
We next present an algorithm for converting a VPG into a
deterministic PDA using a derivative-based
algorithm. The resulting PDA accepts the same set of strings as the input VPG.

Before we present the formal conversion process, we discuss informally
the intuition of the states, the stack, and the transition function of
the PDA that is converted from an input VPG.  A state in the resulting
PDA is a subset of $V \times V$, i.e., a set of nonterminal pairs.  A
nonterminal pair $(L_1, L_2)$ tells that the next part of the input
should match $L_2$ and the current context is $L_1$. The
\emph{context} is the nonterminal that is used to derive $L_2$,
without consuming an unmatched call symbol before getting to $L_2$.
Formally, it means there exists a derivation sequence
$\derivStar{{L}_1}{\omega_1 {L}_2 \omega_2}$, where $\omega_1$ is
a sequence of terminals and does not contain an unmatched call symbol,
and $\omega_2$ is a sequence of terminals or nonterminals. 

To give an example, suppose we have the following VPG rules, with
$L_0$ being the start symbol. We omit the rules for $L_2$ and
$L_3$, which are irrelevant for the discussion.
\[
L_0 \rightarrow c L_1; \ L_1 \rightarrow \ac L_2 \bc L_3.
\]
The start state of the PDA should be $\set{(L_0,L_0)}$, meaning that
the input string should match $L_0$ and the context is also $L_0$
since $L_0$ can be derived from itself (in zero steps) without
generating an unmatched call symbol. Given that start state, if the next input
symbol is $c$, then the PDA should transition to state $\set{(L_0,
  L_1)}$; that is, the rest of the input should match $L_1$ and the
context is still $L_0$, since $L_1$ is derived from $L_0$ without
generating an unmatched call symbol. Now suppose the next input char is $\ac$;
then the next state should be $\set{(L_2, L_2)}$; notice that there is
a context switch as there is an unmatched call symbol that is encountered when
going from $L_1$ to $L_2$ using the rule $L_1 \rightarrow \ac L_2 \bc
L_3$.  As we will show, for this transition, the PDA will also push
$\set{(L_0,L_1)}$ and $\ac$ to the stack, so that when the return symbol
$\bc$ is encountered, we can use that stack information to look up the old
context and transition the PDA to state $\set{(L_0,L_3)}$.

For this example, all states contain just one pair. In general,
a state may contain multiple pairs because of possible
ambiguity. For example, imagine there is an additional rule $L_0 \rightarrow
c L_4$; then from start state $\set{(L_0,L_0)}$, after encountering
$c$, the PDA transitions to state $\set{(L_0,L_3), (L_0,L_4)}$,
reflecting that the rest of the input can match either $L_3$ or $L_4$.

Given the above discussion, we have the following PDA states and stacks.
\begin{definition}[PDA states and stacks] 
    Given a VPG, we introduce a PDA
    whose states are subsets of $V \times V$ and whose stack contains
    stack symbols of the form $[S, \ac]$, where $S$ is a PDA state and
    $\ac \in\Sigma_c$ a call symbol. We write $\bot$ for the empty stack,
    and $[S,\ac ]\cdot T$ for a stack whose top is $[S,\ac ]$ and the rest
    is $T$. Intuitively, the stack remembers
    a series of past contexts, which are used for matching future
    return symbols. We call a pair $(S,T)$ a \emph{configuration},
    with $S$ being the state and $T$ being the stack.
\end{definition}

\secondRound{We next utilize the notion of
derivatives~\cite{Brzozowski64,Owens09Derivatives,Might11Derivatives}
to compute a recognizer PDA that accepts the same language (i.e., 
a set of strings) as an
input VPG. The
derivative of a language $\mathcal{L}$ with respect to an input symbol $c$ is the
residual set of strings of those in $\mathcal{L}$ that start with $c$:
\[
\delta_c(\mathcal{L}) = \predset{w}{cw \in \mathcal{L}}
\]

As we will show in Definition~\ref{def:semOfVPAStats}, a recognizer PDA
configuration $(S,T)$ stands for a language. Transferring the general
definition of derivatives to recognizer PDA configurations
produces a set of derivative functions, discussed next.}

Given a VPG $G=(V,\Sigma,P,L_0)$, we define three kinds of derivative
functions: (1) $\delta_c$ is for when the next input symbol is a plain
symbol $c$; (2) $\delta_{\ac}$ for when the next input symbol is a call
symbol $\ac$; and (3) $\delta_{\bc}$ for when the next input symbol is a return
symbol $\bc$. Each function takes the current state $S$ and the top
stack symbol, and returns a new state as well as an action on the
stack (expressed as a lambda function).  Note that $\delta_c$ and
$\delta_{\ac}$ do not need information from the stack;
therefore we omit the top stack symbol from their parameters.

\begin{figure}
    \centering
    \includegraphics[width=0.4\textwidth]{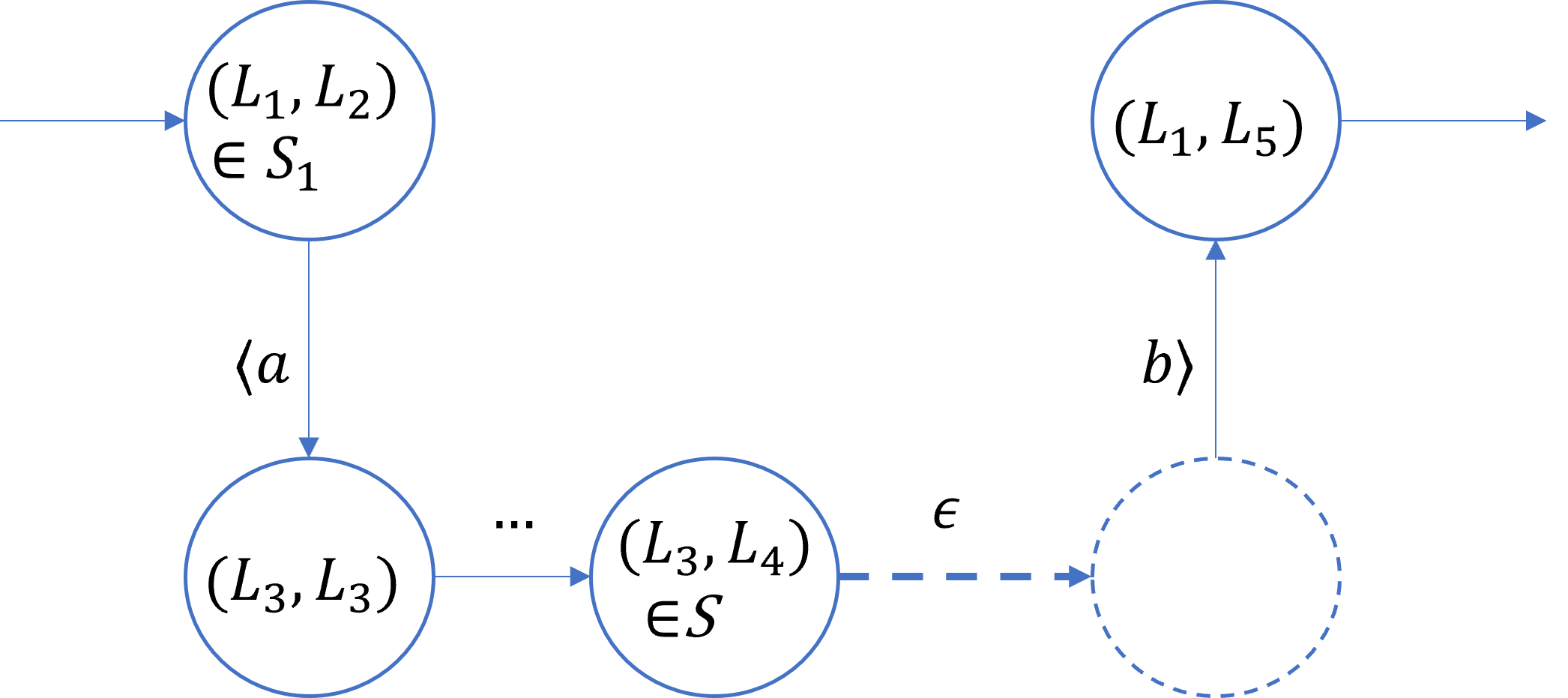}
    \caption{An example of state transitions: $(L_1,L_2)$ transfers to $(L_3,L_3)$ 
        with symbol $\ac$ and there is a rule $L_2\ra\br{L_3}L_5$, 
        and finally transfers to $(L_1,L_5)$ with symbol $\bc$.}
    \Description{An example of state transition}
    \label{fig:deltaRet}
\end{figure}

\begin{definition}[Derivative functions]
    \begin{enumerate}[(1)]
        \item[]
        \item 
            $\delta_c(S) = (S',\lambda T.T)$, where
                $$
                  S'=\{(L_1,L_3)\mid \exists L_2, (L_1,L_2)\in S \land (L_2\ra cL_3)\in P\};
                $$
         For a plain symbol $c\in\Sigma_l$, it checks each pair
         $(L_1,L_2)$ in the current state $S$, and if there is a rule
         $L_2\ra c\ L_3$, pair $(L_1,L_3)$ becomes part of the new
         state. In addition, the stack is left unchanged.
        \item
              $\delta_{\ac}(S) = (S',\lambda T.[S,\ac ]\cdot T)$, where
                $$
                  S'=\{(L_3,L_3)\mid \exists L_1\; L_2, (L_1,L_2)\in S\ \land
                     \exists L_4, (L_2\ra \brp{a}{b}{L_3}L_4)\in P \};
                $$
         For a call symbol $\ac\in\Sigma_c$, it checks each pair
         $(L_1,L_2)$ in the current state $S$; if there is a rule $L_2
         \ra \br{L_3} L_4$, pair $(L_3,L_3)$ becomes part of the new
         state; note there is a context change since a call symbol is
         encountered. In addition, the old state together with $\ac$
         is pushed to the stack.
        \item
             $\delta_{\bc}(S,[S_1,\ac]) =
                  (S',\tail)$,
              where
                $$
                S'=
                \{(L_1,L_5)\mid \exists L_2\; L_3\; L_4, (L_1,L_2)\in S_1 \land
                (L_3,L_4)\in S \land \\
                (L_4\ra\epsilon)\in P\ \land 
                (L_2\ra\brp{a}{b}{L_3}L_5)\in P \}.
                $$
              For a return symbol $\bc \in \Sigma_r$ and a stack top
              $[S_1,\ac]$, it checks each pair $(L_1,L_2)$ in the state
              $S_1$ of the stack top symbol; if there is a pair $(L_3,L_4)$ in the
              current state $S$, $L_4$ can derive the empty string, and
              there is a rule $L_2\ra \br{L_3}\ L_5$, then pair $(L_1,L_5)$
              becomes part of the new state; \secondRound{note that it checks
              $L_4\ra \epsilon$ to ensure that the current level is finished
              before returning to the upper level.} In addition, the stack top is popped from
              the stack.  Figure~\ref{fig:deltaRet} presents a drawing that
              depicts the situation when a return symbol is encountered.
    \end{enumerate}
\end{definition}

We formalize the semantics of PDA configurations as sets of accepted strings:
\begin{definition}[Semantics of PDA configurations]
\label{def:semOfVPAStats}
    We write $(S, T)\leadsto w$ to mean that a terminal string $w$ can be
    accepted by the configuration $(S, T)$. It is defined as follows.
    \begin{enumerate}
        \item $(S, \bot)\leadsto w \mbox{ if } \exists (L_1,L_2)\in S,\ \st L_2\deriv w$,
        \item $(S, [S',\ac ]\cdot T')\leadsto w_1 \bc  w_2$ if
              $\exists (L_3,L_4)\in S$ s.t.
              \begin{enumerate}
                  \item $L_4\deriv  w_1$ and
                  \item $\exists (L_1,L_2)\in S',\exists L_5, (L_2\ra \br{L_3}L_5) \in P \land (\{(L_1,L_5)\},\ T')\leadsto w_2$.
              \end{enumerate}
    \end{enumerate}
\end{definition}

The correctness of derivative functions is
stated in the following theorem, whose correctness proof is detailed in Appendix~\ref{app:correct_recognizer}.
Take the case of $\delta_c$ as an example: the theorem states that 
$(S,T)$ matches $cw$ iff
the configuration after running $\delta_c$ matches $w$ (the string after consuming $c$).
\begin{theorem}[Derivative function correctness]
\label{derivCorrect}
~~~~~~~~~
\begin{itemize}
\item    Assume $\delta_c(S)=(S',\lambda T.T)$ for a plain symbol $c$. Then
    $(S,T)\leadsto cw$ iff $(S',T) \leadsto w$.
\item   Assume $\delta_{\ac}(S)=(S',\lambda T.[S,\ac ]\cdot T)$ for a call symbol $\ac$.
    Then $(S,T)\leadsto \ac w$ iff $(S', [S,\ac ]\cdot T)\leadsto w$.
\item     Assume $\delta_{\bc}(S, [S_1,\ac ])=(S',\tail)$ for a return symbol $\bc$.
    Then $(S, [S_1,\ac ]\cdot T)\leadsto \bc  w$ iff
    $(S',T)\leadsto w$.
\end{itemize}
\end{theorem}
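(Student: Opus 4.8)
The plan is to prove each of the three biconditionals separately, in each case unfolding the semantic relation $\leadsto$ from Definition~\ref{def:semOfVPAStats} and matching it against the definition of the corresponding derivative function. Since $\leadsto$ is defined by cases on whether the stack is empty or not, and the three derivative functions manipulate the stack differently, I would first note which branch of the $\leadsto$ definition is in play. For the plain case $\delta_c$, the stack $T$ is unchanged, so both $(S,T)\leadsto cw$ and $(S',T)\leadsto w$ are handled by whichever branch matches the shape of $T$; the argument is essentially uniform in $T$. For the call case $\delta_{\ac}$, the right-hand side has a nonempty stack $[S,\ac]\cdot T$, so I must use branch~(2) of the definition on the right. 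For the return case $\delta_{\bc}$, the left-hand side has the nonempty stack $[S_1,\ac]\cdot T$, so branch~(2) applies on the left.

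For the plain case, I would proceed by induction on the derivation $L_2\deriv cw$ (or, more precisely, reason that a leftmost derivation from $L_2$ that produces $cw$ must begin by applying a rule $L_2\ra cL_3$, after which $L_3\deriv w$). The forward direction extracts such a rule and the witnessing pair $(L_1,L_2)\in S$, producing $(L_1,L_3)\in S'$ by the definition of $\delta_c$ and hence $(S',T)\leadsto w$; the reverse direction reassembles the derivation $L_2\ra cL_3\deriv cw$ from a pair $(L_1,L_3)\in S'$, which by definition of $S'$ must arise from some $(L_1,L_2)\in S$ and rule $L_2\ra cL_3$. The call case is analogous but uses rules of the form $L_2\ra\br{L_3}L_4$: a derivation from $L_2$ that consumes a leading unmatched $\ac$ must apply such a rule, and the new context becomes $(L_3,L_3)$, matching the definition of $\delta_{\ac}$; here I must also track how the pushed stack symbol $[S,\ac]$ feeds into branch~(2) of the semantics so that the eventual matching $\bc$ can recover the old context. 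The key bookkeeping is that $L_4\deriv w_1$ and the continuation $(\{(L_1,L_5)\},T')\leadsto w_2$ in branch~(2) line up exactly with the freshly pushed $S$ and the structure of the rule.

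The main obstacle is the return case $\delta_{\bc}$, because it is the only one where both the state and the stack top jointly determine the result, and where the semantic branch~(2) on the left must be decomposed into three independently derived pieces: $L_4\deriv\epsilon$ (the current level is finished), the existence of a pair $(L_3,L_4)$ in the current state $S$, and the existence of a matching rule $L_2\ra\br{L_3}L_5$ for some $(L_1,L_2)$ in the stored state $S_1$. I would prove the forward direction by taking the witnesses guaranteed by branch~(2) applied to $(S,[S_1,\ac]\cdot T)\leadsto\bc w$ — here $w_1=\epsilon$ so that the leading symbol of $\bc w$ is indeed $\bc$ — extracting the finished pair from $S$ and the matching rule and stored pair from $S_1$, and assembling $(L_1,L_5)\in S'$ together with the residual acceptance $(\{(L_1,L_5)\},T)\leadsto w$. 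The reverse direction runs this construction backwards, and the delicate point is to argue that membership of $(L_1,L_5)$ in $S'$ is equivalent to the simultaneous existence of all the witnesses demanded by the semantics; this is exactly the conjunction built into the definition of $\delta_{\bc}$. I expect the formal Coq development to carry most of the weight here, since the pairing invariant relating $S$, $S_1$, the stack, and the rule set is precisely what the contexts-and-configurations machinery of Definition~\ref{def:semOfVPAStats} was designed to capture, and the derivative definitions were reverse-engineered from these semantic clauses so that each biconditional reduces to matching quantifiers and rule shapes on the two sides.
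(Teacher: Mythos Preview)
Your proposal is correct and follows essentially the same route as the paper: case analysis on the shape of $T$ in the semantics of $\leadsto$, inversion on the first VPG rule used in the relevant derivation, and then matching the extracted witnesses against the clauses defining $\delta_c$, $\delta_{\ac}$, $\delta_{\bc}$. The only organizational difference is that the paper factors the backward directions through three small helper lemmas---monotonicity of $\leadsto$ in the state component, extraction of a single witnessing pair from $S$, and a ``prepending'' lemma $L_2\deriv w_1 L_3 \land (\{(L_1,L_3)\},T)\leadsto w \Rightarrow (\{(L_1,L_2)\},T)\leadsto w_1 w$---which you would inevitably isolate once you wrote out the details.
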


With those derivative functions, we can convert a VPA to a PDA, whose
set of states is the least solution of the following
equation; it makes sure that states are closed under derivatives.
\[
\begin{array}{lll}
A & = &
     A \cup~  \predset{S'}{c\in\Sigma_l,\ S\in A, \ \delta_c(S)=(S',f)}  \\
  &  & \cup~ \predset{S'}{\ac\in\Sigma_c,\ S\in A,\ \delta_{\ac}(S)=(S',f)} \\
  &  & \cup~ \predset{S'}{\bc\in\Sigma_r,\ \ac\in\Sigma_c,\ S \in A,\ S'\in A,\ \delta_{\bc}(S,[S',\ac])=(S',f)} \\
\end{array}
\]

\secondRound{We note that the least solution to the previous equation may
  include unreachable states, since the last line of the equation
  considers all $(S, [S', \ac])$ without regard for whether
  such a configuration is possible. This may make the resulting PDA
  contain more states and occupy more space for the PDA representation
  than necessary. However, unreachable states do not affect the
  linear-time parsing guarantee of VPG parsing, as during parsing
  those unreachable states are not traversed; further, during
  experiments we did not experience space issues when representing PDA
  states and transitions.}

Algorithm~\ref{alg:constr_recog} is an iteration-based method to solve
the equation for the least solution, where the returned $S_0$ is the initial state, $A$ is
the set of all states, and $\RTrans$ is the set of edges between
states. For an iteration, $N$ is the set of states that the
algorithm should perform derivatives on. 
\secondRound{Line 7 then performs derivatives using call and plain symbols and 
line 10 performs derivatives using return symbols.}

At the end of each iteration,
the following invariants are maintained: (1) $N \subseteq A$; (2) for
state $S \in A-N$ and $i \in \Sigma_c \cup \Sigma_l$, if
$\delta_i(S)=(S',f)$, then $S' \in A$; (3) for states $S, S' \in A-N$,
$\ac \in \Sigma_c$, and $ \bc \in \Sigma_r$, if $\delta_{\bc}(S,
[S',\ac])=(S'',f)$, then $S'' \in A$.  With these invariants, when $N$
becomes empty, $A$ is closed under derivatives.


\begin{algorithm}[t]
    \caption{Constructing the recognizer PDA}
    \label{alg:constr_recog}
    \begin{algorithmic}[1]
        \STATE Input: a VPG $G=(V,\Sigma,P,L_0)$, $\trans$.
        \STATE $S_0\la \{(L_0,L_0)\}$.
        \STATE Initialize the set for new states $N=\{S_0\}$.
        \STATE Initialize the set for all produced states $A=N$.
        \STATE Initialize the set for transitions $\RTrans=\{\}$.
        \REPEAT
        \STATE $N'\la \{ (i,f,S,S')\mid (S',f)=\delta_{i}S,\ S\in N, \mbox{ and } i\in\Sigma_c\cup\Sigma_l \}$
        \STATE {Add edge $(S,S')$ marked with $(i,f)$ to $\RTrans$, where $(i,f,S,S')\in N'$}.
        \STATE $R \la \{ [S,\ac]\mid  S\in A \mbox{ and } \ac\in\Sigma_c  \}$
        \STATE {$N_R\la\{ (\bc,r,f,S,S')\mid (S',f)=\delta_{\bc}(S,r), S\in A, \bc\in\Sigma_r, r\in R \}$}
        \STATE {Add edge $(S,S')$ marked with $(\bc,r,f)$ to $\RTrans$, where $(\bc,r,f,S,S')\in N_R$}.
        \STATE {$N\la\{S'\mid (\_,\_,\_,S')\in N' \lor  (\_,\_,\_,\_,S')\in N_R\} -  A$}
        \STATE $A\la A\cup N$
        \UNTIL{$N=\emptyset$}
        \STATE Return $(S_0,A,\RTrans)$.
    \end{algorithmic}
\end{algorithm}

As an example, consider the VPG in Figure~\ref{fig:runningExample}.
The PDA generated by Algorithm~\ref{alg:constr_recog} for this VPG is shown in
Figure~\ref{fig:recog_PDA}.
The input symbols and stack actions are marked on the edges.

\begin{figure}[t]
\[
L\ra\br{A}L \,|\, \epsilon;\ A\ra cC \,|\, cD;\ C\ra cE;\ D\ra dE;\ E \ra \epsilon
\]
\caption{An example VPG.}
\Description{An example VPG.}
\label{fig:runningExample}
\end{figure}
  
\begin{figure}[t]
    \centering
    \includegraphics[width=0.7\textwidth]{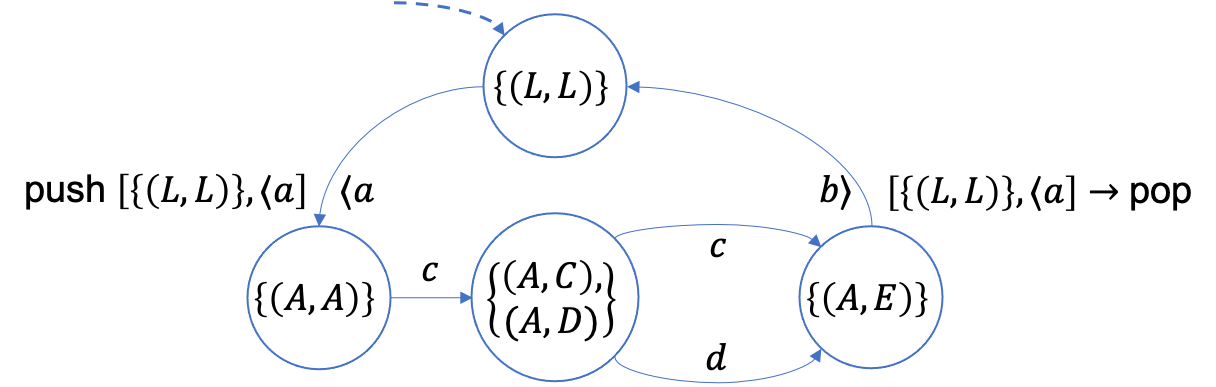}
    \caption{The PDA generated by Algorithm~\ref{alg:constr_recog} for grammar in Figure~\ref{fig:runningExample}. ``push $[\{(L,L),\ac\}]$'' means pushing
        $[\{(L,L),\ac\}]$ to the stack, and ``$[\{(L,L),\ac\}]\ra$ pop'' means removing the top of the stack, 
        when it is $[\{(L,L),\ac\}]$. 
     }
    \Description{An example of the recognizer PDA}
    \label{fig:recog_PDA}
\end{figure}

Once the PDA is constructed from a VPG, it can be run on an
input string in a standard way. Then PDA correctness 
can be stated as follows. We detail PDA execution and the
correctness proof in Appendix~\ref{app:correct_recognizer}.
\begin{theorem}[PDA correctness]
    \label{corecog}
    For VPG $G$ and its start symbol $L_0$, a string $w\in\Sigma^*$ can be derived from $L_0$ (i.e. $L_0\deriv w$) iff $w$ is accepted by the corresponding PDA.
\end{theorem}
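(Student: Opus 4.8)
The plan is to reduce Theorem~\ref{corecog} to the already-established Theorem~\ref{derivCorrect} (Derivative function correctness), since the PDA's transition function is built directly from the three derivative functions. The key observation is that the semantics $(S,T)\leadsto w$ from Definition~\ref{def:semOfVPAStats} is exactly the bridge between the grammar-level derivation relation $L_0\deriv w$ and the operational notion of PDA acceptance. So I would structure the proof in two layers: first relate $L_0\deriv w$ to acceptance by the initial configuration $(\{(L_0,L_0)\},\bot)$ under the $\leadsto$ semantics, and then relate $(\{(L_0,L_0)\},\bot)\leadsto w$ to the operational run of the constructed PDA on $w$.

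For the first layer, I would prove that $(\{(L_0,L_0)\},\bot)\leadsto w$ holds if and only if $L_0\deriv w$. The interesting direction follows almost immediately from clause~(1) of Definition~\ref{def:semOfVPAStats}: $(S,\bot)\leadsto w$ means there is a pair $(L_1,L_2)\in S$ with $L_2\deriv w$, and for the singleton start state $S=\{(L_0,L_0)\}$ this is precisely $L_0\deriv w$. Thus this layer is essentially definitional and requires no induction.

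For the second layer, I would proceed by induction on the length of $w$ (equivalently, on the number of PDA steps), using Theorem~\ref{derivCorrect} as the inductive workhorse. Concretely, I would establish the invariant that for any configuration $(S,T)$ reachable during execution, the operational run on a suffix $v$ reaches an accepting configuration if and only if $(S,T)\leadsto v$. The base case is the empty suffix, where I must check that the accepting condition of the operational PDA (some pair whose second component derives $\epsilon$, with an empty stack) coincides with $(S,\bot)\leadsto\epsilon$. For the inductive step, I would split on whether the first symbol of the suffix is plain, call, or return, and in each case invoke the corresponding bullet of Theorem~\ref{derivCorrect} to rewrite $(S,T)\leadsto i v'$ as $(S',T')\leadsto v'$, where $(S',T')$ is exactly the configuration the PDA transitions to; the induction hypothesis then closes the case. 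Care is needed in the return case, because the stack-top symbol $[S_1,\ac]$ participates in the derivative and Theorem~\ref{derivCorrect} requires the configuration to actually have that symbol on top, so I must track the stack shape as an additional part of the invariant.

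I expect the main obstacle to be the bookkeeping around the stack in the call/return cases, specifically ensuring that the stack pushed by $\delta_{\ac}$ and popped by $\delta_{\bc}$ stays perfectly synchronized with the nesting of matched call/return symbols in $w$. Theorem~\ref{derivCorrect} handles a single step cleanly, but stitching the steps together requires a careful induction that simultaneously tracks the state, the stack, and the remaining input, and confirms that at each return the stack top is indeed the frame pushed at the matching call. Since the PDA may contain unreachable states, I would also note that these are harmless here: the correctness statement concerns only configurations actually reached from the start, so the invariant is only asserted along genuine runs. The rest of the argument is a routine case analysis that leans entirely on the per-symbol equivalences already proved.
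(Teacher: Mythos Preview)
Your proposal is correct and matches the paper's approach almost exactly: the paper factors through the same intermediate statement $(\{(L_0,L_0)\},\bot)\leadsto w \iff w$ is accepted by the PDA (its Lemma~\ref{tfacc}), proves that lemma by induction on the length of $w$ with a case split on the symbol kind invoking the three parts of Theorem~\ref{derivCorrect}, and then observes that $(\{(L_0,L_0)\},\bot)\leadsto w$ is definitionally $L_0\deriv w$. Your worry about explicitly tracking call/return nesting is slightly overcautious---the per-step derivative correctness already handles the stack discipline, and the only extra observation needed is that $(S,T)\leadsto \bc w'$ forces $T\neq\bot$ (so a stack top exists to pop), which follows directly from Definition~\ref{def:semOfVPAStats} in the well-matched setting.
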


For converting general VPGs (i.e., with pending rules) to PDA,
a couple of changes need to be made on the derivative-based approach:
(1) the derivative functions need to consider also pending rules;
(2) the acceptance stack may be nonempty because of pending call symbols.
The construction is discussed in Appendix~\ref{app:recog_general_VPG}.

\section{VPG based parsing} \label{sec:parsing}
Parsing is a process to build the parse trees of a given string based
on an input grammar.  It is equivalent to finding the sequences of
rules that can generate the input string.  
Our VPG-based parsing framework is largely enlightened by the recognizer
construction in Section~\ref{sec:recognition}.
Observe that the execution of the recognizer PDA on an input string $w$ can be 
represented as a trace of runtime configurations:
$(S_0,T_0)\xrightarrow{\textit{$w_1$}}(S_1,T_1)\xrightarrow{\textit{$w_2$}}(S_2,T_2)
\xrightarrow{\textit{$w_3$}}\cdots$,
where $(S_i,T_i)$ is the configuration after consuming $w_i$, the $i^{th}$ symbol of $w$. 
Each transition in the trace is because of a set of possible rules in the input VPG; therefore, we can augment the configuration trace with information about what rules
can be applied during each transition, which gives
$(S_0,T_0)\xrightarrow{\textit{$w_1$}}((S_1,T_1),m_1)\xrightarrow{\textit{$w_2$}}((S_2,T_2),m_2)
\xrightarrow{\textit{$w_3$}}\cdots$,
where $m_i$ is the set of possible rules at step $i$. With the augmented configuration
trace, we can construct parse trees for the input string.

Given this intuition, we could build a parser directly based on the
recognizer.  However, to reduce the burden of formally verifying the
parser, we make several trade-offs in designing the parsing algorithm:
(1) instead of extending the recognizer, we present a way to construct
the parser PDA independently; as will be shown in
Section~\ref{sec:correct_parser}, this allows us to formalize
correctness in a natural way; (2) we replace the context nonterminal
in the recognizer state with a boolean value when constructing the
parser; \secondRound{this simplifies formal verification, but may
  introduce invalid edges in the parse forest; as a result, our design
  adds a pruning step to prune invalid edges.}

\subsection{Overview of VPG-based parsing}\label{subsec:pgPDA}
At a high level, our
VPG-based parsing framework takes an input VPG and generates three
components: (1) a \emph{parser PDA}, which takes an input string and
constructs a parse forest representing possibles parses of the string
according to the VPG; (2) a \emph{pruner PDA}, which takes the parse forest
and removes invalid edges; (3) an \emph{extractor}, which takes the
pruned parse forest and extracts parse trees. 

Before discussing these steps in detail, we use the well-matched
VPG in Figure~\ref{fig:runningExample} to illustrate the steps of
the parser and the pruner PDAs.
Let $w=\br{cd}$ be the string to parse. Figure~\ref{fig:parse_vpg}
visualizes the high-level steps of how our VPG parser parses $w$.
Nodes in the figure are sentential forms with a dot indicating the
parsing position. The prefix before the dot in a sentential form is
the input seen so far; the nonterminal immediately after the dot is
the one to parse next; the remainder of the sentential form is
actually represented by the stack in our parser, but for ease of
understanding we also add it to the nodes in
Figure~\ref{fig:parse_vpg}. The figure also shows for each step the
set of possible rules, those rules represented as parse-tree
edges, and the stack after the step.

\begin{figure}
    \centering
    \includegraphics[width=0.9\textwidth]{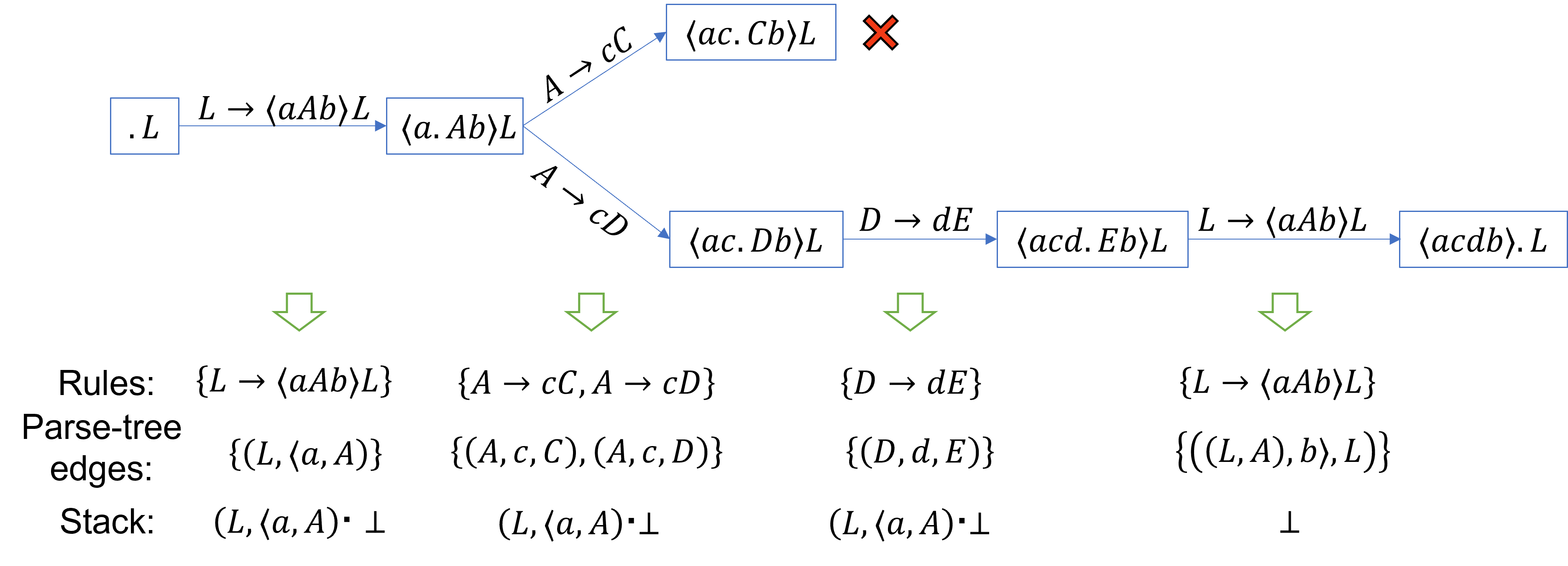}
    \caption{The parsing process for string $\br{cd}$ for the grammar
      in Figure~\ref{fig:runningExample}. There are two traces. The
      first one, followed by a red cross mark, becomes invalid after
      step 2. Possible rules applied in each step are collected in a
      set, which is converted to the set of parse-tree edges at that
      step. The stack keeps track of previous rules that generated
      call symbols, so that later they can be applied to generate the
      corresponding return symbols.}
    \Description{An example of parsing.}
    \label{fig:parse_vpg}
\end{figure}

As an example, after symbol $\ac$, we have the sentential form $\ac
. A \bc L$, where $\ac$ is the already parsed input, $A$ is the
nonterminal to parse, and $\bc L$ is the remainder. For this step,
only one rule is possible: $L\ra \br{A}L$. To represent such a rule
used in building a parse tree, we use a triple instead of the rule
directly; the reason is
to differentiate when $\ac$ is generated by a matching rule
from when $\bc$ is generate by the same
rule. For example, the parsing of $\ac$ using $L\ra \br{A}L$ is
converted to triple $(L, \ac, A)$.  And the remainder is represented by a
stack element $(L,\ac,A)$, which tells us that a rule such as $L\ra
\br{A}L$ was used to create this stack element; with this information,
at step 4 of Figure~\ref{fig:parse_vpg}, 
the parser knows that the next symbol should match $\bc$ and, after that, 
$L$. So the rule $L\ra \br{A}L$ is used twice: at step 1
for $\ac$ and step 4 for $\bc$.

As another example, for the second symbol ``c'' in the input
string, there are two possible rules. One with $C$ as the next
nonterminal to parse, and the other with $D$. However, notice here in
both cases the remainders are $\bc L$ and represented by the same
stack. This example shows the crucial difference between VPG parsing
and CFG parsing. In a general CFG parsing algorithm such as GLR, each
possibility has its own stack, reflecting the nondeterministic nature
of when stacks are changed in CFGs.  In contrast, in VPG parsing all
possibilities in one step share the same stack, enabled by the VPG
property that the stack is changed when consuming only call/return symbols; as a
result, the stack can be shared and factored out.

From this example, we can see that, given an input string $w$ of
length $n$, the parser PDA generates a set of possible rules for
the $i$-th input symbol; since each rule is represented by a triple
in our parser, each step generates a set of triples $m_i$.
We call the sequence of $[m_1, \ldots, m_n]$ the
\emph{parse forest} for $w$; from the sequence, we can recover all
parse trees for $w$. We call $[e_1, \ldots, e_k]$, where $e_i
\in m_i\ \mbox{for } i \in [1..k]$, a \emph{trace} of parse-tree edges;
when $k=n$, it is a \emph{complete trace}. 
A complete trace is a linear representation of a
candidate parse tree. For example, a complete trace in
Figure~\ref{fig:parse_vpg} is $[(L,\ac,A), (A,c,D), (D,d,E), ((L,A),
  \bc, L)]$, representing a candidate parse tree.
Although the number of traces may grow exponentially with the length
of the input string\footnote{For example, consider the grammar
  ``$L\ra \epsilon|cA|cB;\ A\ra dL;\ B\ra dL$'' and the string
  $w=(cd)^{n}$. The number of traces is $O(2^n)$.}, the set of
{distinct} rules possible at each step is always finite and bounded by
the size of the input grammar. As a result, the parse forest
representation is linear to the size of the input string.

Notice that the first trace in Figure~\ref{fig:parse_vpg} is followed
by a red cross mark, because the sentential form $\ac c. C \bc L$
cannot be followed by any rule to generate $d$. However, the parse
forest still keeps the invalid edge $(A,c,C)$. In our approach, we use
a pruner PDA to prune invalid parse-tree edges in the parse forest.  For
example, the pruned parse forest for the above example is
$[\{(L,\ac,A)\},\{(A,c,D)\},\{(D,d,E)\},\{((L,A),\bc,E)\}].$ 

Among the three components of the VPG parsing framework, the first two
can fail: the parser PDA may fail because it cannot
make a transition with the next input symbol; the pruner PDA may
prune the parse forest to an empty parse forest, meaning
that the input string cannot be parsed.

\subsection{The parser PDA} \label{sec_formal_parsing}

As discussed earlier, a parse tree for a VPG can be represented by a
linear sequence of triples, each representing an edge in the parse
tree.  E.g., when rule $L\ra\ac L_1$ is used, it is represented as
$(L,\ac,L_1)$. However, for general VPGs with pending rules, such
triples are insufficient. For example, $(L,\ac,L_1)$ can be the result
from the rule $L\ra\ac L_1$, or the rule $L\ra\br{L_1}L_2$. We need to
further differentiate pending rule edges and matching rule edges,
since pending rules cannot be used within matching rules, required by
general VPGs (Definition~\ref{def:generalVPG}).

Our solution is to tag every nonterminal in a parse-tree edge a
boolean $u$; a similar notion called \emph{linear acceptance} is
discussed by \citet{AlurM09NWA}.  Let $L$ be a nonterminal; a tagged
$L$ is written as $L^u$.  Intuitively, $L^\true$ generates only
well-matched strings and can use only well-matched rules when
generating call/return symbols; in contrast, $L^\false$ can also use
pending rules when generating call/return symbols.  For a general VPG,
if the parser uses a well-matched rule $L\ra\br{L_1}L_2$ to match
$\ac$, then it has to use $L_1^\true$ to perform parsing next, since
general VPGs require that $L_1$ must generate well-matched strings.

With the above discussion, parse-tree edges for general VPGs can be defined as follows:

\begin{definition}[The edges $\mml,\mmc,\mmr$]
    Given a VPG $G=(\Sigma,V,P,L_0)$, 
    \begin{enumerate}
        \item the set of plain edges, denoted as $\mml$, is defined as $\{ (L^u,c,L_1^u) \mid (L\ra c L_1)\in P \}$;
        \item the set of call edges, denoted as $\mmc$, is defined as $\predset{(L^u,\ac,L_1^\true)}{\exists \bc\ L_2,\ (L\ra \br{L_1}L_2)\in P} \cup \predset{(L^\false,\ac,L_1^\false)}{(L\ra \ac  L_1)\in P}$;
        \item the set of return edges, denoted as $\mmr$, is defined as
        $\predset{((L^u,L_1^\true),\bc, L_2^u)}{\exists \ac, (L\ra \br{L_1}L_2)\in P} \cup \predset{(L^\false, \bc, L_1^\false)}{(L\ra \bc  L_1)\in P}$.
    \end{enumerate}
\end{definition}
Note that pending rules $L\ra \ac L_1$ and $L\ra \bc L_1$ can be used
only in edges starting with $L^\false$; further, when using a matching
rule $L\ra \br{L_1}L_2$ to generate edges starting with $L_1$, its tag
must be $\true$ as $L_1$ should match only well-matched strings.  As
we will formalize later, a complete trace that constitutes 
a parse tree must satisfy the following constraints: (1) it must start
with $L_0^\false$, where $L_0$ is the start nonterminal, (2) it must end with some $L_1^\false$ for some $L_1$ such that $(L_1\ra\epsilon) \in P$; the tag
must be false so that no matching rule is waiting to be finished, and 
$L_1\ra\epsilon$ makes sure that no more inputs are
expected to match $L_1$.

With the above definition of parse-tree edges, a parse tree is then a
sequence of plain, call, or return edges. A parse forest is a sequence
$[m_1, \ldots, m_n]$, where each $m_i$ is a subset of $\mml$, $\mmc$, or
$\mmr$.  In the following discussion, we use $\ml$, $\mc$, and $\mr$
for an arbitrary subset of $\mml$, $\mmc$, and $\mmr$, respectively.

\begin{definition}[Parser PDA states and stack]
    Given a VPG, we introduce a parser PDA, where a state, denoted as $m$, is a subset of $\mml$, $\mmc$, or $\mmr$, and each element in the stack $T$ is a subset of $\mmc$.
\end{definition}
It is easy to see that a sequence of parser PDA states constitutes a parse forest.

Similar to the development of the VPG recognizer, we next define three
derivative functions, denoted as $p_c$, $p_{\ac}$, and $p_{\bc}$, to
formalize how the parser PDA makes transitions. From the perspective
of parse trees, each transition extends existing traces for parsing
string $w$ to new traces for parsing string $wi$, assuming $i$ is the
next input symbol.  Notation-wise, we use the placeholder ``$\_$''
to represent an entity whose value does not
matter. For example, edge $(\_,\_,L)\in m$ is defined as $\exists
L_1\ i,\ \st\ (L_1,i,L)\in m$ or $\exists
L_1,L_2,i,L\ \st\ (((L_1,L_2),i,L)\in m$, where only $L$ is of
interest.

\begin{definition}[Derivative functions] 
\label{def:trans_parser}
    Given a VPG $G=(V,\Sigma,P,L_0)$, suppose the current state of the parser PDA is $m$ and the current stack is $T$.
    
    \begin{enumerate}
        \item $p_c(m)=(m',\lambda T.T)$, where
            $m'=\{(L^u,c,L_1^u)\mid (\_,\_,L^u)\in m \land (L\ra c L_1)\in P\}.$

            To generate $c$, we consider each parse-tree edge in $m$. 
            If it is of the form $(\_,\_,L^u)$, find
            every possible rule $L\ra c L_1$ for some $L_1$.  We then
            add edge $(L^u,c,L_1^u)$ to the next state $m'$.
            Intuitively, if the current trace matches $w$ and the
            nonterminal to parse is $L$, then with the extra $c$, the
            new trace matches $wc$ and the new nonterminal to parse is
            $L_1$.  Further, the boolean tag $u$ is passed from $L$ to
            $L_1$, since no matching rule is used at this step.
        
        \item $p_{\ac}(m)=(m',\lambda T.\;m'\cdot T)$, where
          \[
            \begin{array}{l}
                m'= \{(L^u, \ac,L_1^\true) \mid (\_,\_,L^u)\in m\land\exists \bc\ L_2,\ (L\ra \br {L_1}L_2)\in P \}\ \cup \\
                \hspace{5ex} \{(L^\false,\ac,L_1^\false)\mid (\_,\_,L^\false)\in m \land (L\ra \ac  L_1)\in P \}.
            \end{array}
          \]
            If $(\_,\_,L^\true)\in m$, a matching rule
            is waiting to be finished and we cannot use a pending rule
            such as $L\ra \ac L_1$. Thus, only a new matching rule can
            be used to generate $\ac$. So it finds every possible rule
            $L\ra \br{L_1}L_2$ and adds the parse-tree edge
            $(L^\true,\ac,L_1^\true)$ to $m'$. Notice that $L_1$ must
            be tagged with $\true$.  If $(\_,\_,L^\false)\in
            m$, either a matching or a pending rule can be used. The
            matching-rule case is similar to the case when
            $u=\true$. Further, it finds a rule like $L\ra\ac L_1$ for
            some $L_1$ and adds $(L^\false,\ac,L_1^\false)$ to
            $m'$. In addition, the new state $m'$ is pushed to the
            stack to match a possible return symbol at a later point.

        \item $p_{\bc}(m,{\mc})=(m',\tail)$, where ${\mc}=\hd T$ if $T\neq \bot$, and ${\mc}=\emptyset$ if $T=\bot$, and
          \[
            \begin{array}{l}
                m'=
                \{((L^u,L_1^{\true}) ,\bc,  L_2^u)\mid 
                    (L^u, \ac, L_1^{\true} )\in{\mc} \land \exists L_2,(L\ra \br { L_1 } L_2)\in P\}\ \cup \\
                \hspace{5ex} \{(L^\false, \bc,  L_1^\false)\mid 
                    (\_,\_, L^\false)\in m \land (L\ra \bc  L_1)\in P\}.
            \end{array}
          \]

            Consider $(\_,\_,L^u)\in m$. \secondRound{If $u=\false$,
              we must use a pending rule to generate $\bc$. Every rule
              $L\ra \bc L_1$ is converted to edge
              $(L^\false,\bc,L_1^\false)$ and added to $m'$}.
            If $u=\true$, intuitively we can
            only use a matching rule to generate $\bc$. 
            The information of the last
            unfinished matching rule is stored in ${\mc}$, the top of
            the stack. For any $(L^u,\ac,L_1^{\true})\in\mc$, it finds
            a rule $L\ra \br{L_1}L_2$ for some $L_2$, and adds edge
            $((L^u,L_1^{\true}),\bc,L_2^u)$ to the new state $m'$. The
            nonterminal $L_2$ inherits its tag from the tag of $L$.
            Note that in the above formulation the new state is generated based on
            only $\mc$, not the current state $m$; this design can
            generate invalid edges; a later pruner step will remove
            those invalid edges.

        \end{enumerate}
\end{definition}

\paragraph{Constructing the parser PDA}
Similar to constructing the recognizer PDA, the construction of the parser PDA is
the least solution of the following equation. 
\[
\begin{array}{lll}
A & = &
     A \cup~  \predset{m'}{c\in\Sigma_l \land m\in A \land p_c(m)=(m',f)}  \\
  &  & \cup~ \predset{m'}{\ac\in\Sigma_c \land m\in A \land p_{\ac}(m)=(m',f)} \\
  &  & \cup~ \predset{m'}{\bc\in\Sigma_r \land m \in A,\ m''\in (A\cap\Pow(\mmc))\cup\{\emptyset\} \land \ p_{\bc}(m,m'')=(m',f)} \\
\end{array}
\]
Different from the case for VPG recognizers,
$(A\cap\Pow(\mmc))\cup\{\emptyset\}$ is used for $p_{\bc}$, since a
stack element for the parser PDA has to be a state generated for a call symbol and the stack can also be empty.  

With the above equation, we can construct an algorithm for computing
all parser PDA states and state transitions, similar to
Algorithm~\ref{alg:constr_recog}. The differences are that (1) it
starts with a helper state $m_0=\{(L_0^\false,\_,L_0^\false)\}$, where $L_0$ is the
start nonterminal of the input VPG and $\_$ stands for a dummy
nonterminal, and (2) it uses parser derivative functions for deriving
new states and transitions. The result is a parser PDA whose states
are subsets of $\mml$, $\mmc$, or $\mmr$, and transitions between
states are labeled with $(c,f)$, $(\ac,f)$, or $(\bc, m_c, f)$, where
$c$, $\ac$, or $\bc$ is the next input symbol, $m_c$ is the top
stack element, and $f$ is the stack action.
Figure~\ref{fig:parserPDA} shows the parser PDA generated by the algorithm
for the grammar in Figure~\ref{fig:runningExample}.

\begin{figure}
    \centering
    \includegraphics[width=0.6\textwidth]{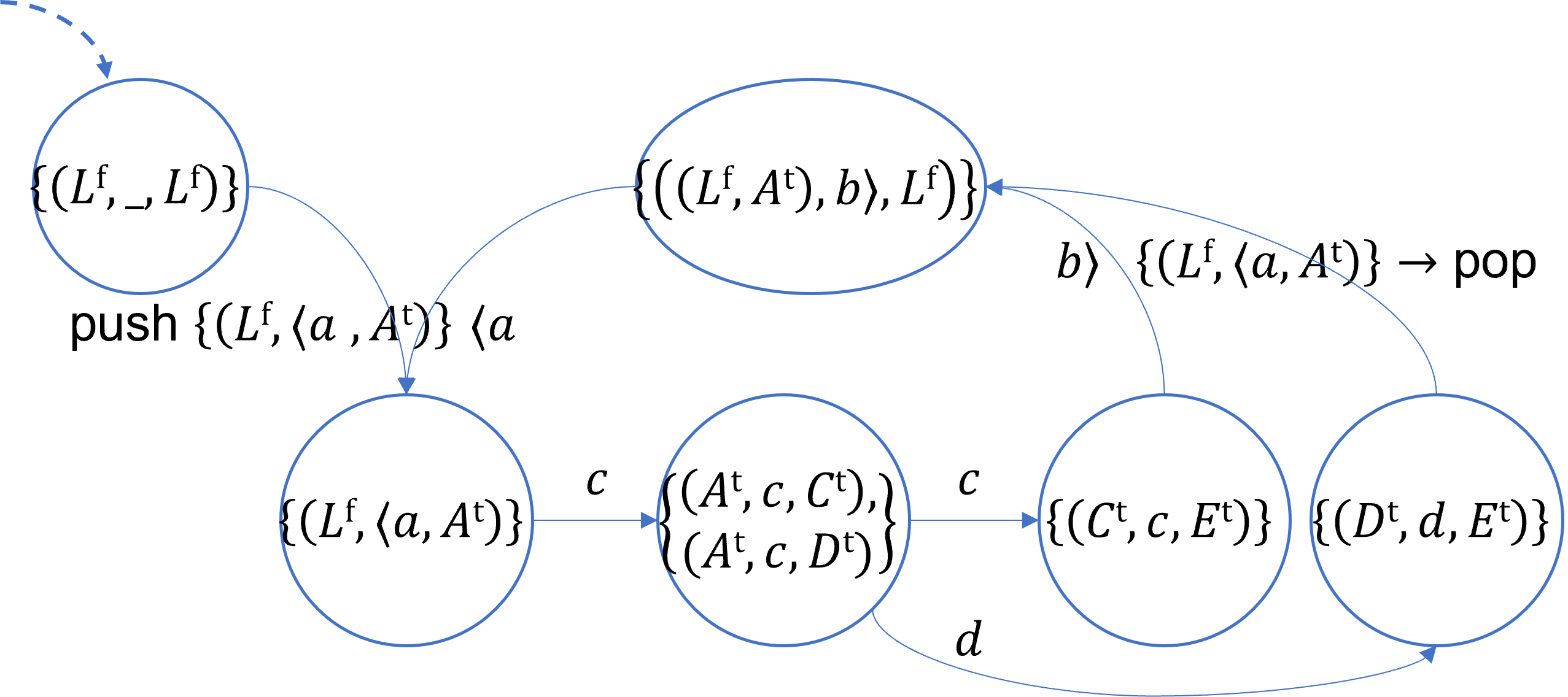}
    \caption{The parser PDA for the grammar in Figure~\ref{fig:runningExample}. We use f for $\false$ and t for $\true$. Finally, the symbols and stack actions of $\ac$ and $\bc$ are each shared by two transitions.}

    \Description{The parser PDA for the grammar in Figure~\ref{fig:runningExample}.}
    \label{fig:parserPDA}
\end{figure}

Given an input string, the parser PDA starts from $(m_0,\bot)$ and
transitions to the next runtime configuration based on the following definition.
\begin{definition}[Runtime transition for parser PDA]
    \label{def:run_pPDA}
    Suppose the current configuration is $(m,T)$ and the next input
    symbol is $i$. The \emph{runtime transition function} $\pPDAtrans(i,m,T)=(m',T')$ of the parser PDA is
    defined as follows.
    \begin{enumerate}
        \item
              if $i\in\Sigma_c\cup\Sigma_l$ and PDA edge $(m,m')$ is
              marked with $(i,f)$, then $\pPDAtrans(i,m,T)= (m',f(T))$;
        \item
              if $i\in\Sigma_r$, $T=m''\cdot T'$, and PDA edge $(m,m')$ is marked with $(i,m'',f)$, then $\pPDAtrans(i,m,T)=(m',f(T))$.
        \item
              if $i\in\Sigma_r$, $T=\bot$, and PDA edge $(m,m')$ is marked with $(i,\emptyset,f)$, then $\pPDAtrans(i,m,T)=(m',\bot)$.
    \end{enumerate}
\end{definition}

\subsection{The pruner PDA and the extractor} \label{sec:trans_g}
\secondRound{The parser PDA parses an input string $w$ of length $n$ and produces a
state trace: $[m_1, \ldots, m_n]$, which can be viewed as a parse
forest. As discussed earlier, it can contain invalid edges. For
example, inside the last state $m_n$, a valid edge must be of the form
$(\_,\_,L_1^\false)$ for some $L_1$ so that $(L_1 \ra \epsilon) \in
P$, signaling the end of parsing; other edges are invalid. Similarly,
if the next symbol to match is a return symbol produced in a
matching rule $L\ra \br{L_1}L_2$, the parse-tree edge immediately before
the one that produces the return symbol must
end with $L_1^\false$ so that $(L_1 \ra \epsilon) \in P$.

After removing some invalid edges, earlier edges in the parse forest
may become invalid. For instance, if pruning removes from $m_n$ an
edge $(L_2^\false,\_,\_)$ and there are no other edges that start with
$L_2^\false$ in the rest of $m_n$, then any $m_{n-1}$ edge that ends
with $L_2^\false$ can also be pruned, since it is not possible to
connect it with an edge in the pruned $m_n$. This is a backward
process. Therefore, our pruner PDA takes the reverse of the parse
forest, $[m_n, m_{n-1}, \ldots, m_1]$, as the input and produces a
pruned parse forest $[m'_n, m'_{n-1}, \ldots, m'_1]$ in reverse.
Further, instead of intervening parsing and pruning steps, we choose
to perform pruning after the parser PDA has finished so that every
state in the parse forest gets pruned only once.

\begin{definition}[Pruner PDA states and stack] 
Given a VPG, we introduce a pruner PDA, where each state, denoted as
$m$, is a subset of $\mml$, $\mmc$ or $\mmr$, and each element of the
stack $T$ is a subset of $\mmr$.
\end{definition}
A stack is needed in the pruner to find valid call edges with respect
to valid return edges in the stack.  The technical details of the
pruner PDA construction are introduced in
Appendix~\ref{app:prunerExtractor}.  After pruning, we get a pruned
parse forest $[m_1', \ldots, m_n']$ and an \emph{extractor} is then
used to extract parse trees from the forest.  If the input VPG is
unambiguous, at most one parse tree can be extracted. The 
definition of $\extractor{[m_1', \ldots, m_n']}$ detailed in
Appendix~\ref{app:prunerExtractor} extracts a \emph{parse-tree set}
$V$, which is a set of parse trees together
with corresponding stacks of call edges.}

\subsection{The correctness proof of the parsing algorithm}
\label{sec:correct_parser}
In this section, we discuss the correctness proof of our core VPG
parsing algorithm; the proof is formalized in the proof assistant Coq.
The correctness theorem is stated based on the relation
$\ptBig{L^u}{w}{v}$, meaning that input string $w$ can be parsed from
nonterminal $L$ with tag $u$ and one of the parse trees is $v$. We
call this relation the \emph{big-step} parse-tree derivation relation,
which is presented in Figure~\ref{fig:big}. Its rules are mostly
straightforward and we explain only the one for $L\ra \br{L_1} L_2$:
it first builds a parse tree for substring $w_1$ with $L_1^\true$ 
since $w_1$ must be a well-matched string; it then builds a
parse tree for substring $w_2$ with $L_2^u$; then a parse tree for
string $\ac w_1 \bc w_2$ can be built by concatenating the parse-tree
edge for $\ac$, the parse tree for $w_1$, the parse-tree edge for
$\bc$, and the parse tree for $w_2$.

\begin{figure}
    \begin{gather*}
        \inference[]
        { (L\ra \epsilon) \in P }
        {\ptBig{L^u}{\epsilon}{\emptyL}} \qquad
        \inference[]
        {(L\ra cL_1) \in P  & \ptBig{L_1^u}{w_1}{v_1}}
        {\ptBig{L^u}{cw_1}{\consL{\left(L^u,c,L_1^u\right)}{v_1}}} \\
        \inference[]
        {(L\ra \ac L_1) \in P & \ptBig{L_1^\false}{w_1}{v_1}}
        {\ptBig{L^\false}{\ac w_1}{\consL{\left(L^\false,\ac,L_1^\false\right)}{v_1}}} 
        \qquad
        \inference[] 
        {(L\ra \bc L_1) \in P & \ptBig{L_1^\false}{w_1}{v_1}}
        {\ptBig{L^\false}{\bc w_1}{\consL{\left(L^\false,\bc,L_1^\false\right)}{v_1}}} \\
        \inference[]
        {(L\ra \br{L_1} L_2) \in P
            & \ptBig{L_1^\true}{w_1}{v_1}
            & \ptBig{L_2^u}{w_2}{v_2}
        }
        {\ptBig{L^u}{\br{w_1}w_2}{\left[\left(L^u,\ac,L_1^\true\right)\right]+v_1+\left[\left(\left(L^u,L_1^\true\right),\bc,L_2^u\right)\right]+v_2}}
        .
    \end{gather*}
    \caption{The big-step parse-tree derivation, assuming a VPG $G=(\Sigma,V,P,L_0)$.}
    \Description{The big-step parse-tree derivation.}
    \label{fig:big}
\end{figure}

\begin{definition}
Suppose $v$ is a trace of parse-tree edges. We define $\firstNT{v}$ to be the starting nonterminal in the
trace and $\lastNT{v}$ to be the last nonterminal in the trace.
That is, 
\[
\begin{array}{l}
\firstNT{v} = \predset{L^u}{v=\consL{(L^u,\_,\_)}{\_}} \\
\lastNT{v} = \predset{L^u}{v= \_ + [(\_,\_,L^u)]}
\end{array}
\]
\end{definition}

\begin{theorem}[Correctness of VPG parsing]
  For an input string $w$ of length $n$, if the parser PDA for VPG
  $G=(\Sigma,V,P,L_0)$ starts with the initial configuration
  $m_0=\set{(L_0^\false,\_,L_0^\false)}$ and $T_0=\bot$ and traverses 
  the following configurations to parse $w$:
$(m_0,T_0)\xrightarrow{\textit{$w_1$}}(m_1, T_1) \cdots \xrightarrow{\textit{$w_n$}}(m_n,T_n)$, and $\extractor{[m_1,\ldots,m_n]} = V$, then
$$\forall v,\ \ptBig{L_0^\false}{w}{v}  \mbox{ iff } \left(\exists E,(v,E)\in V \land \exists L_1,\ \lastNT{v}=L_1^\false \land (L_1 \ra \epsilon) \in P\right).$$
\end{theorem}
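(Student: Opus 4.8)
The plan is to factor the biconditional through two correspondences: a \emph{tree/trace} correspondence relating the recursive relation $\ptBig{L^u}{w}{v}$ to well-connected linear traces, and a \emph{trace/forest} correspondence relating such traces to what the parser PDA records and the extractor recovers. First I would make precise when a trace $v$ is \emph{connected}: reading the terminals off its edges yields $w$; $\firstNT{v}=L^u$; plain and pending edges pass their target nonterminal and tag forward to the next edge; and each matching call edge $(L^u,\ac,L_1^\true)$ is paired with a later matching return edge $((L^u,L_1^\true),\bc,L_2^u)$ that re-binds the continuation to $L_2^u$, with the enclosed sub-trace being itself connected, carrying only $\true$ tags, and ending in a nonterminal $L'$ with $(L'\ra\epsilon)\in P$. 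Bridge~1 is then the statement that $\ptBig{L^u}{w}{v}$ holds iff $v$ is a connected trace from $L^u$; I would prove it by induction on the big-step derivation, the only nonroutine case being $L\ra\br{L_1}L_2$, where the linearization $[(L^u,\ac,L_1^\true)]+v_1+[((L^u,L_1^\true),\bc,L_2^u)]+v_2$ is unpacked into a nested call/return pair surrounding the connected well-matched sub-trace $v_1$ for $w_1$, followed by the connected continuation $v_2$ for $w_2$. The tag discipline is what makes this bookkeeping sound: the top level stays $\false$ while matched interiors are forced to $\true$, so pending rules can never be used inside a matched pair.

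Bridge~2 connects connected traces to the parser PDA and the extractor. Its first half is a parser analogue of Theorem~\ref{derivCorrect}: iterating the derivative functions $p_c,p_{\ac},p_{\bc}$ of Definition~\ref{def:trans_parser}, the forest $[m_1,\ldots,m_n]$ collects at position $i$ exactly those edges that some partial parse consistent with the prefix $w_1\cdots w_i$ can use, with the stack $T_i$ holding the open matching-call edges. To make the induction go through I would generalize away from the fixed initial configuration $(m_0,\bot)$ to an arbitrary reachable configuration $(m,T)$ and prove an invariant stating that $m$ and $T$ jointly track all and only the partial derivations compatible with the consumed prefix; the stated theorem is the specialization using $m_0=\set{(L_0^\false,\_,L_0^\false)}$, which seeds $\firstNT{v}=L_0^\false$. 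The second half is extractor correctness: $\extractor{[m_1,\ldots,m_n]}$ returns precisely the connected complete traces through the forest, each paired with its stack $E$ of open call edges, because the extractor's own stack reconstructs nesting by matching every return edge against the call edge it pushed, which is exactly the pairing required by the connectivity condition of Bridge~1.

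Chaining the two bridges yields the result: $\ptBig{L_0^\false}{w}{v}$ iff $v$ is a connected complete trace from $L_0^\false$ through the forest iff $(v,E)\in V$ for some $E$. The boundary nonterminal $\firstNT{v}=L_0^\false$ is supplied by $m_0$, and the terminal condition $\lastNT{v}=L_1^\false$ with $(L_1\ra\epsilon)\in P$ is exactly the top-level $L\ra\epsilon$ base case of the big-step relation surfaced as the extractor's acceptance test, which is why it appears explicitly on the right-hand side.

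The step I expect to be the main obstacle is the completeness half of Bridge~2 together with the deliberate over-generation of $p_{\bc}$. Because $p_{\bc}(m,\mc)$ builds its new state from only the stack top $\mc$ and ignores the current state $m$, the forest genuinely contains return edges that connect to no valid continuation; soundness of individual forest edges is immediate, but the real content is showing that the connectivity filter implemented by the extractor's stack exactly cancels this over-generation, so that no spurious edge ever survives in an extracted connected trace. Getting the generalized invariant strong enough to track forward reachability and backward connectivity simultaneously---while threading the $\true$/$\false$ tag discipline that keeps pending rules out of matched contexts---is where the difficulty concentrates.
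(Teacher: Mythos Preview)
Your two-bridge decomposition is sound and would work, but it differs from the paper's route in one structural choice that matters for the mechanics. The paper does not use a global, structural notion of ``connected trace'' as the intermediate; instead it introduces an explicit \emph{small-step} parse-tree derivation relation $\ptSmall{v}{E}{i}{v'}{E'}$ that carries its own stack $E$ of open call edges (Figure~\ref{fig:small}), and proves the big-step/small-step correspondence as two directed theorems (Theorems~\ref{thm.soundness} and~\ref{thm.completeness}). The invariant $\parseInv{L,w,m,T,V}$ then ties the parser PDA configuration and the extracted set $V$ directly to the small-step closure: clause~(1) is literally $(v,E)\in V$ iff $\ptSmallStar{\emptyL}{\bot}{w}{v}{E}\land\firstNT{v}=L$, and clauses~(2)--(3) relate $E$ to the PDA stack $T$ and $v$'s last edge to $m$.

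The payoff of the paper's choice is that the small-step relation is already operational and already carries a stack, so the three objects in play---the PDA stack $T$, the extractor's running stack, and the small-step stack $E$---have the same shape, and the invariant can be checked edge-by-edge. Your structural pairing of matching calls with later returns would eventually have to be unwound into exactly such a stack discipline to line up with the extractor's step-by-step definition; you correctly flag this as the hard part in your final paragraph. In effect, the paper front-loads that unwinding by baking the stack into the intermediate relation, so Bridge~2 becomes a routine one-step invariant preservation rather than the simultaneous forward-reachability/backward-connectivity argument you anticipate. Your route is not wrong, but it defers the operational content to Bridge~2; the paper isolates it in the small-step rules and keeps both bridges as straightforward inductions.

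One smaller point: the acceptance condition $\lastNT{v}=L_1^\false\land(L_1\ra\epsilon)\in P$ is handled asymmetrically in the paper---it is a \emph{hypothesis} only in the small-to-big direction (Theorem~\ref{thm.completeness}), while in the big-to-small direction it falls out of the big-step base case rather than being assumed. Your sketch treats it uniformly as ``the extractor's acceptance test,'' which is fine informally but would need the same asymmetric split when formalized.
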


\begin{figure}
    \begin{gather*}
        \inference[]
        {
           (L\ra cL_1) \in P &
           v=\emptyL \lor \lastNT{v}=L^u
        }
        {\ptSmall{v}{E}{c}{v+\singletonL{(L^u,c,L_1^u)}}{E}}
        \\
        \inference[]
        {(L\ra \ac L_1) \in P
         & v=\emptyL \lor \lastNT{v}=L^\false}
        {\ptSmall{v}{E}{\ac}
          {v+\singletonL{(L^\false, \ac, L_1^\false)}}
          {(L^\false,\ac, L_1^\false)\cdot E}}
        \\[2ex]
        \inference[]
        {(L\ra \br {L_1}L_2) \in P 
         & v=\emptyL \lor \lastNT{v} = L^u}
        {\ptSmall{v}{E}{\ac}
          {v+\singletonL{(L^u,\ac,L_1^\true)}}
          {(L^u,\ac,L_1^\true)\cdot E}
        }
        \\
        \inference[]
        { (L\ra \bc L_1) \in P
          & v=\emptyL \lor \lastNT{v} = L^\false }
        {\ptSmall{v}{\bot}{\bc}
          {v+\singletonL{(L^\false, \bc ,L_1^\false)}}
          {\bot}
        }
        \\[2ex]
        \inference[]
        { (L\ra \br{L_1}L_2) \in P &
          \lastNT{v} = L_3^\true & (L_3 \ra \epsilon) \in P
        }
        {\ptSmall{v}{(L^u,\ac,L_1^\true)\cdot E}{\bc}
          {v+\singletonL{((L^u,L_1^\true),\bc,L_2^u)}}
          {E}
        }
        \\[2ex]
        \inference[]
        { (L_3\ra \ac  L_4) \in P &
          (L_1\ra \bc L_2) \in P &
          v=\emptyL \lor \lastNT{v}=L_1^\false
        }
        {\ptSmall{v}{(L_3^\false, \ac, L_4^\false) \cdot E}{\bc}
          {v+\singletonL{(L_1^\false,\bc,L_2^\false)}}
          {E}
        }
    \end{gather*}
\caption{The small-step parse-tree derivation, given a VPG $G=(\Sigma,V,P,L_0)$.}
\Description{The small-step parse-tree derivation.}
\label{fig:small}
\end{figure}

\begin{figure}
    \begin{gather*}
        \inference[]
        {}
        {\ptSmallStar{v}{E}{\epsilon}{v}{E}}
        \qquad
        \inference[]
        {\ptSmallStar{v_1}{E_1}{w}{v_2}{E_2}
         & \ptSmall{v_2}{E_2}{i}{v_3}{E_3} }
        {\ptSmallStar{v_1}{E_1}{wi}{v_3}{E_3}}
    \end{gather*}
\caption{The transitive closure of the small-step.}
\Description{The transitive closure of the small-step.}
\label{fig:smallstar}
\end{figure}
To prove the theorem, we need the help of a \emph{small-step}
parse-tree derivation relation so that we can formalize a set of
invariants that are satisfied during each step when running the parser
PDA. The relation $\ptSmall{v}{E}{i}{v'}{E'}$, defined in
Figure~\ref{fig:small}, means that starting with a parse tree $v$ and
a stack of call edges $E$, the parsing of symbol $i$ results in a
new parse tree $v'$ and a new stack of call edges $E'$. In all rules,
$v'$ is the result of adding one new parse-tree edge to $v$, and
therefore it formalizes the process of generating one parse-tree edge
at a time, matching what the parser PDA does. The transitive closure
of the small-step relation is in Figure~\ref{fig:smallstar}.
The following two theorems show the equivalence of big-step and
small-step parse-tree relations.

\begin{theorem}[From big step to small step]
    \label{thm.soundness}
\hspace{10ex}
\begin{enumerate}
\item If $\ptBig{L^\true}{w}{v}$, then $\ptSmallStar{\emptyL}{\bot}{w}{v}{\bot}$.
\item If $\ptBig{L^\false}{w}{v}$, then $\exists E$, \st $\ptSmallStar{\emptyL}{\bot}{w}{v}{E}$.
\end{enumerate}
\end{theorem}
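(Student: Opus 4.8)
The plan is to prove a strengthened statement by induction on the big-step derivation, generalizing over an arbitrary already-built prefix $v$ and an arbitrary stack $E$ instead of fixing $\emptyL$ and $\bot$. Concretely, I would show: for every $v$ with $v=\emptyL \lor \lastNT{v}=L^u$ and every stack $E$, (1) if $\ptBig{L^\true}{w}{v_0}$ then $\ptSmallStar{v}{E}{w}{v+v_0}{E}$, and (2) if $\ptBig{L^\false}{w}{v_0}$ then there exists $E'$ with $\ptSmallStar{v}{E}{w}{v+v_0}{E'}$. The two original statements are the special case $v=\emptyL$, $E=\bot$. Generalizing the prefix is what lets the plain and call cases go through, since there the first small-step emits a single edge and the recursive call must start from the extended tree $v+\singletonL{(\ldots)}$; generalizing the stack is needed because the pending-call case recurses under a pushed element. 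The side condition $v=\emptyL\lor\lastNT{v}=L^u$ matches exactly the premise on every rule of Figure~\ref{fig:small}, so it is preserved as edges are appended.

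Before the induction I would dispatch two routine auxiliaries. First, transitivity of the closure: from $\ptSmallStar{v}{E}{w_a}{v'}{E'}$ and $\ptSmallStar{v'}{E'}{w_b}{v''}{E''}$ conclude $\ptSmallStar{v}{E}{w_a w_b}{v''}{E''}$, proved by induction on the second derivation using the append rule of Figure~\ref{fig:smallstar}; a single small-step lifts to a length-one closure via the reflexive rule plus one append. This is what glues together the segments of the matching case. Second, a structural fact that every big-step tree ends at a nullable nonterminal: if $\ptBig{L^u}{w}{v_0}$ then either $v_0=\emptyL$ with $(L\ra\epsilon)\in P$, or $\lastNT{v_0}=L_3^u$ for some $L_3$ with $(L_3\ra\epsilon)\in P$. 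This is precisely the premise needed to fire the matching-return rule when closing a block.

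The main induction proceeds by cases on the last big-step rule, establishing (1) and (2) simultaneously, since part (2) invokes part (1) on the well-matched child $L_1^\true$ of a matching rule while part (1) only ever recurses into itself. The $\epsilon$ case is the reflexive closure; the plain case emits one plain edge, leaves the stack fixed, and applies the hypothesis to the extended prefix; the matching case first applies the matching-call step (pushing $(L^u,\ac,L_1^\true)$), then part (1) on $L_1^\true$, which crucially restores the stack, then the matching-return step whose side condition is discharged by the nullable-last-nonterminal lemma, and finally the hypothesis on $L_2^u$, concatenating the four segments via closure transitivity to obtain $v+v_0$. In part (1) the stack is threaded through unchanged at every step and so returns to $E$; it is exactly this stack-preservation property of part (1) that part (2) relies on when it closes a matching block.

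The main obstacle is the pending-return case of part (2). The big-step pending-return rule is stack-oblivious, yet the small-step calculus offers two ways to emit a pending $\bc$: the pending-return rule, which requires the stack to be exactly $\bot$, and the matching-return rule for a pending call on top, which pops a pending-call element. To choose correctly and never get stuck, I would strengthen part (2) with the invariant that the input stack $E$ consists solely of pending-call edges (or is $\bot$). This holds initially ($E=\bot$) and is preserved: the pending-call case pushes another pending-call edge; the matching case pushes a matching-call element only while running the $\true$ child, where part (1) guarantees the stack is restored before we resume in the $\false$ context, so no matching-call element is ever exposed on top when a pending return must be produced. Consequently, whenever a pending $\bc$ is emitted the stack is either $\bot$ (use the pending-return rule) or has a pending-call edge on top (use the matching-return rule, popping it), and the invariant is re-established for the recursive call. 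Verifying that this stack discipline is airtight—in particular that a matching-call element is never on top at the moment a pending return fires—is the delicate heart of the argument and the step I expect to demand the most care in the Coq formalization.
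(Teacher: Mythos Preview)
The paper does not give a textual proof of this theorem; it only states that the correspondence between the big-step and small-step relations is formalized in Coq (about 3k lines), so there is no paper proof to compare against line by line. That said, your plan is sound and is the natural way to structure the argument: generalizing over an arbitrary prefix $v$ satisfying $v=\emptyL\lor\lastNT{v}=L^u$ and an arbitrary stack $E$ is exactly what makes the inductive cases compose, and the two auxiliaries (transitivity of the closure, and that a big-step tree ends at a nullable nonterminal carrying the same tag $u$) are precisely the lemmas one needs.

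Your identification of the pending-return case as the delicate point is correct, and the fix you propose---strengthening part~(2) so that the ambient stack $E$ contains only pending-call edges---is the right invariant. It is worth making explicit why the two parts interact cleanly: part~(1) never touches pending rules (all its big-step cases have tag $\true$), so it both preserves the stack exactly and never needs the invariant; part~(2) only temporarily pushes a matching-call edge when entering the $L_1^\true$ child of a matching rule, but that push is confined entirely to the part~(1) subcall and is popped by the matching-return step before part~(2) resumes, so the invariant is visible again when the next pending return is emitted. With that observation the ``airtightness'' concern you flag dissolves into a straightforward check, and the whole induction goes through.
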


\begin{theorem}[From small step to big step]
  \label{thm.completeness}
  If $\ptSmallStar{\emptyL}{\bot}{w}{v}{E}$, and $\firstNT{v}=L^\false$,  
  $\lastNT{v}=L_1^\false$ and $(L_1\ra\epsilon) \in P$, then $\ptBig{L^\false}{w}{v}$.
\end{theorem}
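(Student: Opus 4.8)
The plan is to prove a stronger statement covering \emph{both} tags and to proceed by strong induction on the length of the small-step derivation (equivalently, on $|w|$), analyzing the \emph{first} step. Concretely, I would establish simultaneously: (a) a $\true$-tagged companion saying that if $\ptSmallStar{\emptyL}{\bot}{w}{v}{\bot}$ with $\firstNT{v}=L^\true$, $\lastNT{v}=L_1^\true$ and $(L_1\ra\epsilon)\in P$, then $\ptBig{L^\true}{w}{v}$; and (b) the stated $\false$-tagged claim. The $\true$-tagged companion is unavoidable because the well-matched body of a matching rule must be reconstructed with a $\true$ tag, so the two must be proved together. Strong induction on derivation length fits the head-recursive shape of the big-step rules in Figure~\ref{fig:big}: the first small step fixes the first production, and the big-step conclusion is built by prefixing the corresponding edge(s) to a recursively obtained parse tree. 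The degenerate empty-trace case is vacuous, since $\firstNT{v}=L^\false$ forces $v$ to be nonempty.

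First I would prove a \emph{locality} lemma: a small-step run only appends to its trace, and the appended suffix together with the induced stack actions depends only on the stack and on the nonterminal recorded by $\lastNT$ at the boundary, not on the already-accumulated prefix. This lets me, after peeling off the head edge, ``restart'' the remaining run from $(\emptyL,\bot)$ and invoke the induction hypothesis. The cases where the first step uses a plain rule $L\ra cL'$, a pending-call rule $L\ra\ac L'$, or a pending-return rule $L\ra\bc L'$ are then routine: I peel the head edge, restart the tail on the smaller string from the successor nonterminal (with a $\false$ tag), obtain the corresponding big-step derivation by the induction hypothesis, and close with the matching big-step rule. One wrinkle is that a pending call leaves its edge on the stack during the tail run; I would discharge this using the observation that rules 4 and 6 of Figure~\ref{fig:small} produce the same pending-return edge for a given return production and differ only in whether a pending call sits on top of the stack, so rebasing the restarted stack to $\bot$ yields the same trace.

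The crux is the matching case, where the first step uses $L\ra\br{L_1}L_2$ and pushes the matching call edge $(L^\false,\ac,L_1^\true)$. Here I would prove a \emph{splitting lemma}: this pushed edge is eventually popped, necessarily by a matching return (rule 5), since rule 5 pops only matching call edges (target tagged $\true$) while rule 6 pops only pending ones, so pops respect edge type; and it \emph{is} popped because the trace ends at a $\false$-tagged nonterminal. The latter rests on a \emph{tag-monotonicity} invariant: once the run enters the body of a matching call it stays $\true$-tagged until the matching return restores the outer tag, so an unclosed matching call would force $\lastNT{v}$ to be $\true$, contradicting the hypothesis. Locating the \emph{first} return to the pre-push stack height splits $w=\br{w_1}w_2$: the inner run reconstructs $\ptBig{L_1^\true}{w_1}{v_1}$ via the $\true$-tagged induction hypothesis (its final stack is $\bot$ and, by the premises of rule 5, $\lastNT{v_1}=L_3^\true$ with $L_3\ra\epsilon$), and the outer run reconstructs $\ptBig{L_2^\false}{w_2}{v_2}$; the big-step matching rule then assembles the required $v$.

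I expect the splitting lemma to be the main obstacle, and its heart is the stack-discipline/balance argument: formalizing that between a matching push and its first return to the base height the intervening stack activity is balanced, that the intermediate returns pop exactly the edges pushed in between, and that the final matching return pops precisely the originally pushed edge. Getting this bookkeeping right in the presence of interleaved pending calls and returns, and proving the tag-monotonicity invariant that guarantees closure, is where the real effort lies; the remaining cases become mechanical once the locality and splitting lemmas are in hand.
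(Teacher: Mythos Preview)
The paper does not spell out a proof of this theorem in the text; it only states the result and records that the big-step/small-step correspondence is discharged in Coq (roughly 3k lines). So there is no written argument to compare against line by line.

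Your plan is correct and is the natural reconstruction. Strengthening to a simultaneous statement for both tags, strong induction on the length of the run, a locality/rebasing lemma to restart suffix runs from $(\emptyL,\bot)$, and a splitting lemma for the matching case are exactly the ingredients the shape of Figures~\ref{fig:big} and~\ref{fig:small} forces. Your key structural observations are the right ones: rule~5 pops only call edges with $\true$ target while rule~6 pops only $\false$-targeted ones, so pops respect edge type; and once the run is at a $\true$-tagged nonterminal it cannot use any pending rule, hence tag monotonicity guarantees that an outermost matching call must eventually be closed if the trace ends at a $\false$ tag.

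Two small bookkeeping points to make explicit. First, your $\true$-tagged companion is phrased with $\firstNT{v}=L^{\true}$ and hence $v\neq\emptyL$, but in the matching case the inner body $w_1$ may be empty; there you must invoke the big-step $\epsilon$ rule directly, using that rule~5's premise $\lastNT{v}=L_3^{\true}$ with $L_3\to\epsilon$ forces $L_3=L_1$ when the inner run is empty (the last edge is then the call edge itself). Second, in the pending-call case your rebasing argument identifies rules~4 and~6 at the moment the leftover pending edge would be popped; you should also note that immediately after that step both the original and the rebased run have stack $\bot$ (since the outer stack was $\bot$ to begin with), so the simulation remains synchronized thereafter. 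With these recorded, the plan goes through.
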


With the small-step relation, we can formalize those invariants satisfied
by every step when running the parser PDA.  Suppose the parser PDA has consumed
string $w'$ to reach configuration $(m',T')$ and consumes symbol
$i$ next to reach $(m,T)$. Further, the extractor
(Definition~\ref{def:extractor} in Appendix~\ref{app:prunerExtractor})
can extract a parse-tree set $V'$ from the parser PDA state traces for
$w'$.  Finally, extending $V'$ with
one-more parse-tree edge in $m$ to get $V$. Then the following theorem can
be proved: if the previous step
satisfies $\parseInv{L,w',m',T',V'}$, the latest step should also
satisfy $\parseInv{L,w,m,T,V}$, where $w=w'i$. The invariants are
defined as follows.
\begin{prop}[Invariants of VPG parsing]
Let $L$ be a nonterminal, $w$ an input string, $(m,T)$ a parser PDA
configuration, and $V$ a parse-tree set.  The property
$\parseInv{L,w,m,T,V}$ is defined as
    \begin{enumerate}
        \item $\forall {v\ E,}\ (v,E)\in V \mbox{ if and only if }
            \ptSmallStar{\emptyL}{\bot}{w}{v}{E} \land \firstNT{v}=L$;
        \item $\forall (v,E)\in V,\  (E=\bot\ra {T}=\bot)\land (\exists e\ E', E=e \cdot E' \ra e \in \hd {T})$;
         \item $\forall v , \exists E, (v,E)\in V \ra \exists e, e\in m \land v = \_ + [e]$.
    \end{enumerate}
\end{prop}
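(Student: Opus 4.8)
The final statement is a \emph{definition}, not a claim: it introduces the predicate $\parseInv{L,w,m,T,V}$ as the conjunction of three invariants, so there is no implication or equivalence to establish. Accordingly, the only meaningful verification obligation is to confirm that the definition is well-formed and satisfiable, and to pin down the role each clause will play in the invariance lemma that the surrounding text announces (and that lies outside this excerpt). My plan is to (i) check that each clause is well-typed against the objects already in scope, (ii) confirm the three clauses are jointly satisfiable, exhibiting the initial configuration as a concrete witness, and (iii) flag the single delicate point in the phrasing.

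For well-formedness I would read the clauses in order. Clause 1 characterizes the parse-tree set $V$ through the transitive closure of the small-step relation of Figure~\ref{fig:small}: it asserts that $V$ is exactly the set of pairs $(v,E)$ for which $\ptSmallStar{\emptyL}{\bot}{w}{v}{E}$ holds and $\firstNT{v}=L$, i.e.\ the traces reachable from the empty trace after consuming $w$ whose first nonterminal is $L$. Clause 2 ties the per-trace stack of call edges $E$ to the shared PDA stack $T$, in line with the factored-out-stack design of Section~\ref{subsec:pgPDA}: an empty $E$ forces $T=\bot$, and when $E=e\cdot E'$ the head edge $e$ must belong to $\hd T$. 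Clause 3 connects $V$ to the current PDA state $m$, requiring that the last edge of each trace in $V$ lie in $m$. Every clause refers only to the small-step closure, to $\firstNT{\cdot}$ and $\lastNT{\cdot}$, to $\hd T$, and to set membership, all already defined, so the predicate is self-contained.

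For satisfiability I would exhibit the initial configuration as a witness, verifying $\parseInv{L_0^\false,\epsilon,m_0,\bot,\emptyset}$ where $m_0=\set{(L_0^\false,\_,L_0^\false)}$. For $w=\epsilon$ the reflexive rule of Figure~\ref{fig:smallstar} makes the empty trace the only $\epsilon$-trace, with stack $\bot$; since $\firstNT{\emptyL}$ is empty it cannot equal $L_0^\false$, so the right-hand side of Clause 1 is never met and $V=\emptyset$ is forced. Clauses 2 and 3 are then vacuous, and the dummy placeholder edge packaged in $m_0$ plays no part in populating $V$ at this stage---it is present only to seed the first genuine transition on $w_1$. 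This base instance is exactly the foundation on which the later induction on $|w|$ (the preservation lemma) will rest.

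Because nothing is being derived, there is no real obstacle; the only point needing care is a notational convention. Clause 1 writes ``$\firstNT{v}=L$'', yet $\firstNT{\cdot}$ is defined to return a \emph{set} of tagged nonterminals---a singleton for a nonempty trace and $\emptyset$ for $\emptyL$. I would fix the reading of ``$\firstNT{v}=L$'' as ``$L\in\firstNT{v}$'' (equivalently, as the singleton $\{L\}$), which is what makes the empty-trace case fall out cleanly above and keeps the clause consistent with the same convention applied to $\lastNT{\cdot}$ elsewhere. Once that convention is fixed, the definition is manifestly coherent and ready to drive the subsequent invariance argument.
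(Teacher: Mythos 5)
You correctly read this Property as a \emph{definition} of the predicate $\parseInv{L,w,m,T,V}$ rather than a claim to be proved: the paper likewise supplies no proof for it, stating the actual preservation result (that each parsing step carries $\parseInv{L,w',m',T',V'}$ to $\parseInv{L,w,m,T,V}$ with $w=w'i$) only informally in the surrounding text, with the proof living entirely in the Coq development. Your sanity checks---well-formedness, satisfiability at the initial configuration forcing $V=\emptyset$ for $w=\epsilon$, and reading $\firstNT{v}=L$ as $L\in\firstNT{v}$ since $\firstNT{\cdot}$ returns a set---are all consistent with the paper's conventions (the same singleton convention is used for $\lastNT{\cdot}$ in the correctness theorem), so there is nothing to correct.
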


\secondRound{The above correctness proof is formalized in Coq and
  includes around 3k lines of proofs for the correspondence between
  the big-step and the small-step parse-tree derivations and another
  4k lines for implementing the parser and the parse-tree extractor,
  formalizing the invariants, and proofs for showing that the invariants
  are preserved during parsing.}

\subsection{Time and space complexity}\label{app:complexity}
\secondRound{
When given an input of length $n$, the VPG parser runs two PDAs to
construct the parse forest: a forward parser PDA and a backward pruner
PDA. We
assume their transition tables can be implemented via a data structure
that provides constant-time lookups (e.g., via a hash
table). Therefore, each transition can finish in constant time,
leading to the linear-time running of VPG parsing.

The space complexity depends on the space for representing the
transition tables of the two PDAs. Recall that the transition function
of the parser PDA is $\pPDAtrans(i,m,T)= (m',T')$, where $i\in\Sigma$,
$m$ and $m'$ are states, and $T$ and $T'$ are stacks. The transition
function of the pruner PDA is $\gPDAtrans(m_1,m_2',T) = (m_1',T')$,
where $m_1$, $m_2'$, and $m_1'$ are states, and $T$ and $T'$ are
stacks of states. In fact, only the top of the stack is used by $\mathcal{P}$
and $\mathcal{G}$.  Note since a state is a set of edges and an edge corresponds
to a rule, the size of a state is at most $O(|P|)$.  Thus, there are
at most $O\left(2^{|P|}\right)$ states.  So the total number of transitions in
the two PDAs is bounded by $O\left(2^{3|P|} \times |\Sigma|+2^{4|P|}\right)$,
where $\Sigma$ is the input alphabet. 
{Each entry in the transition table occupies 
$O\left(\log|\Sigma|+|P|\right)$ bits. As a conclusion, 
the space complexity is 
$O\left((\log|\Sigma|+|P|)\times \left(2^{3|P|} \times |\Sigma|+2^{4|P|}\right)\right)$, which is exponential in $|P|$. However, this is the worst-case scenario as not 
all states can be derived; further, it is independent of the input string size.}

In our evaluation, the largest space occupied by the transition
tables is around 1.6 MB, for an HTML grammar (discussed in
Section~\ref{sec:parse_HTML}).}

\section{Designing a surface grammar} \label{sec:surface_grammar}
The format of rules allowed in VPGs is designed for easy studying of
its meta-theory, but is inconvenient for expressing practical
grammars. First, no user-defined semantic actions are allowed. Second,
each VPG rule allows at most four terminals/nonterminals on the right-hand side.  In
this section, we present a surface grammar that is more user-friendly
for writing grammars.  We first discuss embedding semantic
actions. Then we introduce \emph{tagged
  CFGs}, which are CFGs paired with information about how
to separate terminals to plain, call, and return symbols.
We then describe a translator from tagged CFGs
to VPGs. During the conversion, the translator also generates semantic
actions that convert the parse trees of VPGs back to the ones of tagged CFGs.

\subsection{Embedding semantic actions}\label{SecEmbed}
Semantic actions transform parsing results to user-preferred formats.
In a rule $L\ra s_1\cdots s_k$, where $s_k\in\Sigma\cup V$, we treat $L$ as a default action that
takes $k$ arguments, which are semantic values returned by $s_1$ to $s_k$, and returns
a tree with a root node and 
$s_1$ to $s_k$ as children. The prefix notation of a parse tree gives
$$[L,v_{s_1},\cdots, v_{s_k}],$$
where $v_{s_i}$ is the semantic value for $s_i$. The above notation can 
be naturally viewed as a stack machine, where $L$ is an action and $v_{s_i}$ are the values
that get pushed to the stack before the action.
The VPG parse tree can be converted to the prefix notation in a straightforward way.
If we then replace each nonterminal in the tree with its semantic action, the parse tree becomes
a stack machine.

The default action for a nonterminal can be replaced by a user-defined action 
appended to each rule in the grammar.
For example, consider the grammar
$L = c L\mid \br{L} L\mid \epsilon$. 
Suppose we want to count the number of the symbol $c$ in an input string;
we can specify semantic actions in the grammar as follows.
$$
L\ra cL\ @\{\text{let }f_1\ v_1\ v_2 = 1+v_2\} 
   \mid \br{L} L \ @\{\text{let }f_2\ v_1\ v_2\ v_3\ v_4 = v_2+v_4\} 
   \mid \epsilon \ @\{\text{let }f_3\ () = 0\}.
$$

In the above example, a semantic action is specified after each rule, e.g., ``@\{let $f_1\ v_1\ v_2 = 1+v_2$\}''. In the actions, $v_1$, $v_2$, $v_3$ and $v_4$ represent the semantic values returned by the right hand side symbols of the rule. For example, the first semantic action $f_1\ v_1\ v_2\ =\ 1+v_2$ accepts two semantic values $v_1$ and $v_2$, where $v_1$ is returned by $c$ and $v_2$ is returned by $L$. 

As an application, the next subsection shows how to use semantic actions to convert the parse trees of a VPG to the parse trees of its original tagged CFG.

\subsection{Translating from tagged CFGs to VPGs}\label{sec:translator}
Grammar writers are already familiar with CFGs, the basis of
many parsing libraries. We define \emph{tagged CFGs} to be CFGs
paired with information about how to partition terminals into
plain, call, and return symbols
($\Sigma=\Sigma_l\cup\Sigma_c\cup\Sigma_r$);\footnote{We note our
  implementation of tagged CFGs additionally supports regular operators
  in the rules; these regular operators can be easily desugared and
  we omit their discussion.} that is, in a tagged CFG, a terminal
is tagged with information about what kind of symbols it is. Compared
to a regular CFG, the only additional information in a tagged CFG is
the tagging information; therefore, tagged CFGs provide a convenient
mechanism for reusing existing CFGs and developing new grammars in a
mechanism that grammar writers are familiar with. Appendix
\ref{app:Examples} shows some example tagged CFGs.

However, not all tagged CFGs can be converted to VPGs. We use a
conservative validator to determine if a tagged CFG can be converted
to a VPG and, if the validator passes, translate the tagged CFG to a
VPG.  For simplicity, we assume every call symbol is matched with a
return symbol in the input tagged CFG.

The translation steps are summarized as follows:
$$\textit{A tagged CFG}\ra \textit{Simple form}\xrightarrow{\textit{If valid}} \textit{Linear form}\ra \textit{VPG}.$$
At a high level, a tagged CFG is first translated to a simple form,
upon which validation is performed. If validation passes, the
simple-form CFG is translated to a linear-form CFG, which is finally
translated to a VPG. We next detail these steps.

\begin{definition}[Simple forms]
  A rule is in the simple form if it is of the form $L\ra \epsilon$, or of the form
    $L\ra s_1\cdots s_k$,
    where $s_i\in \Sigma_l\cup V$ or $\exists \ac,\bc,L_i,\st\ s_i=\br{L_i}$, $i=1..k$, $k\geq 1$.  A tagged CFG $G=(V,\Sigma,P,L_0)$ is in the simple form, if every rule in $P$ is in the simple form.
\end{definition}
Compared to a tagged CFG, a simple-form CFG requires that there must be a
nonterminal between a call symbol and its matching return symbol.
The conversion from a tagged CFG to a simple-form CFG is
straightforward: for each rule, we replace every string $\br{s}$,
where $\ac$ is matched with $\bc$ and $s\in(\Sigma\cup V)^*$, with
$\br{L_s}$ and generate a new nonterminal $L_s$ and a new rule $L_s\ra
s$. 
After this conversion, a string in the from of $\br{L}$ can be viewed
as a ``plain symbol''; this is a key intuition
for the following steps. We call $\br{L}$ a \emph{matched token} in
the following discussion.

The validation can then perform on the simple form, using its dependency graph.
\begin{definition}[Dependency graphs]
    The dependency graph  of a grammar $G=(V,\Sigma,P,L_0)$ is $(V,E_G)$, where 
    $E_G=\predset{(L,L')}{\exists s_1,s_2\in (\Sigma\cup V)^*,\st\ (L\ra s_1 L' s_2)\in P}.$
\end{definition}
The validator checks for every loop in the dependency graph, either
(1) in the loop there is an edge $(L,L')$ that is produced
from a rule of the form $L\ra s_1\br{L'}s_2 $, where $s_1,s_2\in
(\Sigma\cup V)^*$; or (2) every edge $(L,L')$ in the loop is produced
from a rule of the form $L\ra sL'$, 
$s\in (\Sigma\cup V)^*$
and at least one edge in the loop satisfies $s\not\ra^*\epsilon$.

Once the validation passes, the translation converts a simple-form CFG
to a linear-form CFG.
\begin{definition}[Linear forms]
    A rule is in the linear form if it is in one of the following forms:
        (1) $L\ra\epsilon$;
        (2) $L\ra t_1\cdots t_k$;
        (3) $L\ra t_1\cdots t_kL'$;
    where $t_i\in\Sigma_l$ or $\exists \ac,\bc,L_i,\st\ t_i=\br{L_i}$, $i=1..k$, $k\geq 1$. A tagged CFG $G=(V,\Sigma,P,L_0)$ is in the linear form if every rule in $P$ is in the linear form.
\end{definition}
Note that in a linear-form rule, $t_i$ cannot be a nonterminal, while
in a simple-form rule $s_i$ can be a nonterminal. Further, the linear
form allows rules of the form $L\ra t_1\cdots t_kL'$, where $t_i$ is a
terminal or a matched token.  The main job of the translator is to
convert simple-form rules to linear-form rules.
Appendix~\ref{app:terminate} shows the translation algorithm.

The translation from a linear-form CFG to a VPG is simple.  E.g., for
a rule of the form $L\ra t_1\cdots t_k$, it is translated to 
$L \ra t_1L_1; L_1 \ra t_2 L_2; \ldots; L_k \ra t_kL_k; L_k \ra \epsilon$,
where $L_1$ to $L_k$ are a set of new nonterminals.

All transformations are local rewriting of rules and as a result
it is easy to show that each transformation step preserves the set of
strings the grammar accepts.
\secondRound{We further note that not all tagged CFGs can be converted to VPGs.  For
  example, grammar ``$L\ra cLc|\epsilon$'' cannot be converted since its
  terminals cannot be suitably tagged: intuitively $c$ has to be both
  a call and a return symbol.  Further, since our validation algorithm
  is conservative, it rejects some tagged CFGs that have VPG
  counterparts.  For example, grammar ``$L\ra Lc|\epsilon$'' is
  rejected by the validator since it is left recursive. However, it
  can be first refactored to ``$L\ra cL|\epsilon$'', which is accepted
  by our validator.}

\paragraph{Generating semantic actions} During the conversion, each time the translator rewrites a rule, a corresponding semantic action is attached to the rule. Initially, every rule is attached with one default semantic action. For example, the rule $L\ra AbCd$ is attached with $L^4$, written as 
$L\ra AbCd \ @L^4.$
As mentioned in Section \ref{SecEmbed}, $L^4$ is the default semantic action for constructing a tree with a root node and children nodes that are constructed from semantic values from the right hand side of the rule. The superscript $4$ is its arity. 
During conversion, every time we rewrite a nonterminal $L$ in a rule
$R$ with the right-hand side of rule $L \ra s$, the semantic values
for $s$ are first combined to produce a semantic value for $L$, which
is then used to produce the semantic value for the left-hand
nonterminal of $R$.  If a helper nonterminal $L_s$ is introduced
during conversion and a rule $L_s \ra s$ is generated, we do not
generate a semantic value for $L_s$ but leave the semantic values for
$s$ on the stack so that any rule that uses $L_s$ can use those
semantic values directly.  In this way, we can convert a parse tree of
a VPG to the parse tree of its corresponding tagged CFG.
Appendix~\ref{app:gen_semanticactions} shows an example of the
translation.

\section{Evaluation} \label{sec:eval}

We implemented our VPG parsing library in OCaml. The implementation
used hash tables to store the transition tables of the generated
parser and pruner PDAs to get constant-time lookup.
\secondRound{We evaluated our implementation for the following questions: (1) how applicable
VPG parsing is in practice? (2) what is the performance of VPG parsing
compared with other parsing approaches?

We performed a preliminary analysis for a set of ANTLR4 grammars
in a grammar repository\footnote{\url{https://github.com/antlr/grammars-v4}}.  
Among all 239 grammars, 136 (56.9\%)
grammars could be converted to VPGs by our tagged-CFG-to-VPG translation, after we
manually marked the call and return symbols for those grammars.  Note
that it does not mean the rest cannot be converted; e.g.,
34 grammars cannot be converted because they have
left recursion and the conversion may become possible if the
left recursion is removed. 
We left a further analysis for future work.}

For performance evaluation, we compared 
our VPG parsers
with ANTLR4\footnote{\url{https://www.antlr.org/}.},
a popular parser generator that implements an efficient
parsing algorithm called ALL(*)~\cite{parr2014adaptive}.  
\secondRound{The ALL(*)
algorithm can perform an unlimited number of lookaheads to resolve
ambiguity and it has a  worst-case complexity of $O(n^4)$; however, it
exhibits linear behavior on many practical grammars.} 
\secondRound{We
  also compared the VPG parsers with a few hand-crafted parsers specialized for
  parsing JSON and XML documents, including four mainstream JavaScript
  engines and four popular XML parsers.  Before presenting the
  performance evaluation, we list some general setups: }
\begin{enumerate}
  \item 
\secondRound{During evaluation, we adapted the grammars for JSON, XML, and HTML from ANTLR4\footnote{
  \url{https://github.com/antlr/grammars-v4/blob/master/json/},
  \url{https://github.com/antlr/grammars-v4/blob/master/xml/},
and
\url{https://github.com/antlr/grammars-v4/tree/master/html}}
to tagged CFGs, generated VPG parsers, and compared VPG parsers with the parsers generated by ANTLR in performance.
Appendix~\ref{app:Examples} shows the tagged CFGs for JSON, XML and HTML.}

  \item 
  \secondRound{When comparing with ANTLR, we compared only the parsing time, omitting
  the lexing time.  This is because we used ANTLR's lexers to generate
  the tokens for both VPG parsers and ANTLR parsers.}

\end{enumerate}

\subsection{Comparison with ANTLR on parsing JSON files}\label{sec:parse_JSON}
The JSON format allows objects to be nested within objects and arrays;
therefore, a JSON object has a hierarchically nesting structure, which
can be naturally captured by a VPG. In particular, since in JSON an
object is enclosed within ``\{'' and ``\}'' and arrays within ``[''
  and ``]'', its VPG grammar treats ``\{'' and ``[`` as call 
symbols and treats ``\}'' and ``]'' as return symbols.

\begin{figure}
    \begin{minipage}{0.46\textwidth}
        \centering
        \includegraphics[width=\textwidth]{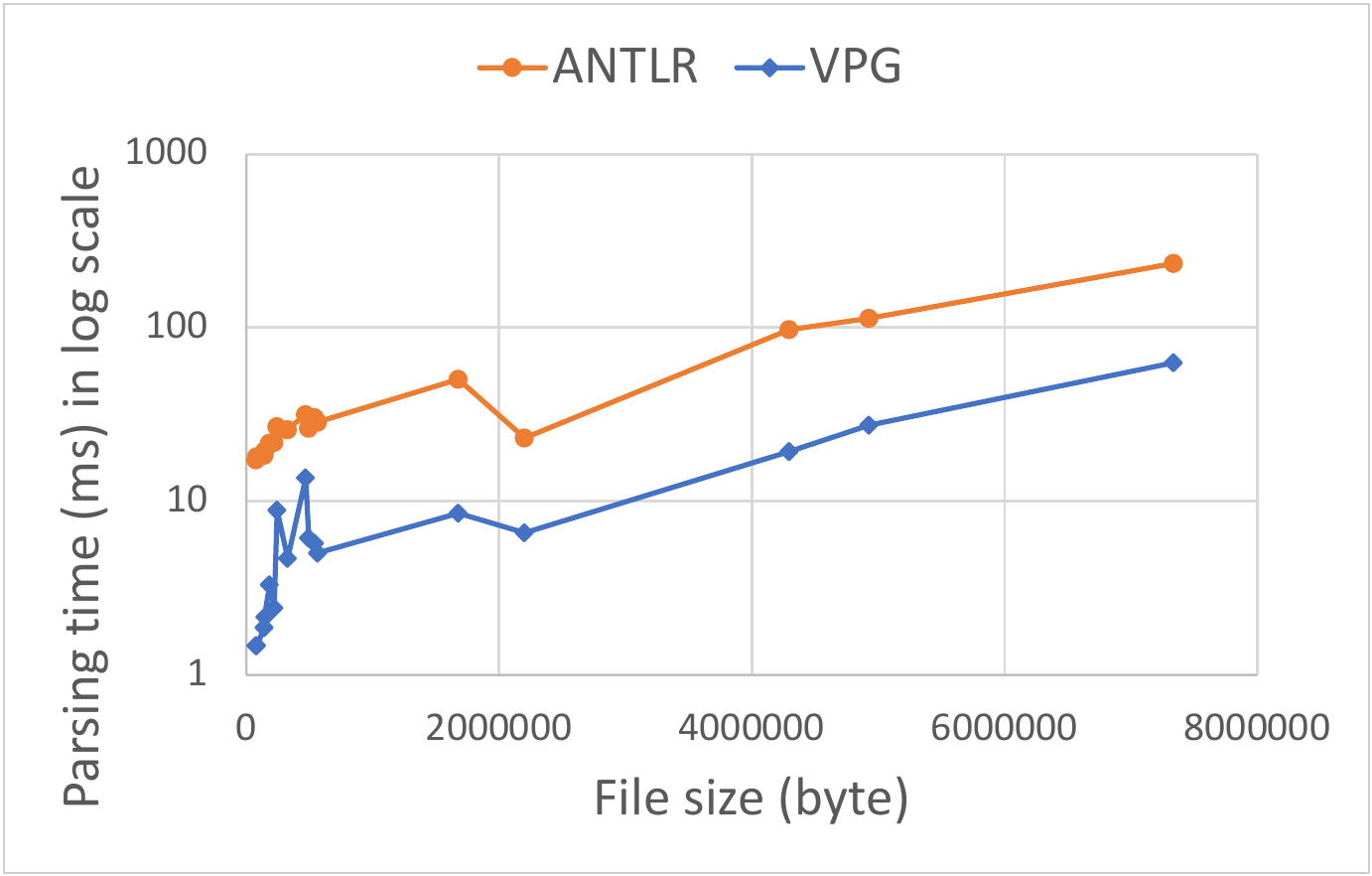}
    \end{minipage}\hfill
    \begin{minipage}{0.54\textwidth}
        \centering
        \begin{tabular}{crrrrr}
            \toprule
            Name	              & Size &   ANTLR     & VPG      & Conv \\
            \midrule
            JSON.parse            & 7.0 MB & 235 ms   & 63 ms & 61 ms \\
            airlines              & 4.7 MB & 113 ms   & 27 ms & 27 ms \\
            educativos            & 4.1 MB &  98 ms   & 19 ms & 19 ms \\
            canada                & 2.1 MB &  23 ms   & 15 ms & 20 ms \\
            citm\_catalog         & 1.6 MB &  50 ms   &  9 ms &  8 ms\\
            \bottomrule
        \end{tabular}
    \end{minipage}
    \caption{Parsing times of JSON files (in log scale on the left).}
    \Description{Parsing times of JSON files.}
    \label{fig:evalJSON}
\end{figure}

When building a VPG parser for JSON, we reused ANTLR's lexer.
Therefore, the evaluation steps are as follows.
\secondRound{$$\textit{Input file}\xrightarrow{\textit{ANTLR Lexer}}\textit{ANTLR tokens}
\xrightarrow{\textit{ANTLR Parser or VPG Parser}} \textit{Results}.$$}

\secondRound{For evaluation, we collected 23 real-world JSON files
from the awesome-json repository,
the nativejson benchmarks, and the \texttt{JSON.parse} benchmarks\footnote{
\url{https://github.com/jdorfman/awesome-json-datasets},
\url{https://github.com/miloyip/nativejson-benchmark}, and
\url{https://github.com/GoogleChromeLabs/json-parse-benchmark}}.
The sizes of the files range from 14 KB to 7 MB.
The parsing times are shown in Figure~\ref{fig:evalJSON};
  note that the y-axis of the left figure (and other figures in this
  section) is in the log scale for better visualization.  As can be
  seen, the VPG parser runs much faster than ANTLR.
  The right of Figure~\ref{fig:evalJSON} shows the VPG parsing times for the 5 largest
  files in our test set; the VPG column is the amount of time cost by running the parser
  and pruner PDAs. On those large files, VPG parsing is about 4 times faster than
  ANTLR. For smaller files, the gap is even larger; Appendix~\ref{app:eval} shows the results for the full test set.}

A downstream application that uses the ANTLR's JSON parser may
wish to keep working on the same parsing result produced by ANTLR's
parser. Therefore, we implemented a converter to convert the parse
forest produced by our VPG parser to ANTLR's parse tree for the input
files. When the grammar is unambiguous, \secondRound{which is the case for the
JSON grammar (as well as the XML and HTML grammars)}, the parse forest is really the encoding of a single
parse tree.  The algorithm of how to convert a VPG parse tree to a stack
machine and how to evaluate the stack machine have been discussed in
Section~\ref{sec:surface_grammar}.  The result of the evaluation 
is a structure that can be directly printed out and compared with; the
same applies to the ANTLR parse tree\footnote{By the ``ANTLR parse tree'',
  we mean the string output by the ANTLR parser with the option
  ``-tree''.}.
The conversion steps are summarized as follows.
\secondRound{$$
\textit{VPG parse tree}\xrightarrow{\textit{Embed actions}}  \textit{Stack machine}\xrightarrow{\textit{Evaluate}} \textit{ANTLR parse tree}.
$$}
Note that in practice this conversion may not
be necessary. A downstream application can directly work on the VPG
parse tree. We include the time to show the conversion time for our
VPG parser to work directly with legacy downstream applications.
\secondRound{The time of conversion is shown in 
the ``Conv'' column on the right hand side of Figure~\ref{fig:evalJSON}.}

\secondRound{\subsection{Comparison with ANTLR on parsing XML files}}
\label{sec:parse_xml}
\secondRound{XML also has a well-matched nesting structure with explicit
start-tags such as \texttt{<p>} and matching end-tags such as
\texttt{</p>}.
However, compared to JSON, there is an additional
complexity for the XML grammar, which makes it necessary to adapt the
XML grammar provided by ANTLR. In particular, the XML lexer in ANTLR
treats an XML tag as separate tokens; e.g., \texttt{<p>} is converted
into three tokens: \texttt{<}, \texttt{p}, and \texttt{>}. Those tokens
then appear in the ANTLR XML
grammar. Part of the reason
for this design is because the XML format allows additional
attributes within a tag; e.g., \texttt{<p id=1>} is a start-tag with an
attribute with name \texttt{id} and value 1.
Below is a snippet of the related XML grammar in ANTLR.}
\begin{verbatim}
element : 
  '<' Name attribute* '>' content '<' '/' Name '>' | '<' Name attribute* '/>' ;
\end{verbatim}
To expose the nesting structure within XML, we add an additional step 
between ANTLR lexing and VPG parsing.
$$\textit{Input file}\xrightarrow{\textit{ANTLR Lexer}}\textit{ANTLR tokens}
\xrightarrow{\textit{VPG Lexer}}
\textit{VPG tokens}
\xrightarrow{\textit{VPG Parser}} Results.$$

The step of VPG lexing coalesces tokens for a single XML tag into a 
single token.  For example,
\texttt{<p>} becomes a single token and is marked as a call symbol. 
For attributes inside tags, they are processed
and attached as tags' semantic values for the following parsing
step. The following shows a snippet of our adapted XML grammar.
\begin{verbatim}
element : <TagOpen content TagClose> | TagSingle ;
\end{verbatim}
The VPG tokens are declared as follows.
\begin{verbatim}
TagOpen = '<' Name attribute* '>' ;
TagClose = '<' '/' Name '>' ;
TagSingle = '<' Name attribute* '/>' ;
\end{verbatim}

\begin{figure}
    \begin{minipage}{0.48\textwidth}
        \centering
        \includegraphics[width=\textwidth]{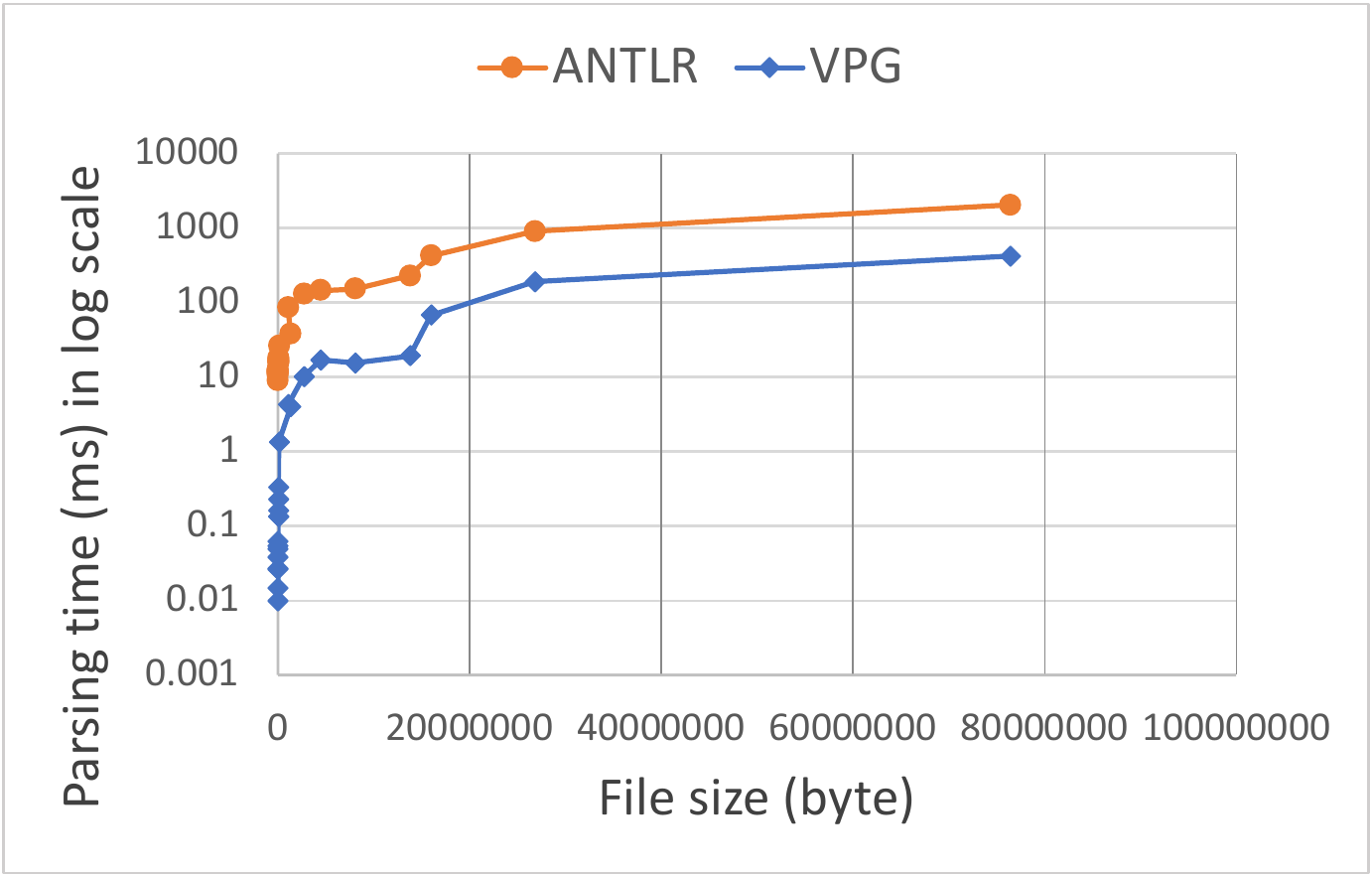}
    \end{minipage}\hfill
    \begin{minipage}{0.52\textwidth}
        \centering
        \begin{tabular}{crrrrr}
            \toprule
            Name	     & Size &	ANTLR    &     VPG  & Conv \\
            \midrule
            po       &  73 MB &  2058 ms  & 425 ms & 1070 ms \\
            cd       &  26 MB &   913 ms  & 192 ms & 455 ms \\
            address  &  15 MB &   429 ms  &  68 ms & 139 ms \\
            SUAS     &  13 MB &   232 ms  &  19 ms & 13 ms \\
            ORTCA    & 7.7 MB &   154 ms  &  16 ms & 6 ms \\
            \bottomrule
        \end{tabular}
    \end{minipage}\hfill
        \caption{Parsing times of XML files (in log scale on the left).}
        \Description{Parsing times of XML files.}
        \label{fig:evalXML}
\end{figure}

\secondRound{For evaluation, we used the real-world XML files provided by the
VTD-XML
benchmarks\footnote{\url{https://vtd-xml.sourceforge.io/2.3/benchmark_2.3_parsing_only.html}.},
which consist of a wide selection of 23 files ranging from 1K to
73MB.  The parsing times are presented in Figure~\ref{fig:evalXML};
Appendix~\ref{app:eval} shows the results for the full test
set.
Similar to JSON, VPG parsing on XML files is much faster than ANTLR parsing;
on the 5 largest XML files, VPG parsing is about 5 times faster; on smaller
files, the gap is even larger.}

\subsection{Comparison with ANTLR on parsing HTML files}\label{sec:parse_HTML}
A snippet of the HTML grammar in ANTLR is listed below:
\begin{verbatim}
htmlElement: 
  ‘<’ TAG_NAME htmlAttribute* (‘>’ (htmlContent ‘<’ ‘/’ TAG_NAME ‘>’)? | ‘/’ ‘>’ ) ;
htmlContent: htmlChardata? ((htmlElement | CDATA | htmlComment) htmlChardata?)* ;
\end{verbatim}
Similar to the XML grammar, the HTML grammar allows self-closing tags
such as \texttt{<br/>}. However, the HTML grammar in addition allows
optional end tags, which is not allowed in XML. For example, 
the HTML tag \texttt{<input type="submit" value="Ok">}
cannot have a matching end tag according to the HTML standard.
Although this kind of tags is also ``self-closing'', we will use the
terminology of optional end tags since that is how the official HTML5
standard describes it.  As will be shown in our experimental data, the
complexity in this grammar makes ANTLR's parsing of HTML files
extremely slow.

Similar to the XML case, we introduced a VPG lexer to coalesce tokens
for a single tag into a single token. However, the optional end-tags
introduce additional complexity. To explain, let us first examine
the relevant part of the VPG grammar:
\begin{verbatim}
htmlElement = TagPlain | <TagOpen htmlElement TagClose> | TagSingle ;
\end{verbatim}
The VPG tokens are declared as follows.
\begin{verbatim}
TagPlain = ‘<’ TAG_NAME htmlAttribute* ‘>’ ;
TagOpen = ‘<’ TAG_NAME htmlAttribute* ‘>’ ;
TagClose = ‘<’ ‘/’ TAG_NAME ‘>’ ;
TagSingle = ‘<’ TAG_NAME htmlAttribute* ‘/’ ‘>’ ;
\end{verbatim}
\texttt{TagPlain} is for HTML tags that cannot have matching end-tags, 
and \texttt{TagSingle} is for self-closing tags.  The VPG lexer
first merges ANTLR tokens related to a single tag, and then determines
which tags are call symbols, return symbols, and plain symbols. A
start tag with no matching end tag is marked as  a plain symbol in this
process. This is implemented with a straightforward method: the first
$k$ open HTML tags are matched with the last $k$ close HTML tags, and
the rest open HTML tags are viewed as plain symbols, where $k$ is the
number of close HTML tags in the file (the number of close tags is
always less than or equal to the number of open tags).

\begin{figure}[t]
    \begin{minipage}{0.5\textwidth}
        \centering
        \includegraphics[width=\textwidth]{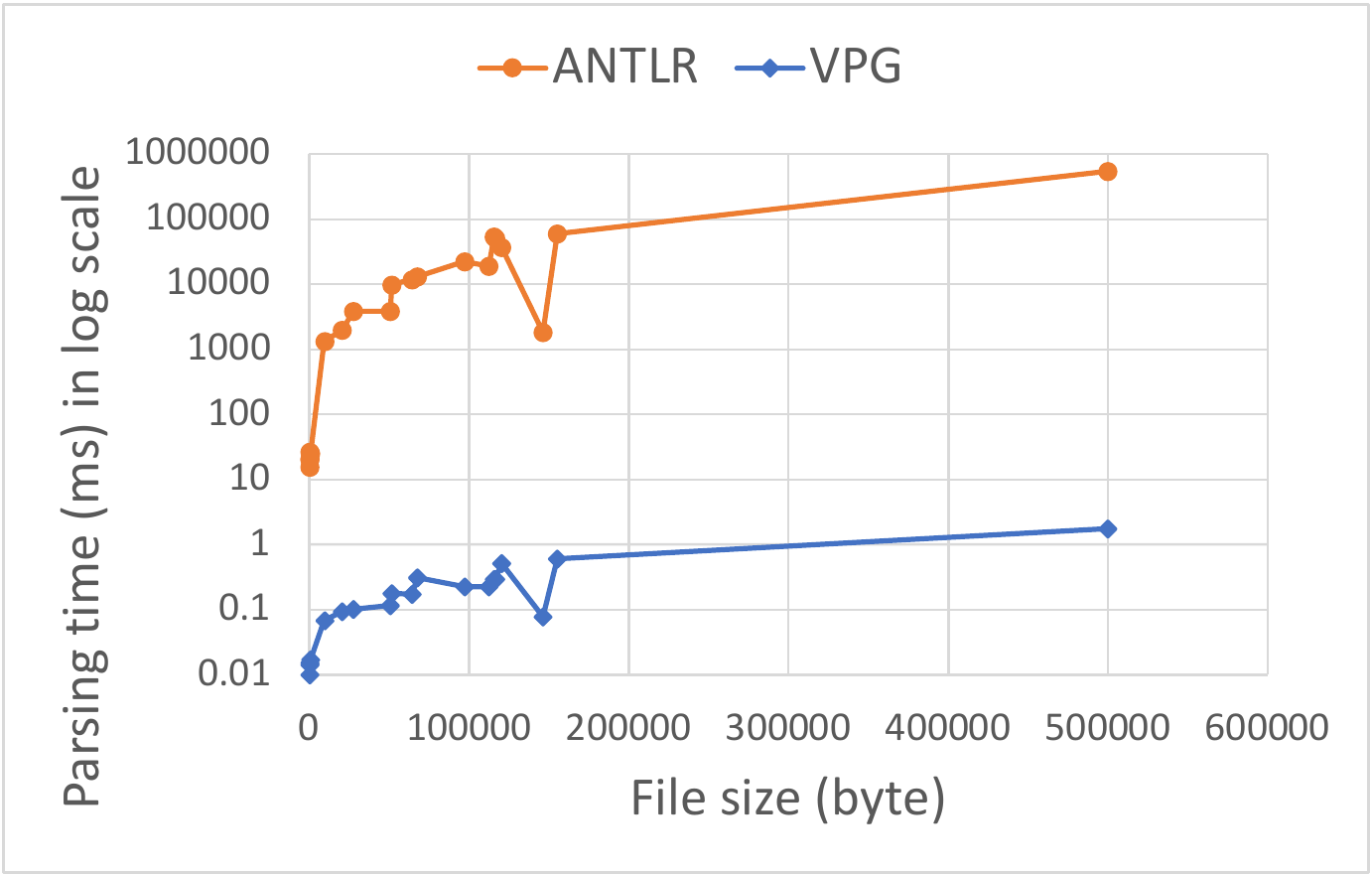}
    \end{minipage}\hfill
    \begin{minipage}{0.5\textwidth}
        \centering
        \begin{tabular}{crrrrr}
            \toprule
            Name     & Size & ANTLR   & VPG & Conv   \\
            \midrule
            youtube   & 489 KB &  543 s   & 1.8 ms &  2.3 ms \\
            digg      & 152 KB &   60 s   & 0.6 ms &  0.7 ms \\
            cnn1      & 118 KB &   37 s   & 0.5 ms &  0.5 ms \\
            reddit2   & 114 KB &   52 s   & 0.3 ms &  0.6 ms \\
            reddit    & 114 KB &   54 s   & 0.3 ms &  0.6 ms \\
            \bottomrule
        \end{tabular}
    \end{minipage} \hfill
        \caption{Parsing times of HTML files (in log scale on the left).}
        \Description{Parsing times of HTML files.}
        \label{fig:evalHTML}
\end{figure}

For evaluation, we used the 19 real-world HTML files provided in ANTLR's
repository\footnote{\url{https://github.com/antlr/grammars-v4/tree/master/html/examples}}.
The parsing times are presented in Figure~\ref{fig:evalHTML}.
The conversion times of the parse trees are shown in the “Conv” column.
As we can see, our VPG parser significantly outperforms the ANTLR
parser, with more than 4 orders of magnitude of difference.  
We emphasize that in our evaluation the VPG
parser and the ANTLR parser accept the same HTML files, and they
produce the same parse trees with the help of a converter.

From the ANTLR profiling tool,  we found that around 99\% time cost by ANTLR is in the prediction of
\begin{verbatim}
    (htmlContent TAG_OPEN TAG_SLASH TAG_NAME TAG_CLOSE)?
\end{verbatim}
in the rule of ``htmlElement'', which triggers many lookahead symbols, and also many DFA cache misses (around a miss rate of 90\%). Appendix~\ref{app:profile} shows the profiler result for ``bbc.com.html''.
This HTML evaluation shows the power of VPGs in
designing practical language parsers, due to their ability of linear-time parsing.

\paragraph{Summary of comparison with ANTLR}
Our performance evaluation shows that our VPG parsing library
generates parsers that run significantly faster than those generated
by ANTLR on grammars that can be converted to VPGs, such as JSON, XML,
and HTML.

\subsection{Comparison with hand-crafted parsers}

\begin{table}
\centering
\caption{Parsing times of 5 largest JSON files. ``SpiderM'' stands for ``SpiderMonkey'', and "JSCore"  for ``JavaScriptCore''.}
\label{tab:evalJSON2}
\begin{tabular}{crrrrrrrr}
    \toprule
    Name        & Size& \begin{tabular}{@{}c@{}}ANTLR \\ Lex\end{tabular} & \begin{tabular}{@{}c@{}}VPG \\ Parse\end{tabular}& Lex+Parse& SpiderM & JSCore& V8  & Chakra \\
    \midrule
JSON.parse  & 7.0 MB & 130    ms & 63     ms & 193    ms & 118      ms & 139        ms & 76 ms& 88 ms \\
   airlines & 4.7 MB & 81 ms & 27 ms & 108 ms & 74 ms & 95 ms & 42 ms & 56 ms \\
educativos  & 4.1 MB & 108    ms & 19     ms & 128    ms & 71       ms & 421        ms & 45 ms& 49 ms \\
canada      & 2.1 MB & 45     ms & 15     ms & 60     ms & 57       ms & 68         ms & 34 ms& 44 ms \\
citm\_catalog & 1.6 MB & 39     ms & 9      ms & 47     ms & 34       ms & 71         ms & 28 ms& 25 ms \\
    \bottomrule
\end{tabular}
\end{table}
\begin{table}
\caption{Parsing times of 5 largest XML files. ``HP2'' stands for ``HTMLParser2''.}
\label{tab:evalXML2}
\begin{tabular}{crrrrrrrr}
    \toprule
    Name        & Size& \begin{tabular}{@{}c@{}}ANTLR \\ Lex\end{tabular}       &  \begin{tabular}{@{}c@{}}VPG \\ Parse\end{tabular}   & Lex+Parse     & Fast-XML& Libxmljs& SAX-JS  & HP2 \\
    \midrule
    po     & 73 MB & 1812 ms& 425 ms& 2238 ms& 3278 ms& 897 ms& 6618 ms& 1827 ms  \\
    cd     & 26 MB & 732 ms & 192 ms& 924 ms & 1298 ms& 419 ms& 2103 ms& 735 ms  \\
    address& 15 MB & 367 ms & 68 ms & 435 ms & 584 ms & 196 ms& 1012 ms& 331 ms \\
    SUAS   & 13 MB & 237 ms & 19 ms & 256 ms & 254 ms & 182 ms& 1214 ms& 169 ms \\
    ORTCA  & 7.7 MB& 142 ms & 16 ms & 157 ms & 138 ms & 91 ms & 665 ms & 89 ms \\
    \bottomrule
\end{tabular}
\end{table}

We also compared VPG parsers with hand-crafted parsers for
JSON and XML documents.
For JSON, we compared with four mainstream JavaScript engines 
(V8, Chakra, JavaScriptCore, and SpiderMonkey) and evaluated them on
the JSON files in Section~\ref{sec:parse_JSON}.
For XML, we compared with four popular XML parsers (fast-xml-parser, libxmljs, sax-js, and htmlparser2)
\footnote{\url{https://github.com/NaturalIntelligence/fast-xml-parser\#readme}, 
\url{https://github.com/libxmljs/libxmljs},
\url{https://github.com/isaacs/sax-js}, and
\url{https://github.com/fb55/htmlparser2}.}, and evaluated them with
the XML files in Section~\ref{sec:parse_xml}.

The evaluation results for the largest files are shown in Table
\ref{tab:evalJSON2} and Table \ref{tab:evalXML2}; the full results are
in Appendix~\ref{app:eval}. Note that the hand-crafted parsers can
process raw texts directly, while our VPG parsers process the tokens
generated by ANTLR's lexers.  Therefore, we show separately the lexing
time of ANTLR (column ``ANTLR Lex''), the parsing time of VPG parsing
(column ``VPG Parse''), and the combined time (column ``Lex+Parse'').
From the results, we can see that although the total time of VPG
parsing is not the shortest among all parsers, the parsing time alone
is.  Thus, VPG parsers show promising potential in performance, with
additionally verified correctness over hand-crafted parsers.
The total parsing time can be reduced by replacing ANTLR's lexer
with a faster, customized lexer, since the parsing time of VPG is
shorter than the lexing time.  Also, combining the lexing and parsing
steps, as is common in hand-crafted parsers, can usually improve the
overall time.

\section{Limitations and Future Work}
\secondRound{As noted earlier, the correctness of our VPG-based parser
  generator is verified in Coq.  Correctness means that if the
  generated parser constructs a parse tree, it must be a valid parse
  tree according to the input VPG, and vice versa. However, 
  there are gaps between our VPG parser
  generator's Coq formalization of and its implementation in OCaml. First,
  the implementation takes tagged CFGs as input and translates
  tagged CFGs to VPGs; this translation algorithm has not been
  formally modeled and verified in Coq. Second, the
  implementation uses efficient data structures for performance, while
  their Coq models use equivalent data structures that are slower but
  easier for reasoning. For example, the OCaml implementation 
  uses hash tables for storing transition tables of the two PDAs to
  have efficient search (with $O(1)$ search complexity), while the Coq
  counterpart uses a balanced tree (with $O(\log(n))$ search
  complexity) provided as a Coq library. These gaps prevent us from
  directly extracting OCaml code from the Coq formalization.}

\secondRound{Our parsing algorithm requires a VPG as the input
  grammar. Compared to a CFG, a VPG requires partitioning terminals
  into plain, call, and return symbols. Some CFGs may not admit such
  kind of partitioning; the same terminal may require different stack
  actions for different input strings. In particular, all languages
  recognized by VPGs belong to the set of deterministic context-free
  languages, which is a strict subset of context-free languages (the
  classic example that separates CFL from DCFL is
  $\predset{a^ib^jc^k}{i \neq j \lor j \lor k}$).
  We plan to extend our preliminary study on ANTLR
  grammars to understand how much of the syntax of practical computer
  languages (e.g., programming languages and file formats) can be
  described by VPGs.

  The parsing in our VPG framework is performed in three steps:
  construction of a parse forest (with possible invalid edges),
  pruning of the parse forest, and extraction of parse trees. This
  design simplifies formal verification but adds a pruning step.  We
  believe the pruning step can be possibly removed by redesigning our
  parser generator, where the recognizer is extended to a transducer
  that generates semantic actions and the execution of those
  semantic actions builds the parse forest directly.}

The translation algorithm from tagged CFGs to VPGs is sound but not
complete. In general, it is an open problem to determine whether a CFG
can be translated to a VPG, and to infer the call and return symbols
automatically.

\section{Conclusions}
In this paper, we present a recognizer and a formally verified parser
generator for visibly pushdown grammars.  The parsing algorithm is
largely enlightened by the recognizer, with several trade-offs to
simplify the structure and reduce the burden of formal verification.
We also provide a surface grammar called tagged CFGs and a translator
from tagged CFGs to VPGs.  We show that when a format can be modeled
by a VPG and its call and return symbols can be identified, VPG
parsing provides competitive performance and sometimes a significant
speed-up.

\section*{Acknowledgment}
The authors would like to thank anonymous reviewers for their
insightful comments. This work was supported by DARPA research grant
HR0011-19-C-0073.

\bibliographystyle{ACM-Reference-Format}
\bibliography{paper}

\appendix

\section{Correctness proofs of the recognizer} \label{app:correct_recognizer}

\begin{lemma}
\label{lemma1}
    If $(S_1,T)\leadsto w \land S_1\subseteq S_2$, then
    $(S_2,T)\leadsto w$.
\end{lemma}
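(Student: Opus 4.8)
The plan is to prove the statement by a case analysis on the shape of the stack $T$, which is equivalent to inspecting which of the two clauses of Definition~\ref{def:semOfVPAStats} justifies $(S_1,T)\leadsto w$. The key observation is that in both clauses the dependence on the current state $S$ enters only through an \emph{existential} over a pair drawn from $S$; enlarging $S$ to a superset $S_2$ therefore preserves every witness, and the remaining conditions are untouched.

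First I would treat the case $T=\bot$. Here $(S_1,\bot)\leadsto w$ holds by clause~(1), so there is a pair $(L_1,L_2)\in S_1$ with $L_2\deriv w$. Since $S_1\subseteq S_2$, the same pair satisfies $(L_1,L_2)\in S_2$, and clause~(1) immediately yields $(S_2,\bot)\leadsto w$.

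Next I would treat the case $T=[S',\ac]\cdot T'$, so that $w$ has the form $w_1\bc w_2$ and $(S_1,T)\leadsto w$ holds by clause~(2). This supplies a pair $(L_3,L_4)\in S_1$ with $L_4\deriv w_1$, together with the side condition~(2b): the existence of $(L_1,L_2)\in S'$ and $L_5$ with $(L_2\ra\br{L_3}L_5)\in P$ and $(\{(L_1,L_5)\},T')\leadsto w_2$. Again $(L_3,L_4)\in S_2$ because $S_1\subseteq S_2$, while condition~(2b) refers only to the stack-top state $S'$ and the recursive configuration $(\{(L_1,L_5)\},T')$, neither of which mentions the enlarged state $S$; hence it carries over verbatim. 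Applying clause~(2) to $S_2$ then gives $(S_2,T)\leadsto w$.

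I do not anticipate any real obstacle, and in particular the proof needs no induction. Although Definition~\ref{def:semOfVPAStats} is recursive in the stack, the recursive appeal in clause~(2) is made on the fixed singleton state $\{(L_1,L_5)\}$, which the subset hypothesis leaves alone, so a single case split on $T$ suffices. The only point worth stating carefully is exactly this: monotonicity is required only for the \emph{outermost} state, so we never have to strengthen an induction hypothesis to quantify over the states appearing deeper in the recursion.
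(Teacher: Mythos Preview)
Your proposal is correct and mirrors the paper's proof almost exactly: both perform a case split on whether $T=\bot$ or $T=[S',\ac]\cdot T'$, and in each case observe that the existential witness $(L_1,L_2)$ or $(L_3,L_4)$ drawn from $S_1$ also lies in $S_2$, while the remaining conditions (including the recursive call on the singleton $\{(L_1,L_5)\}$) are unaffected. Your explicit remark that no induction is needed because the recursion bottoms out on a fixed singleton state is a nice clarification that the paper leaves implicit.
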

\begin{proof}
    By Definition, if $T=\bot$, then
    $$(S_1,T)\leadsto w\Ra \exists (L_1,L_2)\in S_1,L_2\deriv  w.$$
    Since $S_1\subseteq S_2$, we have $(L_1,L_2)\in S_2$, so $(S_2,T)\leadsto w$.

    Otherwise $T=[S',\ac ]\cdot T'$, then $w=w_1 \bc  w_2 $, and 
              $\exists (L_3,L_4)\in S_1$ s.t.
              \begin{enumerate}
                  \item $L_4\deriv  w_1$ and
                  \item $\exists (L_1,L_2)\in S',\exists L_5, L_2\ra \br{L_3}L_5 \land (\{(L_1,L_5)\},\ T')\leadsto w_2$.
              \end{enumerate}
    Again, since $S_1\subseteq S_2$, we have $(L_3,L_4)\in S_2$,
    so $(S_2,T)\leadsto w$.
\end{proof}

\begin{lemma}
\label{lemma2}
    If $L_2\deriv  w_1L_3$ and $(\{(L_1,L_3)\},T)\leadsto w$,
    then $(\{(L_1,L_2)\},T)\leadsto w_1w$.
\end{lemma}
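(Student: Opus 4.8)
The plan is to proceed by case analysis on the stack $T$, and in each case to use transitivity of the derivation relation $\deriv$ to merge the given derivation $L_2 \deriv w_1 L_3$ with the way in which $(\{(L_1,L_3)\},T)$ accepts $w$. The crucial observation is that replacing the second component of the pair from $L_3$ by $L_2$ only affects the ``left'' part of the acceptance (the terminal prefix matched by the pair's second nonterminal), while the ``right'' part (governed by the first component $L_1$ and by the stack) is left completely untouched.

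First I would handle the case $T=\bot$. Here Definition~\ref{def:semOfVPAStats} gives $(\{(L_1,L_3)\},\bot)\leadsto w$ iff $L_3\deriv w$. Combining this with the hypothesis $L_2\deriv w_1 L_3$ and continuing the derivation at the $L_3$ occurrence yields $L_2\deriv w_1 L_3 \deriv w_1 w$, so that $(\{(L_1,L_2)\},\bot)\leadsto w_1 w$ holds directly.

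Next I would handle the case $T=[S',\ac]\cdot T'$. Unfolding the second clause of Definition~\ref{def:semOfVPAStats} for $(\{(L_1,L_3)\},[S',\ac]\cdot T')\leadsto w$ produces a split $w=v_1\bc v_2$ together with (a) $L_3\deriv v_1$ and (b) a pair $(N_1,N_2)\in S'$ and a nonterminal $N_3$ with $(N_2\ra\br{L_1}N_3)\in P$ and $(\{(N_1,N_3)\},T')\leadsto v_2$. To establish $(\{(L_1,L_2)\},[S',\ac]\cdot T')\leadsto w_1 w$, I would exhibit the split $w_1 w=(w_1 v_1)\,\bc\, v_2$. Condition (a) for this new split, namely $L_2\deriv w_1 v_1$, follows by transitivity from $L_2\deriv w_1 L_3$ and $L_3\deriv v_1$; and condition (b) is literally the same statement as before, because the first component of the pair is still $L_1$ and neither $S'$, the rule $N_2\ra\br{L_1}N_3$, nor the residual acceptance $(\{(N_1,N_3)\},T')\leadsto v_2$ depends on the second component. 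Hence both conditions hold and the goal follows.

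I expect the only point needing care to be this re-splitting step in the nonempty-stack case: one must check that prepending $w_1$ to the accepted string is compatible with the existentially quantified split $w=v_1\bc v_2$ from the definition. Since that clause quantifies existentially over the split and does not require $\bc$ to be, say, the first return symbol of the string, reusing the original $\bc$ and enlarging the matched prefix from $v_1$ to $w_1 v_1$ is immediately legitimate, and no induction on $T'$ is required. This reduces the whole argument to a two-case unfolding of the semantics followed by a single use of transitivity of $\deriv$.
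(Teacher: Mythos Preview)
Your proposal is correct and follows essentially the same route as the paper's own proof: a two-case analysis on whether $T=\bot$ or $T=[S',\ac]\cdot T'$, in each case unfolding Definition~\ref{def:semOfVPAStats} and using transitivity of $\deriv$ to extend the prefix matched by the pair's second component from $L_3$ to $L_2$, while leaving the part governed by the first component $L_1$ and the stack untouched. Your extra remark about why re-splitting $w_1w=(w_1v_1)\,\bc\,v_2$ is legitimate (the split being existentially quantified) is a helpful clarification that the paper leaves implicit.
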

\begin{proof}
    By definition, if $T=\bot$, then
    $$(\{(L_1,L_3)\},\bot)\leadsto w \mbox{ implies } L_3\deriv  w.$$
    Thus $L_2\deriv  w_1L_3$ implies  $L_2\deriv  w_1w$.
    By definition, $(\{(L_1,L_2)\} \bot)\leadsto w_1w$.

    Otherwise $T=[S',\ac ]\cdot T'$, then $w=w' \bc  w'' $,
              \begin{enumerate}
                  \item $L_3\deriv  w'$ and
                  \item $\exists (L',L'')\in S',\exists L_5, L''\ra \br{L_1}L_5 \land (\{(L',L_5)\},\ T')\leadsto w''$.
              \end{enumerate}

    Thus $L_2\deriv  w_1L_3$ implies $L_2\deriv  w_1w'$.
    By definition, $(\{(L_1,L_2)\},T)\leadsto w_1w$.
\end{proof}

\begin{lemma}
\label{lemma3}
    If $(S,T)\leadsto w$, then $\exists (L_1,L_2)\in S$,
    s.t. $(\{(L_1,L_2)\}, T)\leadsto w$.
\end{lemma}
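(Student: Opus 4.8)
The plan is to prove the statement by a direct case analysis on the shape of the stack $T$, following the two clauses of Definition~\ref{def:semOfVPAStats}. The key observation is that the defining condition for $(S,T)\leadsto w$ already begins with an existential quantifier over a pair in $S$; once that witness pair is fixed, shrinking $S$ down to the singleton consisting of exactly that pair preserves every remaining condition, because those conditions refer only to the witness pair together with the stack, and never to the rest of $S$. In particular, I do not expect to need any induction on the stack.

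First I would handle the base case $T=\bot$. Unfolding the hypothesis $(S,\bot)\leadsto w$ yields a pair $(L_1,L_2)\in S$ with $L_2\deriv w$. I would take the singleton $\{(L_1,L_2)\}$; since $(L_1,L_2)$ is trivially a member of it and the derivation $L_2\deriv w$ is unchanged, the first clause of the definition gives $(\{(L_1,L_2)\},\bot)\leadsto w$ immediately.

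Next, in the case $T=[S',\ac]\cdot T'$, unfolding the hypothesis gives $w=w_1\bc w_2$ together with a witness pair $(L_3,L_4)\in S$ satisfying two sub-conditions: $L_4\deriv w_1$, and the existence of $(L_1,L_2)\in S'$ and $L_5$ with $(L_2\ra\br{L_3}L_5)\in P$ and $(\{(L_1,L_5)\},T')\leadsto w_2$. I would again pass to the singleton $\{(L_3,L_4)\}$ and re-apply the second clause of the definition with $(L_3,L_4)$ as its witness pair; both sub-conditions are then verbatim the ones supplied by the hypothesis. The one point that warrants care is that the recursive acceptance $(\{(L_1,L_5)\},T')\leadsto w_2$ depends only on the stack-top state $S'$ and the stack tail $T'$, and not on $S$, so it transfers unchanged. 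Recognizing this is really the only subtlety: since the argument merely reuses the witnesses already produced by the definition, there is no substantive obstacle, and the temptation to recurse on the stack can be avoided once one checks that the nested condition does not mention $S$.
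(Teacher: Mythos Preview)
Your proposal is correct and follows essentially the same approach as the paper: a direct case split on whether $T=\bot$ or $T=[S',\ac]\cdot T'$, observing that the existential witness in $S$ given by Definition~\ref{def:semOfVPAStats} can be taken as the required singleton because the remaining conditions do not refer to the rest of $S$. Your write-up is simply more explicit than the paper's terse ``by definition again we have the lemma.''
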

\begin{proof}
    By definition, if $T=\bot$, then
    $$(S,T)\leadsto w\Ra \exists (L_1,L_2)\in S,L_2\deriv  w.$$
    Then by definition again we have the lemma.

    Otherwise $T=[S',\ac ]\cdot T'$ and $w=w_1 \bc  w_2 $. By definition again we have the lemma.
\end{proof}

\begin{theorem}
\label{thPlain}
    Assume $\delta_c(S)=(S',\lambda T.T)$ for a plain symbol $c$. Then
    $(S,T)\leadsto cw$ iff $(S',T) \leadsto w$.
\end{theorem}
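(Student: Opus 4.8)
The plan is to prove both directions of the biconditional by a case analysis on the shape of the stack $T$, namely whether $T=\bot$ or $T=[S_1,\ac]\cdot T'$ for some state $S_1$, call symbol $\ac$, and tail $T'$. The whole argument rests on one structural observation about VPG derivations: because the only rule shape that can emit a leading plain symbol is $L\ra cL''$, a derivation $L\deriv cw'$ holds iff there is a nonterminal $L''$ with $(L\ra cL'')\in P$ and $L''\deriv w'$. A rule $L\ra\br{L_1}L_2$ would force a leading $\ac$, and $L\ra\epsilon$ forces the empty string, so neither can begin a derivation of $cw'$. I would establish this first-step inversion as a short preliminary fact, since it is exactly where the restricted VPG rule forms are used.

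For the empty-stack case $T=\bot$, the semantics reduces everything to a single pair. In the forward direction, $(S,\bot)\leadsto cw$ yields $(L_1,L_2)\in S$ with $L_2\deriv cw$; the inversion fact produces $(L_2\ra cL_3)\in P$ with $L_3\deriv w$, and then the definition of $\delta_c$ puts $(L_1,L_3)\in S'$, giving $(S',\bot)\leadsto w$. The backward direction runs the same equivalence in reverse: from $(L_1,L_3)\in S'$ I recover the witnessing $L_2$ with $(L_1,L_2)\in S$ and $(L_2\ra cL_3)\in P$, and prepending $c$ to the derivation of $L_3$ (the converse half of the inversion fact) gives $L_2\deriv cw$, hence $(S,\bot)\leadsto cw$.

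For the nonempty-stack case $T=[S_1,\ac]\cdot T'$, I would first note that both $cw$ and $w$ must be written in the form $w_1\bc w_2$ demanded by the semantics, and that the $\bc w_2$ tail is shared on both sides: writing $cw=w_1\bc w_2$, the prefix $w_1$ cannot be empty since $c\neq\bc$, so $w_1=cw_1''$ and $w=w_1''\bc w_2$. The semantics then isolates a pair $(L_3,L_4)\in S$ with $L_4\deriv w_1$ together with a matching/continuation witness (a pair $(L_1,L_2)\in S_1$, a rule $L_2\ra\br{L_3}L_5$, and $(\{(L_1,L_5)\},T')\leadsto w_2$). The key point is that this entire matching/continuation witness, which governs how $\bc w_2$ is consumed, is untouched when $c$ is stripped; only the first component changes. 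Applying the inversion fact to $L_4\deriv cw_1''$ and the definition of $\delta_c$ exchanges the pair $(L_3,L_4)\in S$ for a pair $(L_3,L_6)\in S'$ with $L_6\deriv w_1''$ (and symmetrically in the backward direction), which is exactly what the semantics of the two configurations require.

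I expect the only real obstacle to be the first-step inversion fact, together with the small care needed to argue $w_1\neq\epsilon$ so that the leading $c$ can actually be extracted in the nonempty-stack case. Everything else is routine unfolding of the semantics and the definition of $S'$. If I prefer to phrase the computation through singleton states rather than directly through the existential quantifier in the semantics, Lemma~\ref{lemma2} (instantiated with $w_1=c$) packages the ``prepend $c$'' step cleanly, and Lemma~\ref{lemma1} lifts a single-pair conclusion back to the full state.
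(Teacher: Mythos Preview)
Your proposal is correct and your forward direction matches the paper's proof exactly (case split on $T$, first-step inversion on the VPG rule forms, then the definition of $\delta_c$). The only genuine difference is in the backward direction: your primary plan is to repeat the case analysis on $T$ and unfold the semantics directly, whereas the paper handles $\Leftarrow$ uniformly---without splitting on the stack---by chaining Lemma~\ref{lemma3} (extract a single witnessing pair $(L_1,L_3)\in S'$ and pass to the singleton state), the definition of $\delta_c$ (recover $(L_1,L_2)\in S$ with $L_2\ra cL_3$), Lemma~\ref{lemma2} with $w_1=c$ (prepend $c$ at the singleton level), and Lemma~\ref{lemma1} (lift back to $S$). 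You already anticipate the Lemma~\ref{lemma1}/\ref{lemma2} packaging at the end; adding Lemma~\ref{lemma3} as the first step is what lets the paper avoid re-doing the nonempty-stack bookkeeping. Your direct route works too because $\delta_c$ preserves the first component of each pair, so the matching witness $(L_1,L_2)\in S_1$ with $L_2\ra\br{L_3}L_5$ transfers unchanged; the lemma-based route simply amortizes that observation across all three derivative-correctness theorems.
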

\begin{proof}
    $\Ra$. By case over $(S,T)\leadsto cw$.
    \begin{enumerate}
        \item $T=\bot$ and $(S,\bot)\leadsto cw$. By definition,
              $\exists (L_1,L_2)\in S \land L_2\deriv cw$.
          By the forms of VPG rules, we must have
          \[\exists L_3, L_2\ra cL_3\ \land\ L_3\deriv w.\]
              Since $(L_1,L_2)\in S \land L_2\ra cL_3$, by the definition of $\delta_c$,
              and we have $(L_1,L_3)\in S'$,
              thus since $L_3\deriv w$ we have $(S',\bot)\leadsto w$.
        \item $T=[S_1,\ac ]\cdot T'$ and $(S, [S_1,\ac ]\cdot T')\leadsto cw$. By definition,
              $w=w_1 \bc  w_2 $, and
              $\exists (L_3,L_4)\in S$ s.t.
              \begin{enumerate}
                  \item $L_4 \deriv  cw_1$ and
                  \item $\exists (L_1,L_2)\in S_1,\exists L_5, L_2\ra \br{L_3}L_5 \land (\{(L_1,L_5)\},\ T')\leadsto w_2$.
              \end{enumerate}

              Since $L_4\deriv cw_1$, we have $\exists L_4'$, $L_4\ra cL_4'\ \land\
                  L_4'\deriv w_1$, thus $(L_3,L_4')\in S'$, and
              $(S',T)\leadsto w_1\bc  w_2=w$.
    \end{enumerate}
    $\La$.

    By Lemma~\ref{lemma3}, $$\exists (L_1,L_3)\in S'\ \st\
        (\{(L_1,L_3)\},T)\leadsto w.$$
    Thus, by the definition of derivatives, $\exists L_2, (L_1,L_2)\in S \land L_2\ra cL_3$.
    By Lemma~\ref{lemma2}, $(\{(L_1,L_2)\},T)\leadsto cw$.
    By Lemma~\ref{lemma1}, $(S,T)\leadsto cw$.
\end{proof}

\begin{theorem}
\label{thOpen}
    Assume $\delta_{\ac}S=(S',\lambda T.[S,\ac ]\cdot T)$ for a call symbol $\ac$.
    Then $(S,T)\leadsto \ac w$ iff $(S', [S,\ac ]\cdot T)\leadsto w$.
\end{theorem}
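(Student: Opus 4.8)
The plan is to prove the two directions separately, following the template of the plain-symbol case in Theorem~\ref{thPlain}: I would establish the backward direction ($\La$) using the auxiliary Lemmas~\ref{lemma1}--\ref{lemma3}, and the forward direction ($\Ra$) by unfolding the semantics of acceptance and doing case analysis on the stack.

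For the backward direction, assume $(S',[S,\ac]\cdot T)\leadsto w$. Unfolding the semantics for a nonempty stack whose top is $[S,\ac]$ splits $w=w_1\bc w_2$ and supplies a pair $(L_3,L_3)\in S'$ (every pair in $S'$ is diagonal by the definition of $\delta_{\ac}$) with $L_3\deriv w_1$, together with a pair $(L_c,L_d)\in S$, a nonterminal $L_5$, a rule $L_d\ra\br{L_3}L_5$, and $(\{(L_c,L_5)\},T)\leadsto w_2$. From the rule and $L_3\deriv w_1$ I obtain the derivation $L_d\deriv\ac w_1\bc L_5$. Applying Lemma~\ref{lemma2} to this derivation and to $(\{(L_c,L_5)\},T)\leadsto w_2$ yields $(\{(L_c,L_d)\},T)\leadsto\ac w_1\bc w_2=\ac w$, and finally Lemma~\ref{lemma1} together with $(L_c,L_d)\in S$ lifts it to $(S,T)\leadsto\ac w$. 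This direction is essentially mechanical once the three lemmas are available.

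For the forward direction, assume $(S,T)\leadsto\ac w$. I would first apply Lemma~\ref{lemma3} to obtain a single witnessing pair $(L_1,L_2)\in S$ with $(\{(L_1,L_2)\},T)\leadsto\ac w$, and then split on whether $T$ is empty. If $T=\bot$, then $L_2\deriv\ac w$; since $\ac$ is a call symbol, the shape of VPG rules forces the first rule used to be a matching rule $L_2\ra\br{L_3}L_4$, so $w=w_1\bc w_2$ with $L_3\deriv w_1$ and $L_4\deriv w_2$. The definition of $\delta_{\ac}$ then places $(L_3,L_3)\in S'$, and I verify $(S',[S,\ac])\leadsto w$ directly from the semantics, using $(L_3,L_3)$ for the state component and $(L_1,L_2)$ with the rule $L_2\ra\br{L_3}L_4$ for the stack-top component. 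If instead $T$ has a top $[S_1,\callsym{a'}]$, unfolding the semantics splits $\ac w=u_1\retsym{b'}u_2$ and yields $L_2\deriv u_1$; since $u_1$ begins with the call symbol $\ac$, I again peel off a leading matching rule $L_2\ra\br{L_3}L_4$, which places $(L_3,L_3)\in S'$, and then reassemble acceptance of $w$ by $(S',[S,\ac]\cdot T)$, reusing the unchanged lower stack $T$ and the same witness in $S_1$ that matched $\retsym{b'}$.

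I expect the nonempty-stack case of the forward direction to be the main obstacle: it requires peeling a leading matching rule out of a derivation of the form $L_2\deriv\ac\cdots$ and then carefully re-threading the nested string decompositions (the outer split at $\retsym{b'}$ and the inner split produced by the matching $\bc$) and all the nonterminal bindings through the acceptance definition after an extra stack frame has been pushed. The backward direction, by contrast, is short, since Lemmas~\ref{lemma1}--\ref{lemma3} do almost all of the work.
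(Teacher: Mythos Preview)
Your proposal is correct and follows essentially the same route as the paper's proof: the backward direction unfolds the acceptance semantics at the pushed frame $[S,\ac]$, uses the rule $L_d\ra\br{L_3}L_5$ together with Lemma~\ref{lemma2} (and Lemma~\ref{lemma1} to pass from the singleton to $S$), while the forward direction does a case split on the stack and peels off a leading matching rule from the derivation beginning with $\ac$. The only cosmetic difference is that you invoke Lemma~\ref{lemma3} up front to reduce to a singleton before the case split, whereas the paper extracts the witnessing pair directly from the definition in each case; both organizations work.
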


\begin{proof}
    $\Ra$. By case over $(S,T)\leadsto \ac w$.
    \begin{enumerate}
        \item $T=\bot$ and $(S,\bot)\leadsto \ac w$. By definition $\exists (L_1,L_2)\in S \land
                  L_2\deriv  \ac w$. Thus from the forms of VPG rules we have
              $$\exists L_3,L_4,\bc \ \st L_2\ra \brp{a}{b}{L_3}L_4\deriv  \ac w.$$
              Thus $\exists w_1,w_2$,
              s.t. $L_3\deriv w_1\land L_4\deriv w_2\land
                  w=w_1\bc  w_2$.

              By definition, $$(\{(L_1,L_4),\bot\})\leadsto w_2,$$
              thus $(\{(L_3,L_3)\},[S,\ac ]\cdot\bot)\leadsto
                  w_1\bc  w_2$.
              Since $(L_3,L_3)\in\partial_{\ac}S$, we have
              $(\partial_{\ac}S, [S,\ac ]\cdot\bot)\leadsto
                  w_1\bc  w_2=w$.
                  \item $T=[S_1,\langle c]\cdot T'$ and $(S,[S_1,\langle c]\cdot T')\leadsto \ac w$. Then by definition we have $w=w_1d\rangle w_2$, and

              $\exists (L_3,L_4)\in S$ s.t.
              \begin{enumerate}
                  \item $L_4\deriv  \ac w_1$ and
                  \item $\exists (L_1,L_2)\in S_1,\exists L_5, L_2\ra \brp{c}{d}{L_3}L_5 \land (\{(L_1,L_5)\},\ T')\leadsto w_2$.
              \end{enumerate}

              Thus $\exists L_6,L_7$ s.t.
              $L_4\ra\br{L_6}L_7$, $L_6\deriv w_{11}$, $L_7\deriv w_{12}$ and
              $w_1=w_{11}\bc  w_{12}$.
              Thus by definition, $(L_6,L_6)\in\partial_{\ac}S$,
              so
              $$(\partial_{\ac}S, [S,\ac ]\cdot[S_1, \langle c] \cdot T')\leadsto w_{11}\bc  w',$$
              where $w'$ satisfies $(\{(L_3,L_7)\},[S_1, \langle c] \cdot T')\leadsto w'$.
              From $(\{(L_1,L_5)\},T')\leadsto w_2 $, we know $w'$ can be $w_{12}d\rangle w_2$,
              so we have
              $$(\partial_{\ac}S, [S,\ac ]\cdot T)\leadsto w_{11} \bc  w_{12}d\rangle w_2=w.$$
    \end{enumerate}

    $\La$.
    By definition, $\exists(L_3,L_3)\in S'$, s.t. $w=w_1\bc  w_2$, and
    \begin{enumerate}
        \item
              $\exists(L_1,L_2)\in S\ \st L_3\deriv  w_1 \land L_2\ra \brp{a}{b}{L_3}L_5$.
        \item
              $(\{(L_1,L_5)\},T)\leadsto w_2$.
    \end{enumerate}

    Since $L_2\ra\brp{a}{b}{L_3}L_5 \land
        (\{(L_1,L_5)\},T)\leadsto w_2$, by Lemma~\ref{lemma2}, we have
    $(\{L_1,L_2\},T)\leadsto\brp{a}{b}{w_1}w_2=\ac w$, thus
    $(S,T)\leadsto\ac w$.
\end{proof}

\begin{theorem}
\label{thClose}
    Assume $\delta_{\bc}(S, [S_1,\ac ])=(S',\tail)$ for a return symbol $\bc$.
    Then $(S, [S_1,\ac ]\cdot T)\leadsto \bc  w$ iff
    $(S',T)\leadsto w$.
\end{theorem}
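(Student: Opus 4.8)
The plan is to mirror the structure of the proofs of Theorem~\ref{thPlain} and Theorem~\ref{thOpen}: I would establish the two directions of the ``iff'' separately, using the three auxiliary lemmas (monotonicity in Lemma~\ref{lemma1}, prepending a derivation in Lemma~\ref{lemma2}, and singleton extraction in Lemma~\ref{lemma3}) to bridge between single-pair configurations and full states. The genuinely new ingredient, compared with the plain and call cases, is reasoning about how the matching return symbol $\bc$ consumes exactly the top stack frame $[S_1,\ac]$, which in turn forces the current level to be finished.

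For the forward direction ($\Ra$), I would assume $(S,[S_1,\ac ]\cdot T)\leadsto \bc w$ and unfold the second clause of Definition~\ref{def:semOfVPAStats}, since the stack is nonempty. This yields a split $\bc w = w_1 \bc w_2$ together with a pair $(L_3,L_4)\in S$ such that $L_4\deriv w_1$, and a pair $(L_1,L_2)\in S_1$ with a rule $L_2\ra \br{L_3}L_5$ and $(\{(L_1,L_5)\},T)\leadsto w_2$. The crucial observation is that in a well-matched VPG every derived string is well-matched and hence cannot begin with a return symbol; since $w_1 \bc w_2 = \bc w$, this forces $w_1=\epsilon$, $w_2=w$, and $L_4\deriv \epsilon$. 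Inspecting the three rule forms then shows $L_4\deriv \epsilon$ is possible only via $(L_4\ra \epsilon)\in P$. At this point the witnesses $L_1,L_2,L_3,L_4,L_5$ satisfy exactly the membership conditions defining $S'$, so $(L_1,L_5)\in S'$; applying Lemma~\ref{lemma1} to $\{(L_1,L_5)\}\subseteq S'$ and $(\{(L_1,L_5)\},T)\leadsto w$ gives $(S',T)\leadsto w$.

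For the backward direction ($\La$), I would start from $(S',T)\leadsto w$ and use Lemma~\ref{lemma3} to extract a single pair $(L_1,L_5)\in S'$ with $(\{(L_1,L_5)\},T)\leadsto w$. Unfolding the definition of $S'$ recovers nonterminals $L_2,L_3,L_4$ with $(L_1,L_2)\in S_1$, $(L_3,L_4)\in S$, $(L_4\ra \epsilon)\in P$, and $(L_2\ra \br{L_3}L_5)\in P$. These are precisely the witnesses required by the second clause of Definition~\ref{def:semOfVPAStats}: taking $w_1=\epsilon$ (justified by $L_4\deriv \epsilon$) and $w_2=w$, the clause yields $(S,[S_1,\ac ]\cdot T)\leadsto \epsilon\,\bc\,w = \bc w$, as desired.

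I expect the main obstacle to be the well-matchedness argument in the forward direction, namely pinning down that $w_1=\epsilon$ and that $L_4\deriv \epsilon$ entails $(L_4\ra \epsilon)\in P$. Informally both facts are immediate from the restricted rule shapes, but in the Coq development they likely need a small supporting lemma stating that a derivation of the empty string must use an $\epsilon$-rule at the head, and that no well-matched string starts with a return symbol. Everything else is a routine recombination of the established lemmas and definitions, closely paralleling the earlier two cases.
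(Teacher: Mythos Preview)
Your proposal is correct and follows essentially the same approach as the paper's own proof: both directions proceed exactly as you describe, using Lemma~\ref{lemma1} for the forward direction and Lemma~\ref{lemma3} for the backward direction, with the definition of $S'$ supplying the bridging witnesses. You are in fact more explicit than the paper about the key step in the forward direction---the paper simply asserts ``we must have $L_4\deriv\epsilon$'' where you correctly spell out the well-matchedness argument forcing $w_1=\epsilon$---and your anticipated obstacle (that $L_4\deriv\epsilon$ entails $(L_4\ra\epsilon)\in P$) is exactly the one the paper addresses by appeal to the allowed VPG rule forms. One minor note: despite your introductory remark, Lemma~\ref{lemma2} is not actually needed here, and indeed neither you nor the paper invokes it in the detailed steps.
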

\begin{proof}
    $\Ra$. By definition, $\exists(L_3,L_4)\in S$ and $(L_1,L_2)\in S_1$,
    and we must have $L_4\deriv\epsilon$, $L_2\ra\brp{a}{b}{L_3}L_5$
    and $(\{(L_1,L_5)\},T)\leadsto w$. From $L_4\deriv\epsilon$ and the
    forms of allowed VPG rules, we must have $L_4\ra\epsilon$; therefore,
    $(L_1,L_5)\in S'$.
    By Lemma~\ref{lemma1}, we have $(S',T)\leadsto w$.

    $\La$. By Lemma~\ref{lemma3}, we have
    $(L_1,L_5)\in S' \land
        (\{(L_1,L_5)\},T)\leadsto w$,
    thus $\exists (L_1,L_2)\in S_1 \land (L_3,L_4)\in S \land
        L_4\ra\epsilon \land L_2\ra \brp{a}{b}{L_3}L_5$,
    by definition, we have $(S,[S_1,\ac ]\cdot T)\leadsto \bc  w$.
\end{proof}

We next define the runtime execution of recognizer PDAs. It is standard and we include it here so that we can state the correctness theorem formally.
Recall that a {runtime configuration} of a recognizer PDA is a pair $(S,T)$,
    where $S$ is a state and $T$ is a stack, denoted as $T=t_1\cdot t_2\cdots t_k\cdot\bot$, where $t_i=[S_i, \ac _i]$ for $i=1..k$,
    $\ac_i \in\Sigma_c$ is a call symbol, $t_1$ the top of $T$,
  and $\bot$ the empty stack. The PDA's initial configuration is
$\set{(L_0,L_0)},\bot)$ and its acceptance configurations are defined as follows.
\begin{definition}[PDA acceptance configurations]
    Given a VPG $G=(V,\Sigma,P,L_0)$, pair $(S,T)$ is an \emph{acceptance configuration} if $T=\bot$, and
    $\exists (L,L')\in S\ \st (L'\ra\epsilon)\in P$.
\end{definition}

\newcommand{\RTranstrans}{\mathcal{F}}

\begin{definition}[Recognizer PDA execution]
    The runtime execution $\RTranstrans$ of a PDA $(S_0,A,\RTrans)$ is defined as follows, where $S_0$ is the start state, $A$ the set of states, and $\RTrans$ the set of configuration transitions.
    \begin{equation*}
        \RTranstrans:(i,S,T)\mapsto (S',T'),
    \end{equation*}
    where
    \begin{enumerate}
        \item
              if $i\in\Sigma_c\cup\Sigma_l$, then $(S,S')\in \RTrans$ and is marked with $(i,f)$, and $T'=f(T)$;
        \item
              if $i\in\Sigma_r$ and $T=t\cdot T'$, then $(S,S')\in \RTrans$ and is marked with $(i,t,f)$, and $T'=f(T)$.
    \end{enumerate}
\end{definition}

Given an input string $w = w_1\ldots w_n$, we say PDA accepts $w$ if there exists
a sequence of configurations $(S_0, T_0), \ldots, (S_n, T_n)$ so that 
\[
\begin{array}{l}
            (S_0,T_0)      =(\{(L_0,L_0)\},\bot),    \\
            (S_{i},T_{i})  =\RTranstrans (w_i,S_{i-1},T_{i-1}), \mbox{for } i\in[1,n] \\
            (S_n,T_n)\ \mbox{is an acceptance configuration}
\end{array}
\]
Otherwise, $w$ is rejected.

\begin{lemma}
\label{tfacc}
    Given a VPG $G=(V,\Sigma,P,L_0)$, suppose a PDA is generated according to Algorithm~\ref{alg:constr_recog}. Then for a string $w\in\Sigma^*$, $(\{(L_0,L_0)\},\bot)\leadsto w$ iff $w$ is accepted by the PDA.
\end{lemma}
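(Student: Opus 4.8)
The plan is to prove a more general statement by induction on the length of $w$ and then specialize it to the initial configuration. Concretely, I would show that for every state $S\in A$ and every well-formed stack $T$ (one whose state components all lie in $A$),
\[
(S,T)\leadsto w \quad\text{iff}\quad \text{the PDA run started at }(S,T)\text{ on input }w\text{ ends in an acceptance configuration.}
\]
The lemma then follows by instantiating $S=\set{(L_0,L_0)}=S_0\in A$ and $T=\bot$. Generalizing to arbitrary $(S,T)$ is what makes the induction go through, because stripping the leading symbol of $w$ changes the configuration and we must be able to reapply the hypothesis at the new configuration.

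For the base case $w=\epsilon$ I would compare the two sides directly. If $T=\bot$, then by Definition~\ref{def:semOfVPAStats} we have $(S,\bot)\leadsto\epsilon$ iff $\exists(L_1,L_2)\in S$ with $L_2\deriv\epsilon$; since $G$ is well-matched, the only rule that can derive $\epsilon$ is $L_2\ra\epsilon$, so this is exactly the condition defining an acceptance configuration. If $T\neq\bot$, the semantic clause requires $\epsilon$ to contain a return symbol, which is impossible, and on the PDA side a nonempty stack is never an acceptance configuration; hence both sides are false.

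For the inductive step I would split on the kind of the leading symbol and invoke the matching derivative-correctness theorem together with the induction hypothesis. If $w=cw'$ with $c\in\Sigma_l$, Theorem~\ref{thPlain} gives $(S,T)\leadsto cw'$ iff $(S',T)\leadsto w'$ where $\delta_c(S)=(S',\lambda T.T)$; by construction $\RTranstrans(c,S,T)=(S',T)$ follows the edge added for $\delta_c$, and the induction hypothesis at $(S',T)$ closes the case. If $w=\ac w'$ with $\ac\in\Sigma_c$, Theorem~\ref{thOpen} and the push action $\RTranstrans(\ac,S,T)=(S',[S,\ac]\cdot T)$ are used symmetrically. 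If $w=\bc w'$ with $\bc\in\Sigma_r$ and the stack is $[S_1,\ac]\cdot T'$, Theorem~\ref{thClose} together with the pop transition $\RTranstrans(\bc,S,[S_1,\ac]\cdot T')=(S',T')$ does the job. The remaining subcase is a leading return symbol with $T=\bot$: here the PDA has no return transition and rejects, while on the semantic side $(S,\bot)\leadsto\bc w'$ would force some $L_2\deriv\bc w'$, which is impossible because no nonterminal of a well-matched VPG derives a string beginning with a return symbol; so both sides are again false.

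Two points need care, and I expect them to be the main obstacles. First, the inductive step presupposes that the PDA run is always defined when it should be: whenever $S\in A$ and a plain/call derivative or a legal return derivative is taken, the target state must again lie in $A$ and the corresponding edge must be present in $\RTrans$. This is precisely the closure of $A$ under derivatives guaranteed by the invariants of Algorithm~\ref{alg:constr_recog} upon termination (when $N=\emptyset$), and determinism of the $\delta$ functions makes the run unique, so ``$w$ is accepted'' is well defined. Second, the correspondence must be matched exactly at the boundaries---the $\epsilon$ versus acceptance-configuration equivalence and the empty-stack return case---both of which rely on the structural property that in a well-matched VPG a nonterminal derives $\epsilon$ only via an explicit $\ra\epsilon$ rule and can never derive a string starting with a return symbol. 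The interior of the induction is then immediate from Theorems~\ref{thPlain}--\ref{thClose}.
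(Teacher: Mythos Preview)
Your proposal is correct and follows essentially the same approach as the paper: generalize to arbitrary configurations $(S,T)$ with $S\in A$, induct on the length of $w$, and at each step invoke the appropriate derivative-correctness theorem (Theorems~\ref{thPlain}--\ref{thClose}) together with the closure of $A$ under derivatives guaranteed by Algorithm~\ref{alg:constr_recog}. The only cosmetic difference is that the paper treats the two directions asymmetrically---the $\Leftarrow$ direction is argued by chaining the biconditionals along the accepting run rather than by an explicit induction---whereas you fold both directions into a single inductive biconditional; your packaging is slightly cleaner but the content is the same.
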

\begin{proof}
    $\La.$ If $w$ of length $k$ is accepted by the PDA, then there exists 
  a sequence of configurations $(S_i,T_i), i\in[0..k]$, s.t. (1)
  $S_0=\{(L_0,L_0)\}$, $T_0=\bot$; (2) 
  $(S_{i},T_{i})=\RTranstrans(w_{i},S_{i-1},T_{i-1}),\ i \in [1..k]$,  where
          $w_i$ is the $i$-th symbol in $w$ and $\RTranstrans$ is the PDA transition function; and (3)
  $(S_k,T_k)$ is an acceptance configuration.

  For each $i$, perform case analysis over $w_i$.  Suppose $w_i$ is a
  plain symbol, denoted as $c$. By
  $(S_{i},T_{i})=\RTranstrans(w_{i},S_{i-1},T_{i-1})$ and the PDA
  construction, we must have $\delta_c(S_{i-1}) = (S_i, \lambda T. T)$.
  By Theorem~\ref{thPlain}, we get 
    $(S_{i-1},T_{i-1})\leadsto w_i \ldots w_k$ iff 
    $(S_i,T_i) \leadsto w_{i+1} \ldots w_k$.
  The cases for when $w_i$ is $\ac$ or $\bc$ are similar, with the help
  of Theorems~\ref{thOpen} and~\ref{thClose}. 

  Combining all steps, we have $(S_0,T_0)\leadsto w$ iff
  $(S_k,T_k) \leadsto \epsilon$.  Since $(S_k,T_k)$ is an acceptance
  configuration, we have $(S_k,T_k) \leadsto \epsilon$.  Therefore,
  we get $(S_0,T_0)\leadsto w$.

$\Ra.$ We prove a more general lemma: if $(S_0,T_0)$ is a PDA runtime
  configuration and $(S_0,T_0) \leadsto w$, then $w$ is accepted by the PDA.
Prove it by induction over the length of $w$. 

When the length is zero, we must have $T_0=\bot$ and there exists
$(L_1,L_2) \in S_0$ such that $L_0 \ra \epsilon$. Therefore
$(S_0,T_0)$ is an acceptance configuration of the PDA.

For the inductive case, suppose $w=w_1 \ldots w_{k+1}$. Perform case
analysis over $w_1$, and first show that there exists $(S_1,T_1)$ s.t.
$(S_1,T_1)=\RTranstrans(w_1,S_0,T_0)$.
\begin{enumerate}
\item Suppose $w_1$ is a plain symbol $c$ and
  $\delta_c(S_0)=(S_1,\lambda T, T)$.  Since Algorithm~\ref{alg:constr_recog} is
  closed under derivatives, we have $S_1$ is a PDA state. Let
  $T_1=T_0$. Thus, $\RTranstrans(w_1,S_0,T_0) = (S_1,T_1)$ by the
  definition of $\RTranstrans$.

\item The case of $w_1$ being a call symbol is similarly to the previous case.

\item Suppose $w_1$ is a return symbol $\bc$.  By $(S_0,T_0) \leadsto
  \bc w_2 \ldots w_{k+1}$, we have $T_0$ is not the empty stack and has
  a top symbol $[S,\ac]$. Suppose 
  $\delta_{\bc}(S_0, [S,\ac])=(S_1,\tail)$. Since Algorithm~\ref{alg:constr_recog} is
  closed under derivatives, we have $S_1$ is a PDA state. Let
  $T_1=\tail(T_0)$. Thus, $\RTranstrans(w_1,S_0,T_0) = (S_1,T_1)$ by the
  definition of $\RTranstrans$.
\end{enumerate}
By Theorems~\ref{thPlain},~\ref{thOpen}, and~\ref{thClose}, we get
$(S_1,T_1) \leadsto w_2 \ldots w_{k+1}$. By the induction hypothesis, $w_2
\ldots w_{k+1}$ is accepted by the PDA. Therefore, the original string
$w_1 \ldots w_{k+1}$ is also accepted.

\end{proof}

\begin{theorem}
    For VPG $G$ and its start symbol $L_0$, a string $w\in\Sigma^*$ is derived from $L_0$, i.e.
    $L_0\deriv w$, iff $w$ is accepted by the corresponding PDA.
\end{theorem}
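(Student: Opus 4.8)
The plan is to reduce this final theorem to two facts already in hand: the semantic definition of the initial configuration (Definition~\ref{def:semOfVPAStats}) and Lemma~\ref{tfacc}. The key observation is that the relation $\leadsto$ evaluated at the empty-stack initial configuration is, by definition, nothing more than a grammar derivation from $L_0$. So the whole theorem is just the composition of one definitional unfolding with one previously proved equivalence.

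First I would unfold Definition~\ref{def:semOfVPAStats} at the initial configuration $(\{(L_0,L_0)\},\bot)$. Since the stack is empty, only the first clause applies: $(\{(L_0,L_0)\},\bot)\leadsto w$ holds iff there exists a pair $(L_1,L_2)\in \{(L_0,L_0)\}$ with $L_2\deriv w$. The state is a singleton, so the only candidate pair is $(L_0,L_0)$, and the condition collapses to exactly $L_0\deriv w$. This yields the first equivalence: $L_0\deriv w$ iff $(\{(L_0,L_0)\},\bot)\leadsto w$. Next I would invoke Lemma~\ref{tfacc}, which already establishes that $(\{(L_0,L_0)\},\bot)\leadsto w$ iff $w$ is accepted by the PDA generated by Algorithm~\ref{alg:constr_recog}. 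Chaining the two equivalences gives $L_0\deriv w$ iff $w$ is accepted by the PDA, which is the claim.

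I do not expect a genuine obstacle in this last step, precisely because all the difficulty was discharged earlier. The correctness of the three derivative functions (Theorems~\ref{thPlain},~\ref{thOpen}, and~\ref{thClose}) and the induction on string length that threads them through an entire run of the PDA are both packaged inside Lemma~\ref{tfacc}; its forward direction even generalizes the statement to an arbitrary runtime configuration so that the induction goes through. The only care required here is to confirm that the self-pair $(L_0,L_0)$ is indeed the initial state fixed by the construction, and that the empty-stack clause of Definition~\ref{def:semOfVPAStats} reproduces the grammar derivation verbatim; both are immediate from the definitions, so this theorem is essentially a one-line corollary of Lemma~\ref{tfacc}.
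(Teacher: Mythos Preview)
Your proposal is correct and follows exactly the same approach as the paper: invoke Lemma~\ref{tfacc} for the equivalence between acceptance and $(\{(L_0,L_0)\},\bot)\leadsto w$, and unfold Definition~\ref{def:semOfVPAStats} on the empty-stack singleton state to identify $(\{(L_0,L_0)\},\bot)\leadsto w$ with $L_0\deriv w$. The paper's own proof is the same one-line composition of these two facts, with less detail than you supply.
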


\begin{proof}
    By Lemma~\ref{tfacc}, $w$ is accepted by the PDA iff $(\{(L_0,L_0)\}, \bot)\leadsto w$, and by definition we have $(\{(L_0,L_0)\}, \bot)\leadsto w$ iff $L_0\deriv w$.
\end{proof}

\section{Recognizing strings with pending calls/returns}\label{app:recog_general_VPG}

In this section, we extend the work in Section~\ref{sec:recognition} to build PDAs for recognizing VPG with pending call or return symbols.
In general VPGs, nonterminals are classified to
two categories: $V^0$ for matching well-matched strings and $V^1$ for
strings with pending calls/returns.  We write $V=V^0\cup V^1$ for the
set of all nonterminals. $V^0$ should be disjoint from $V^1$.
The definition also imposes constraints on how $V^0$ and $V^1$ nonterminals
can be used. E.g., in $L\ra \br{L_1}L_2$, $L_1$ must be in a well-matched
nonterminal (i.e., in $V^0$). This constraint excludes a grammar like
$L_1 \ra \br{L_2}L_3; L_2 \ra \callsym{c} L_4$.

Another major difference is that in $L\ra a L_1$, symbol $a$ can be a
call/return symbol in addition to being a plain symbol. This makes
matching calls and returns more complicated. For example, suppose we
have rules: $L_1 \ra \ac L_2$;\ $L_2 \ra \bc L_3 | \epsilon$;\ $L_3 \ra
\epsilon$.  Then string $\ac \bc$ is accepted, in which case $\bc$
from $L_2 \ra \bc L_3$ matches $\ac$ from $L_1 \ra \ac
L_2$.  String $\ac$ is also accepted, in which case $\ac$ is a pending
call. So depending on the input string, $\ac$ from $L_1 \ra \ac
L_2$ may be a matching call or a pending call.

Here's an example grammar:
\begin{enumerate}
    \item $\derivRule{L_1}{\ac L_2 \bc L_3}$
    \item $\derivRule{L_2}{\ac L_2 \bc L_4\mid \epsilon}$
    \item $\derivRule{L_3}{\ac L_1\mid \epsilon}$
    \item $\derivRule{L_4}{\epsilon}$
\end{enumerate}
And $L_1, L_3 \in V^1, L_2, L_4 \in V^0$.
For example, $\ac \bc \ac \ac \ac \bc \bc$ is in the language recognized by the grammar.

\paragraph{General VPGs to PDA}
The PDA states and stack symbols are the same as
before. We generalize the notion of the top of the stack to return the
top stack symbol when the stack is non-empty, and return None when the
stack is empty. 

A derivative-based transition function takes the current state and the top of the stack
(which can be None), and returns a new state and a stack action. 
As before, since $\delta_c$ and $\delta_{\ac}$ do
not use the top of the stack, we omit it from their parameters.

\begin{definition}[Derivative functions for general VPGs]
    
    Given a general VPG $G=(V,\Sigma,P,L_0)$, the transition functions $\trans$ are defined as follows.
    For $c\in\Sigma_l$, $\ac\in\Sigma_c$ and $\bc\in\Sigma_r$,
    \begin{enumerate}
        \item
              $\delta_c$ is the same as the well-matched case.

              $\delta_c(S) = (S',\lambda T.T)$, where
              $$
                  S'=\{(L_1,L_3)\mid \exists L_2, (L_1,L_2)\in S \land (L_2\ra cL_3)\in P\};
              $$

              \item For call symbols, we have
              $\delta_{\ac}(S)= (S'\cup S_p,\lambda T.[S,\ac ]\cdot T)$, where
          \[
              \begin{array}{lll}
                S' & = \{(L_3,L_3)\mid \exists L_1\; L_2, (L_1,L_2)\in S\ \land
                     \exists L_4, (L_2\ra \brp{a}{b}{L_3}L_4)\in P \}, \\
                S_p & = \{(L_3,L_3)\mid \exists L_1,L_2,\ (L_1,L_2) \in S\ \land
                 (L_2\ra \ac{L_3}) \in P \}.
              \end{array}
         \]
         Compared to the well-matched case, an additional $S_p$ is
         introduced for the case when $\ac$ appears in a rule like
         $L_2\ra \ac{L_3}$.


        \item For a return symbol $\bc$, if $t$ is the top of the stack, then
          \[\delta_{\bc}(S,t) = 
          \left\{
               \begin{array}{ll}
                 (S' \cup S_{p1}, \tail) & \mbox{if}\ t= [S_1,\ac]\\
                 (S_{p2}, \lambda T. T) & \mbox{if}\ t= \mbox{None}
               \end{array}
          \right.
          \]
         where
         \[
            \begin{array}{lll}
                S' & = &
                \{(L_1,L_5)\mid \exists L_2\; L_3\; L_4, (L_1,L_2)\in S_1 \land
                  (L_3,L_4)\in S\; \land \\
                  &   & ~~~~~~~~~~~~~~~(L_4\ra\epsilon)\in P\ \land (L_2\ra\brp{a}{b}{L_3}L_5)\in P \} \\
                S_{p1} & = &
                \{(L_1,L_5)\mid \exists L_2\; L_3\; L_4, (L_1,L_2)\in S_1 \land
                  (L_3,L_4)\in S\; \land \\
                  &   & ~~~~~~~~~~~~~~~(L_2\ra\ac L_3)\in P\ \land (L_4\ra \bc L_5)\in P \} \\
                S_{p2} & = &
                \{(L_3,L_3)\mid \exists L_1\; L_2, (L_1,L_2)\in S \land (L_2\ra \bc L_3)\in P \} \\
              \end{array}
         \]

         $S'$ is as before and deals with the case when there is a
         rule $L_2\ra\brp{a}{b}{L_3}L_5$ with a proper top stack
         symbol.  $S_{p1}$ deals with the case when there are rules
         $L_2\ra\ac L_3$ and $L_4\ra \bc L_5$; in this case, we match
         $\bc$ with $\ac$. Finally, $S_{p2}$ deals with the case when
         the stack is empty; then $\bc$ is treated as a pending return
         symbol (not matched with a call symbol).

    \end{enumerate}
\end{definition}

For the well-matched case, the stack should be empty after all
input symbols are consumed; in the case with pending calls/returns,
however, the stack is not necessarily empty at the end. For example,
with the grammar $L\ra \ac L\mid \epsilon$ and the valid input string
$\ac$, the terminal stack is $[\{(L,L)\},\ac]\cdot\bot$. 

\begin{definition}[The acceptance configuration for words with pending calls/returns] 
    Given a general VPG $G=(V,\Sigma,P,L_0)$, the pair $(S,T)$ is called an \emph{acceptance configuration} if the followings are satisfied:
\begin{enumerate}
    \item $\exists (L_1,L_2) \in S \text{ s.t. } (L_2 \ra \epsilon) \in P$,
    \item either (i) $T=\bot$ or (ii) $T= [S', \ac] \cdot T'$ and $\exists (L_3,L_4) \in S' \land (L_4 \ra \ac L_1) \in P$ for some $L_1$.
    \end{enumerate}
\end{definition}

\newcommand{\mret}[1]{\operatorname{matched-rets}(#1)} 
\newcommand{\wmatched}[1]{\operatorname{well-matched}(#1)}

In the following correctness proof,
we use predicate $\wmatched{w}$ to mean that $w$, a string of
terminals, is a well-matched string; that is, every call/return symbol
is matched with a corresponding return/call symbol. We use predicate
$\mret{w}$ to mean that any return symbol in $w$ is matched with a
call symbol; however, a call symbol may not be matched with a return
symbol. E.g., we have $\mret{\ac \ac \bc}$, but not $\wmatched{\ac \ac \bc}$.

\begin{definition}[Semantics of PDA configurations]
    We will write $(S, T)\leadsto w$ to mean that $w$ can be
    accepted by the configuration $(S, T)$. It is defined as follows.
    \begin{enumerate}
        \item $(S, \bot)\leadsto w \mbox{ if } \exists (L_1,L_2)\in S,\ \st L_2\deriv w$,
        \item $(S, [S',\ac ]\cdot T')\leadsto w_1 \bc  w_2$ if
              $\exists (L_3,L_4)\in S$ s.t.
              \begin{enumerate}
                  \item $L_4\deriv  w_1$ and $\wmatched{w_1}$ and
                  \item $\exists (L_1,L_2)\in S',\exists L_5, L_2\ra \br{L_3}L_5 \land (\{(L_1,L_5)\},\ T')\leadsto w_2$.
              \end{enumerate}
        \item $(S, [S',\ac ]\cdot T')\leadsto w_1 \bc  w_2$ if
              $\exists (L_3,L_4)\in S$ s.t. $\exists L_5$
              \begin{enumerate}
                  \item $L_4\deriv  w_1 \bc L_5$ and $\wmatched{w_1}$ and
                  \item $\exists (L_1,L_2)\in S',L_2\ra \ac L_3 \land (\{(L_1,L_5)\},\ T')\leadsto w_2$.
              \end{enumerate}
        \item $(S, [S',\ac ]\cdot T')\leadsto w_1$ if
              $\exists (L_3,L_4)\in S$ s.t.
              \begin{enumerate}
                  \item $L_4\deriv  w_1$ and $\mret{w_1}$ 
                  \item $\exists (L_1,L_2)\in S',L_2\ra \ac L_3$.
              \end{enumerate}
    \end{enumerate}
\end{definition}
In the above definition, the third case handles when the
call symbol $\ac$ in rule $L_2\ra \ac L_3$ matches $\bc$ in $w_1 \bc
L_5$ produced by $L_4$. The last case handles when $\ac$ in rule
$L_2\ra \ac L_3$ does not have a matched return; that is, it is a
pending call.

The following three lemmas and their proofs are the same as
before (except that Lemma~\ref{lemma2'} requires well-matched
strings).

\begin{lemma}
    \label{lemma1'}
    If $(S_1,T)\leadsto w \land S_1\subseteq S_2$,
    then $(S_2,T)\leadsto w$.
\end{lemma}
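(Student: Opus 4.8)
The plan is to follow the structure of the proof of the well-matched monotonicity lemma (Lemma~\ref{lemma1}), adapting it to the four-clause semantics definition that is in force for general VPGs. First I would perform a case analysis on which clause of the definition of $(S_1,T)\leadsto w$ is witnessed: either $T=\bot$ (clause 1), or $T=[S',\ac]\cdot T'$ together with one of the three remaining clauses (a matched return, a pending call matched against an internally produced return, or a pending call with no matching return).

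The key observation is that, in every clause, the only hypothesis that mentions the top state is a single existential witness pair: $(L_1,L_2)\in S_1$ in the empty-stack clause, and $(L_3,L_4)\in S_1$ in each of the three non-empty-stack clauses. All the remaining hypotheses — the derivations $L_2\deriv w$, $L_4\deriv w_1$, and $L_4\deriv w_1\bc L_5$; the side conditions $\wmatched{w_1}$ and $\mret{w_1}$; the rules drawn from the stack-top state $S'$; and the recursive acceptance $(\{(L_1,L_5)\},T')\leadsto w_2$ — refer only to the fixed stack $T$ and to the grammar, neither of which is altered when we replace $S_1$ by $S_2$.

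Consequently, in each clause I would simply note that $S_1\subseteq S_2$ promotes the witness pair from $S_1$ to $S_2$, and then re-instantiate the very same clause with $S_2$ playing the role of the top state, leaving every other witness untouched. This yields $(S_2,T)\leadsto w$ uniformly across the four cases, and no induction or appeal to the other lemmas is required.

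The only thing to be careful about is the bookkeeping, since there are now four clauses rather than the two of Lemma~\ref{lemma1}: one must confirm that the ``enlarge only the top state'' observation genuinely covers clauses 3 and 4 as well, where the operative rule $L_2\ra\ac L_3$ resides in the stack-top state $S'$ and the derivation from $L_4$ may carry a pending symbol. Because the stack — and hence $S'$ and $T'$ — is held fixed throughout, these cases present no genuine obstacle; the argument is essentially a mechanical lifting of the well-matched proof to the wider set of acceptance shapes.
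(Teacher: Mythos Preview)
Your proposal is correct and matches the paper's approach exactly: the paper simply states that the proof is ``the same as before'' (i.e., the same as Lemma~\ref{lemma1}), and your case analysis over the four clauses of the general-VPG acceptance definition, lifting the single witness pair from $S_1$ to $S_2$ while leaving the stack and all other data fixed, is precisely that argument spelled out in the extended setting.
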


\begin{lemma}
    \label{lemma2'}
    If $L_2\deriv  w_1L_3$, $\wmatched{w_1}$, and $(\{(L_1,L_3)\},T)\leadsto w$,
    then $(\{(L_1,L_2)\},T)\leadsto w_1w$.
\end{lemma}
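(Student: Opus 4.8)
The plan is to prove this by case analysis on the derivation $(\{(L_1,L_3)\},T)\leadsto w$, following the four defining clauses of the semantics of PDA configurations for general VPGs, exactly mirroring the structure of the well-matched Lemma~\ref{lemma2} but now accounting for the two additional clauses (for matched and pending returns). The unifying idea in every case is the same: since $L_2\deriv w_1 L_3$, any derivation witnessed for $L_3$ can be prepended with $w_1$ to obtain the corresponding derivation from $L_2$. So I keep every stack-side witness unchanged and only re-verify the derivation and the matchedness side-conditions for the extended word $w_1 w$.

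First I would dispatch the base case $T=\bot$ (clause~1): here $(\{(L_1,L_3)\},\bot)\leadsto w$ yields $L_3\deriv w$, and composing with $L_2\deriv w_1 L_3$ gives $L_2\deriv w_1 w$, so $(\{(L_1,L_2)\},\bot)\leadsto w_1 w$ immediately. For $T=[S',\ac]\cdot T'$, I split according to which of clauses~2, 3, 4 produced $w$. In the matching cases (clauses~2 and~3), $w=w'\bc w''$ and the witness selects the unique pair $(L_1,L_3)$ from the singleton state, giving either $L_3\deriv w'$ (clause~2) or $L_3\deriv w'\bc L_5$ (clause~3), together with $\wmatched{w'}$ and a stack witness in $S'$ (namely a rule $L''\ra \br{L_1}L_5$ or $L''\ra \ac L_1$) that refers only to the first component $L_1$. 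Since that first component is unchanged when I replace $L_3$ by $L_2$ in the singleton, the same $S'$ witness and the same tail-acceptance $(\{(L',L_5)\},T')\leadsto w''$ carry over verbatim; I only need $L_2\deriv w_1 w'$ (resp. $L_2\deriv w_1 w'\bc L_5$), obtained by prepending $w_1$, together with $\wmatched{w_1 w'}$. The pending case (clause~4) is analogous, with $w$ left unsplit: $L_3\deriv w$, $\mret{w}$, and an $S'$ witness $L''\ra\ac L_1$ all survive the substitution of $L_2$ for $L_3$, leaving only $L_2\deriv w_1 w$ and $\mret{w_1 w}$ to check.

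The main obstacle, and the only place where the extra hypothesis $\wmatched{w_1}$ is actually needed, is re-establishing the matchedness side-conditions for the lengthened word. In clauses~2 and~3 I must show $\wmatched{w_1 w'}$ from $\wmatched{w_1}$ and $\wmatched{w'}$, which follows from the fact that the concatenation of two well-matched strings is itself well-matched; in clause~4 I must show $\mret{w_1 w}$ from $\wmatched{w_1}$ and $\mret{w}$, using that a well-matched prefix contributes no unmatched returns and hence preserves the matched-returns property (each return in $w_1$ is matched within $w_1$, and each return in $w$ keeps its matching call in $w$). Without assuming $\wmatched{w_1}$ these conditions could fail, which is precisely why this lemma strengthens Lemma~\ref{lemma2} with that hypothesis; everything else is a routine re-use of the witnesses already present in the derivation of $(\{(L_1,L_3)\},T)\leadsto w$.
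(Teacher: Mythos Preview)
Your proposal is correct and follows essentially the same approach as the paper, which does not spell out a separate proof but simply remarks that the argument is ``the same as before'' (i.e., Lemma~\ref{lemma2}) with the added $\wmatched{w_1}$ hypothesis. Your case split over the four clauses of the extended semantics, reusing the $S'$-side witnesses (which depend only on the unchanged first component $L_1$) and invoking $\wmatched{w_1}$ precisely to restore the $\wmatched{\cdot}$ and $\mret{\cdot}$ side-conditions on the lengthened prefix, is exactly the intended elaboration.
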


\begin{lemma}
    \label{lemma3'}
    If $(S,T)\leadsto w$, then $\exists (L_1,L_2)\in S$,
    s.t. $(\{(L_1,L_2)\}, T)\leadsto w$.
\end{lemma}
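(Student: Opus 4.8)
The plan is to mirror the proof of Lemma~\ref{lemma3} from the well-matched setting, extending it to account for the two additional acceptance clauses introduced for pending calls and returns. The central observation is that the relation $(S,T)\leadsto w$ is, at its top level, an existential quantification over a pair drawn from $S$: every defining clause begins with $\exists (L_1,L_2)\in S$ (or $\exists (L_3,L_4)\in S$), and the remaining conditions on that witness pair refer only to the pair itself together with $S'$, $T'$, and the production rules of $G$ --- never to any other element of $S$. Consequently, shrinking $S$ down to the singleton containing the witness preserves acceptance.

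Concretely, I would proceed by case analysis on the stack $T$. If $T=\bot$, the only applicable clause supplies a pair $(L_1,L_2)\in S$ with $L_2\deriv w$; this same pair immediately witnesses $(\{(L_1,L_2)\},\bot)\leadsto w$ via the identical clause. If $T=[S',\ac]\cdot T'$, I would further split on which of the three non-empty-stack clauses established $(S,T)\leadsto w$. In each of these clauses the definition provides a witness pair $(L_3,L_4)\in S$ together with side conditions --- such as $L_4\deriv w_1$ with $\wmatched{w_1}$, or $L_4\deriv w_1 \bc L_5$ with $\wmatched{w_1}$, or $L_4\deriv w_1$ with $\mret{w_1}$ --- and an accompanying condition phrased over $S'$, $T'$, and a matching rule. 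Since none of these side conditions mentions any element of $S$ beyond $(L_3,L_4)$, replacing $S$ by $\{(L_3,L_4)\}$ leaves every hypothesis of the clause intact, yielding $(\{(L_3,L_4)\},T)\leadsto w$ through the very same clause.

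Because the argument merely extracts the already existentially quantified witness and re-applies the definition unchanged, there is no genuine difficulty: the proof is purely definitional and needs no induction. The only point demanding care --- and the closest thing to an obstacle --- is checking exhaustively that for each of the three non-empty-stack clauses the witness's side conditions are truly independent of the rest of $S$. This is a mechanical inspection of the definition of $(S,T)\leadsto w$ for general VPGs, and it holds uniformly in every case, so I expect the full proof to be short.
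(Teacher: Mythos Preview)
Your proposal is correct and follows essentially the same approach as the paper: the paper simply notes that the proof is the same as that of Lemma~\ref{lemma3}, which itself argues directly from the definition by extracting the existential witness in each case. Your write-up is in fact more explicit than the paper's, carefully enumerating the three non-empty-stack clauses of the generalized semantics and verifying that none of them references any element of $S$ beyond the witness pair.
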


In addition, we need the following lemma.
\begin{lemma}
    \label{lemma4'}
    If $L \deriv w \delta$, where $\delta\in (\Sigma\cup V)^*$ is a string of terminals or nonterminals, then either (1) $\mret{w}$, or (2) exists $w_1, \bc, w_2$, so that $w=w_1 \bc w_2$ and $\wmatched{w_1}$ and exists $L_1$ so that $L \deriv w_1 \bc L_1$ and $L_1 \deriv w_2 \delta$.
\end{lemma}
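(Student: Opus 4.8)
The plan is to prove the lemma by strong induction on $|w|$, the length of the terminal prefix, with a case analysis on the first production rule applied to $L$ in the derivation $L\deriv w\delta$ (since $L$ is a single nonterminal, once $w\delta\ne L$ the first step must rewrite $L$). The guiding intuition is that the first \emph{unmatched} return symbol of $w$, if one exists, can only be emitted by a pending return rule: a return generated by a matching rule $L'\ra\br{L_1}L_2$ is always matched with the $\ac$ from the same rule, because the intervening $L_1\in V^0$ derives a well-matched string. The decomposition in case~(2) is exactly the record of the partial derivation $L\deriv w_1\bc L_1$ that halts right after emitting that first unmatched return.

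For the base case $|w|=0$ we have $w=\epsilon$ and $\mret{\epsilon}$ holds, so case~(1) applies. For $|w|\ge 1$ I would case on the first rule. When it is $L\ra cL'$ with $c\in\Sigma_l$, I write $w=cw'$, apply the IH to $L'\deriv w'\delta$ (note $|w'|<|w|$), and prepend $c$: a plain symbol is matching-neutral, so it preserves $\mret{\cdot}$ and $\wmatched{\cdot}$ and leaves the first unmatched return of $w'$ unchanged. When the rule is a pending return $L\ra\bc L'$, the leading $\bc$ has no preceding call and is itself the first unmatched return, so case~(2) holds immediately with $w_1=\epsilon$, $L_1=L'$, and the one-step derivation $L\ra\bc L'$.

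The interesting cases emit a leading call symbol: the pending call $L\ra\ac L'$ and the matching rule $L\ra\br{L_1'}L_2'$. In both I write $w=\ac w'$ and appeal to the IH on an appropriate subderivation --- $L'\deriv w'\delta$ for the pending call, and for the matching rule either $L_1'\deriv w'\alpha'$ (when the rule's own $\bc$ lands inside $\delta$) or $L_2'\deriv w_2''\delta$ after peeling off the complete well-matched block $\ac\sigma_1\bc$ emitted by $L_1'\in V^0$ (when the $\bc$ lands inside $w$). If the IH reports $\mret{\cdot}$ of the tail, I conclude $\mret{w}$, because the material I prepend is either a single call (which only raises the running call-minus-return count at every position) or a complete well-matched block, neither of which can leave a return unmatched. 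If instead the IH returns a decomposition whose first unmatched return is $\bc$, the leading $\ac$ matches that $\bc$, the pair cancels, and $\ac w_1'\bc$ becomes well-matched; I then re-apply the IH to the resulting tail derivation $L_1\deriv w_2'\delta$ to locate the genuinely-first unmatched return, splicing the already-matched prefix back on to build the $w_1$ and partial derivation required by case~(2).

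The main obstacle is precisely this bookkeeping in the leading-call cases: a call produced by $L$ may match a return that the IH first flags as unmatched, so I cannot lift the IH's decomposition directly but must cancel the matched $\ac\cdots\bc$ pair and recurse further into the tail. The key to keeping the argument well-founded is that every nested IH invocation is on a strictly shorter terminal string ($|w'|,|w_2'|,|w_2''|<|w|$), which is why strong induction on $|w|$ justifies the repeated appeals. The remaining work is the routine string-matching bookkeeping --- that a plain symbol is matching-neutral, that wrapping a well-matched string in $\ac\cdots\bc$ stays well-matched, and that concatenating a well-matched block with an $\mret{\cdot}$ tail yields an $\mret{\cdot}$ string --- which I would dispatch using a handful of elementary facts about matched and well-matched strings together with the defining property that $V^0$ nonterminals derive only well-matched strings.
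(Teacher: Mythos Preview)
Your approach is correct and essentially matches the paper's (very brief) sketch: induction combined with case analysis on the first production rule applied to $L$. You induct on $|w|$ rather than on derivation length as the paper suggests, which is in fact the cleaner choice here, since it makes the re-invocation of the IH on the tail $L_1\deriv w_2'\delta$ in the pending-call case immediately well-founded without needing to bound the length of the derivations produced by case~(2).
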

\begin{proof}
Sketch: If $w=\epsilon$, then $\mret{\epsilon}$. Otherwise,
prove it by induction over the length of the derivation of $L \deriv w
\delta$, and then perform case analysis over the first derivation step.
\end{proof}

\begin{theorem}
    \label{thPlain'}
    For a plain symbol $c$,
    $(S,T)\leadsto cw$ iff $\delta_c(S)=(S',f)$, and $(S',f(T)) \leadsto w$.
\end{theorem}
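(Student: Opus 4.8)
The plan is to follow the structure of the well-matched proof (Theorem~\ref{thPlain}) and adapt it to the four-clause semantics of configurations for general VPGs. Since $\delta_c(S)=(S',\lambda T.T)$, the statement to establish is simply that $(S,T)\leadsto cw$ iff $(S',T)\leadsto w$. The observation that keeps the adaptation routine is that $c$ is a \emph{plain} symbol: it can never play the role of the matching return $\bc$ in any clause, so in each case the leading $c$ of $cw$ must be consumed as the first terminal of the derivation that witnesses acceptance; moreover, stripping a leading plain symbol preserves both $\wmatched{\cdot}$ and $\mret{\cdot}$.

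For the forward direction ($\Ra$), I would case split on which of the four semantic clauses witnesses $(S,T)\leadsto cw$. In every clause the witness designates a pair in $S$ one of whose nonterminals derives a string whose first terminal is $c$. Because a plain symbol can only be produced by a rule of the form $L\ra cL'$ (rule form~(2) of Definition~\ref{def:generalVPG} with $i=c$; the $\epsilon$-rule produces no terminal and a matching rule $L\ra\br{L'}L''$ produces a leading call symbol), that derivation must begin with such a rule, leaving $L'$ to derive the remainder. The definition of $\delta_c$ then places the correspondingly updated pair in $S'$, and the residual derivation, together with the (preserved) $\wmatched{\cdot}$ or $\mret{\cdot}$ side condition, re-instantiates the \emph{same} clause to conclude $(S',T)\leadsto w$. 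Clause~1 ($T=\bot$) and clause~2 (well-matched prefix) are exactly as in Theorem~\ref{thPlain}; clauses~3 and~4, which handle a pending call recorded on the stack, are new but are discharged by the identical peeling argument.

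For the backward direction ($\La$), I would reuse the three-lemma chain from the well-matched proof essentially verbatim. By Lemma~\ref{lemma3'} there is $(L_1,L_3)\in S'$ with $(\{(L_1,L_3)\},T)\leadsto w$; by the definition of $\delta_c$ this gives some $(L_1,L_2)\in S$ with $(L_2\ra cL_3)\in P$; since $\wmatched{c}$ holds, Lemma~\ref{lemma2'} (applied with the prefix $c$) yields $(\{(L_1,L_2)\},T)\leadsto cw$; and Lemma~\ref{lemma1'} lifts this from the singleton to all of $S$, giving $(S,T)\leadsto cw$.

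I expect no deep obstacle here: the argument is the well-matched proof plus two extra, mechanically similar cases. The only points requiring care are (i) confirming that a derivation whose first terminal is plain must start with a rule $L\ra cL'$, which follows by inspecting the three rule forms of Definition~\ref{def:generalVPG}, and (ii) checking that deleting the leading $c$ leaves the matching predicates $\wmatched{\cdot}$ and $\mret{\cdot}$ intact so that the same clause can be re-applied to the shortened string. Both are immediate once stated, so the bulk of the effort is case bookkeeping rather than any genuinely new idea.
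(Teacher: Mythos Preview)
Your proposal is correct and takes essentially the same approach as the paper, which simply states that the proof is ``similar to the proof before, except the $\Ra$ direction has more cases to consider.'' You have correctly identified that the extra cases are precisely clauses~3 and~4 of the extended configuration semantics, and that the backward direction goes through unchanged via Lemmas~\ref{lemma1'}--\ref{lemma3'}.
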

The proof is similar to the proof before, except the $\Ra$ direction
has more cases to consider.

\begin{theorem}
    \label{thOpen'}
    For $\ac\in\Sigma_c$,
    $(S,T)\leadsto \ac w$ iff
    $\delta_{\ac}S=(S',f)$, and $(S', f(T))\leadsto w$.
\end{theorem}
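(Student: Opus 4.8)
The plan is to prove both directions of the iff by the same case analysis used for the well-matched call case in Theorem~\ref{thOpen}, extended to account for the pending-call machinery of general VPGs. Concretely, writing $\delta_{\ac}(S)=(S'\cup S_p,\lambda T.[S,\ac]\cdot T)$, the two halves of $S'\cup S_p$ correspond to the two ways a leading $\ac$ can be consumed: a matching rule $L_2\ra\br{L_3}L_4$ contributes $(L_3,L_3)$ to $S'$, while a pending rule $L_2\ra\ac L_3$ contributes $(L_3,L_3)$ to $S_p$. The $S'$ part is handled exactly as in Theorem~\ref{thOpen}, so the real work is threading the $S_p$ part through the two extra semantic clauses (the pending-call clauses~(3) and~(4)) of the general-VPG definition of $\leadsto$. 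The essential new tool is Lemma~\ref{lemma4'}, which decomposes a derivation $L_3\deriv w$ into either a matched-returns string ($\mret{w}$) or a first-pending-return split $w=w_1\bc w_2$ with $\wmatched{w_1}$; this is precisely the dichotomy needed to decide which semantic clause justifies the resulting configuration.

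For the $\Ra$ direction I would assume $(S,T)\leadsto \ac w$ and case split on whether $T=\bot$. When $T=\bot$, clause~(1) gives $(L_1,L_2)\in S$ with $L_2\deriv \ac w$, and the first production at $L_2$ consuming the leading call symbol is either $L_2\ra\br{L_3}L_4$ or $L_2\ra\ac L_3$. In the matching sub-case we recover $w=w_1\bc w_2$ and finish as in Theorem~\ref{thOpen}, using $(L_3,L_3)\in S'$ and the pushed frame $[S,\ac]$ through clause~(2). In the pending sub-case we have $(L_3,L_3)\in S_p$ and $L_3\deriv w$; applying Lemma~\ref{lemma4'} to $L_3\deriv w$ yields either $\mret{w}$, in which case $(S'\cup S_p,[S,\ac]\cdot\bot)\leadsto w$ holds by clause~(4), or a split $w=w_1\bc w_2$ with $L_3\deriv w_1\bc L_5$, in which case clause~(3) applies with witness $(L_3,L_3)\in S_p$ and continuation $(\{(L_1,L_5)\},\bot)\leadsto w_2$. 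The nonempty-stack case $T=[S_1,\langle c]\cdot T'$ is analogous but proceeds over which of clauses~(2)--(4) justifies $(S,T)\leadsto\ac w$; in each, the leading $\ac$ is again consumed by a matching or pending rule, and the configuration is rebuilt carrying the freshly pushed frame $[S,\ac]$ on top of the preexisting one.

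For the $\La$ direction I would assume $(S'\cup S_p,[S,\ac]\cdot T)\leadsto w$ and first apply Lemma~\ref{lemma3'} to reduce to a singleton witness $(L_3,L_3)$, then determine whether it lies in $S'$ or $S_p$. If $(L_3,L_3)\in S'$, there is $(L_1,L_2)\in S$ with $L_2\ra\br{L_3}L_4$; unfolding the witnessing clause reconstructs $L_2\deriv\ac(\cdots)$, after which Lemma~\ref{lemma2'} (whose well-matched side condition is supplied by the $\wmatched{\cdot}$ hypothesis in clause~(2)) and Lemma~\ref{lemma1'} conclude $(S,T)\leadsto\ac w$. If $(L_3,L_3)\in S_p$, there is $(L_1,L_2)\in S$ with $L_2\ra\ac L_3$, and the witnessing clause is~(3) or~(4); each reassembles into $L_2\deriv\ac w$ by prepending the pending call to the continuation of $L_3$, and we finish again with Lemmas~\ref{lemma2'} and~\ref{lemma1'}.

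The hard part will be the nonempty-stack sub-cases where the newly pushed frame $[S,\ac]$ coexists with the old frame $[S_1,\langle c]$: here the semantics must simultaneously match the pushed $\ac$ against a future return and respect the older pending context, and the bookkeeping of which return symbol closes which call symbol becomes delicate. The main obstacle is invoking Lemma~\ref{lemma4'} at exactly the right derivation step so that the well-matched prefix is correctly separated from the first pending return, while ensuring the $\wmatched{\cdot}$ and $\mret{\cdot}$ side conditions demanded by clauses~(2)--(4) are preserved throughout; the rest of the argument is routine algebraic manipulation paralleling Theorem~\ref{thOpen}.
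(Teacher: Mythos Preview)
Your proposal takes essentially the same approach as the paper's two-sentence sketch: extend the case analysis of Theorem~\ref{thOpen} to cover the pending-call clauses~(3) and~(4) of the general semantics, invoking Lemma~\ref{lemma4'} in the $\Ra$ direction to decide which clause witnesses the pushed configuration, and closing the $\La$ direction with Lemmas~\ref{lemma1'}--\ref{lemma3'}. One small refinement: in the $\La$ direction the natural case split is on which semantic clause (2), (3), or (4) actually witnesses $(\{(L_3,L_3)\},[S,\ac]\cdot T)\leadsto w$ rather than on whether $(L_3,L_3)$ lies in $S'$ or $S_p$, since it is the witnessing clause that directly hands you the required pair $(L_1,L_2)\in S$ and rule $L_2\ra\br{L_3}L_5$ or $L_2\ra\ac L_3$.
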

The proof is similar to the proof before, except with more cases to
consider. The $\Ra$ direction requires the use of Lemma~\ref{lemma4'}.

\begin{theorem}
   \label{thClose'}

   \begin{enumerate}[(1)]
     \item If $\delta_{\bc}(S, [S_1,\ac ])=(S',\tail)$, then $(S, [S_1,\ac ]\cdot T)\leadsto \bc  w$ iff  $(S',T)\leadsto w$.
     \item If $\delta_{\bc}(S,\none)=(S',\lambda T. T)$, then $(S, \bot)\leadsto \bc  w$ iff $(S',\bot)\leadsto w$.
   \end{enumerate}
\end{theorem}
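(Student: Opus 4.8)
The plan is to follow the proof of the well-matched return case (Theorem~\ref{thClose}) and augment its case analysis with the pending sub-cases that give rise to $S_{p1}$ and $S_{p2}$. The two parts are handled separately, one per branch of $\delta_{\bc}$. To avoid a clash with the component called $S'$ in the definition of $\delta_{\bc}$, I will write $\hat S = S'\cup S_{p1}$ for the output state in part~(1).

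For part~(1), the $\Ra$ direction starts from $(S,[S_1,\ac]\cdot T)\leadsto \bc w$ and splits on which of the three non-empty-stack clauses of the semantics applies. First I would rule out clause~(4): it would require $\mret{\bc w}$, but the leading $\bc$ has no matching call to its left, so it is an unmatched return, a contradiction. In clauses~(2) and~(3) the decomposition is $\bc w = w_1 \bc w_2$ with $\wmatched{w_1}$; since a well-matched string cannot begin with a return symbol, $w_1=\epsilon$, the displayed $\bc$ is the leading one, and $w_2=w$. In clause~(2), $L_4\deriv\epsilon$ forces $L_4\ra\epsilon$, and together with $L_2\ra\br{L_3}L_5$ and the memberships $(L_3,L_4)\in S$, $(L_1,L_2)\in S_1$ this places $(L_1,L_5)\in S'$. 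In clause~(3), $L_4\deriv \bc L_5$ forces $L_4\ra\bc L_5$ (the first terminal must be $\bc$, and the absence of unit productions rules out any intermediate nonterminal), and with $L_2\ra\ac L_3$ this places $(L_1,L_5)\in S_{p1}$. In either case $(\{(L_1,L_5)\},T)\leadsto w$, so $(L_1,L_5)\in\hat S$ and Lemma~\ref{lemma1'} gives $(\hat S,T)\leadsto w$. The $\La$ direction runs backward: Lemma~\ref{lemma3'} extracts a singleton $(L_1,L_5)\in\hat S$ with $(\{(L_1,L_5)\},T)\leadsto w$; depending on whether this pair came from $S'$ or from $S_{p1}$, I recover the witnessing rules and reassemble $(S,[S_1,\ac]\cdot T)\leadsto \bc w$ through semantic clause~(2) or~(3) respectively, taking $w_1=\epsilon$.

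Part~(2) is shorter because the stack is empty and only clause~(1) of the semantics is in play. For $\Ra$, $(S,\bot)\leadsto \bc w$ yields $(L_1,L_2)\in S$ with $L_2\deriv \bc w$; the first terminal being $\bc$ forces a pending rule $L_2\ra\bc L_3$ with $L_3\deriv w$, whence $(L_3,L_3)\in S_{p2}$ and $(S_{p2},\bot)\leadsto w$. The $\La$ direction reverses this: a witnessing pair in $S_{p2}$ is of the shape $(L_3,L_3)$ by construction and comes with a rule $L_2\ra\bc L_3$ and a pair $(L_1,L_2)\in S$, so $L_2\deriv\bc w$ and $(S,\bot)\leadsto\bc w$.

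The main obstacle is the forward case analysis in part~(1): pairing each semantic clause with the correct component of $\hat S$, discharging clause~(4) via the unmatched-return argument, and pinning down $w_1=\epsilon$ together with the exact producing rule in each clause. The last of these relies on two structural facts about VPG rules---that a derivation of $\epsilon$ can only use $L\ra\epsilon$, and that a single return symbol followed by a lone nonterminal can only come from a pending rule $L\ra\bc L_5$---both of which hinge on the absence of unit productions. Unlike the plain- and call-symbol cases, no appeal to Lemma~\ref{lemma4'} is needed here, since the semantics already supplies the decomposition of the input around the matching return.
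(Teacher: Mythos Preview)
Your argument is correct. The case analysis in part~(1) is complete: clause~(4) is indeed excluded because $\mret{\bc w}$ fails at the leading return, and in clauses~(2) and~(3) the well-matchedness of $w_1$ together with $\bc w = w_1\,\retsym{b'}\,w_2$ forces $w_1=\epsilon$ (a nonempty well-matched string cannot begin with a return symbol), which in turn pins down the producing rule via the shape of VPG productions. The $\La$ direction then rebuilds the semantic witness by instantiating clause~(2) or~(3) with $w_1=\epsilon$, exactly as you do. Part~(2) is routine.

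Where you differ from the paper is in the toolkit. The paper's sketch says part~(1) ``sometimes need[s] to use Lemma~\ref{lemma4'}''---that lemma takes a derivation $L\deriv w\delta$ and either certifies $\mret{w}$ or splits off a well-matched prefix ending at the first unmatched return. You bypass it entirely: because the input already begins with $\bc$, the ``well-matched prefix up to the first unmatched return'' is vacuously $\epsilon$, and you read this off directly from the $\wmatched{w_1}$ side-condition in the semantic clauses rather than by appealing to the lemma. This is a genuinely more elementary route; the paper's approach presumably applies Lemma~\ref{lemma4'} uniformly (as it must for the call-symbol theorem) and so carries it here as well, whereas your argument exploits the special position of $\bc$ at the head of the string. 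Both reach the same destination, but yours needs one fewer ingredient.
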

Part (1)'s proof is similar to before, except with more cases and sometimes
need to use Lemma~\ref{lemma4'}. Part(2)'s proof is straightforward.

\begin{algorithm}
    \caption{PDA construction}
    \label{getpda_unmatch}
    \begin{algorithmic}[1]
        \STATE Input: a VPG $G=(V,\Sigma,P,L_0)$, $\trans$.
        \STATE $S_0\la \{(L_0,L_0)\}$.
        \STATE Initialize the new state set $N=\{S_0\}$.
        \STATE Initialize the set for all produced states $A=N$.
        \STATE Initialize the set for transitions $\RTrans=\{\}$.
        \REPEAT
        \STATE $N'\la \{ (i,f,S,S')\mid (S',f)=\delta_{i}S, S\in N, i\in\Sigma_c\cup\Sigma_l \}$
        \STATE {Add edge $(S,S')$ marked with $(i,f)$ to $\RTrans$, where $(i,f,S,S')\in N'$}.
        \STATE $R\la \{ [S, \ac]\mid S\in A, \ac\in\Sigma_a \}$
        \STATE $N_R\la\{ (\bc,r,f,S,S')\mid (S',f)=\delta_{\bc}(S, r), S\in A, \bc\in\Sigma_r, r\in R \cup \set{\none} \}$
        \STATE {Add edge $(S,S')$ marked with $(\bc,r,f)$ to $\RTrans$, where $(\bc,r,f,S,S')\in N_R$}.
        \STATE {$N\la\{S'\mid (\_,\_,\_,S')\in N' \lor  (\_,\_,\_,\_,S')\in N_R\}-A$}
        \STATE $A\la A\cup N$
        \UNTIL{$N=\emptyset$}
        \STATE Return $(S_0,A,\RTrans)$.
    \end{algorithmic}
\end{algorithm}

\begin{lemma}
    \label{tfacc'}
    Given a VPG $G=(V,\Sigma,P,L_0)$, suppose a PDA is generated according to Algorithm~\ref{getpda_unmatch}. Then for a string $w\in\Sigma^*$, $(\{(L_0,L_0)\},\bot)\leadsto w$ iff $w$ is accepted by the PDA.
\end{lemma}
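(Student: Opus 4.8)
The plan is to mirror the proof of Lemma~\ref{tfacc} for the well-matched case, substituting the general-VPG derivative-correctness theorems (Theorems~\ref{thPlain'},~\ref{thOpen'}, and~\ref{thClose'}) for their well-matched counterparts, and replacing the well-matched acceptance condition with the general one that tolerates a nonempty stack of pending calls. Both implications are then step-by-step ``peeling'' arguments identical in shape to before. The single genuinely new ingredient is an auxiliary equivalence that bridges the configuration semantics at the empty string and the general acceptance configuration, so I would establish that first.

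\emph{Auxiliary lemma:} $(S,T)\leadsto\epsilon$ if and only if $(S,T)$ is an acceptance configuration. Unfolding the four-case definition of $\leadsto$ at $w=\epsilon$, only case (1) (when $T=\bot$) and case (4) (when the stack top is a pending call $[S',\ac]$) can fire, since cases (2) and (3) require an explicit return symbol in the consumed string. In case (1), the condition $\exists(L_1,L_2)\in S$ with $L_2\deriv\epsilon$ collapses, by the forms of general-VPG rules (the only $\epsilon$-producing rule is $L\ra\epsilon$), to $(L_2\ra\epsilon)\in P$, which is exactly acceptance clause (i). In case (4) the residual string is $w_1=\epsilon$, so $\mret{\epsilon}$ holds trivially, $L_4\deriv\epsilon$ again collapses to $(L_4\ra\epsilon)\in P$, and the witnessing pair $(L_1,L_2)\in S'$ with $L_2\ra\ac L_3$ (where $(L_3,L_4)\in S$ is the current-level pair) is precisely acceptance clause (ii). The reverse direction reads these equalities backwards. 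The point needing care is that the existential witnesses in the two formulations line up: the context nonterminal $L_3$ of the $S$-pair must be the target of the pending-call rule recorded in $S'$; this holds because $\delta_{\ac}$ installs pending-call pairs of the diagonal form $(L_3,L_3)$ and plain derivatives keep the context component fixed.

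With that lemma in hand the $\La$ direction (PDA acceptance implies $(\{(L_0,L_0)\},\bot)\leadsto w$) proceeds exactly as in Lemma~\ref{tfacc}: given the accepting configuration sequence $(S_0,T_0),\ldots,(S_k,T_k)$, a case analysis on each input symbol $w_i$ invokes Theorem~\ref{thPlain'},~\ref{thOpen'}, or~\ref{thClose'} to show $(S_{i-1},T_{i-1})\leadsto w_i\cdots w_k$ iff $(S_i,T_i)\leadsto w_{i+1}\cdots w_k$, where the only new subcase is a return symbol on an empty stack, handled by the $\none$ branch of Theorem~\ref{thClose'}. Chaining these equivalences gives $(S_0,T_0)\leadsto w$ iff $(S_k,T_k)\leadsto\epsilon$, and since $(S_k,T_k)$ is an acceptance configuration the auxiliary lemma supplies $(S_k,T_k)\leadsto\epsilon$, hence the conclusion. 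For the $\Ra$ direction I would, as before, strengthen the statement to an arbitrary runtime configuration and induct on $|w|$: if $(S_0,T_0)\leadsto w$ then $w$ is accepted from $(S_0,T_0)$. The base case $w=\epsilon$ is exactly the auxiliary lemma. In the inductive step with $w=w_1\cdots w_{k+1}$, a case analysis on $w_1$ produces the next configuration via the general analogue of $\RTranstrans$, with Algorithm~\ref{getpda_unmatch}'s closure under derivatives guaranteeing the new state is in the PDA; the new subtlety is that a return $w_1$ must branch on whether $T_0$ is empty (using the $\none$ branch of $\delta_{\bc}$, leaving the stack unchanged) or has a top $[S,\ac]$ (popping it), where $(S_0,T_0)\leadsto w$ already forces the required stack shape through semantics cases (2)--(4). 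Applying the appropriate derivative-correctness theorem yields $(S_1,T_1)\leadsto w_2\cdots w_{k+1}$, and the induction hypothesis closes the argument.

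I expect the main obstacle to be the auxiliary equivalence rather than the inductive peeling: matching the four-clause configuration semantics against the two-clause pending-call acceptance condition requires correctly identifying which existential witnesses coincide and leaning on the restricted forms of general-VPG rules (in particular that empty derivations arise only from $L\ra\epsilon$ and that pending-call states are built from diagonal pairs). Once that correspondence is pinned down, both directions are routine adaptations of the well-matched proof.
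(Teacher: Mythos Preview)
Your proposal is correct and follows essentially the same approach as the paper, whose entire proof is the single sentence ``The lemma can be proved as before, except with more cases.'' Your expansion is faithful to that: the two-direction peeling argument via Theorems~\ref{thPlain'},~\ref{thOpen'},~\ref{thClose'} is exactly what ``as before'' means, and the auxiliary equivalence between $(S,T)\leadsto\epsilon$ and the general acceptance configuration is precisely the extra case the paper alludes to; your identification of the witness-alignment issue (that the context component of the $S$-pair must match the pending-call target recorded in the stack top) is the one genuinely new point, and your diagonal-pair invariant is the right idea, though in a fully written-out proof you would want to state that invariant explicitly for reachable configurations rather than fold it into the auxiliary lemma for arbitrary $(S,T)$.
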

The lemma can be proved as before, except with more cases.

\begin{theorem}
    \label{corecog'}
    For VPG $G$ and its start nonterminal $L_0$, a string $w\in\Sigma^*$ is derived from $L_0$, i.e.
    $L_0\deriv w$, iff $w$ is accepted by the corresponding PDA.
\end{theorem}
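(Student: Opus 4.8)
The plan is to follow exactly the two-step structure used for the well-matched case in Theorem~\ref{corecog}, now invoking the general-VPG versions of the supporting results. The whole argument factors through the intermediate semantic characterization $(\{(L_0,L_0)\},\bot)\leadsto w$, which bridges the operational notion of PDA acceptance and the grammatical notion of derivation.

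First I would apply Lemma~\ref{tfacc'}, which states that for the PDA generated by Algorithm~\ref{getpda_unmatch}, a string $w$ is accepted iff $(\{(L_0,L_0)\},\bot)\leadsto w$; this discharges the operational side of the equivalence. Next I would unfold the definition of $\leadsto$ on the empty stack. Since the initial stack is $\bot$, only clause~(1) of the semantics of PDA configurations applies, giving $(\{(L_0,L_0)\},\bot)\leadsto w$ iff there exists $(L_1,L_2)\in\{(L_0,L_0)\}$ with $L_2\deriv w$. As $(L_0,L_0)$ is the unique element of the set, this collapses to $L_0\deriv w$. Chaining the two equivalences yields the theorem.

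The main obstacle is therefore not in this top-level statement but in Lemma~\ref{tfacc'}, whose proof the excerpt only sketches (``as before, except with more cases''). The additional difficulty comes from pending calls and returns: the accepting configuration may now carry a non-empty stack (per the generalized acceptance-configuration definition), and the derivative functions contribute the extra components $S_p$, $S_{p1}$, and $S_{p2}$. I would expect the real work to lie in re-establishing the single-step derivative-correctness Theorems~\ref{thPlain'}, \ref{thOpen'}, and \ref{thClose'} across all these cases---in particular, using Lemma~\ref{lemma4'} to correctly distinguish a pending return from a matched return---and then in the induction over $|w|$ inside Lemma~\ref{tfacc'}, whose inductive step must separately handle a return symbol that either pops a matching call off the stack ($t=[S_1,\ac]$) or is treated as pending against the empty stack ($t=\none$). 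Once those generalized lemmas are in hand, the final theorem follows by the same two-line chaining as the well-matched case.
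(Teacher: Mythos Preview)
Your proposal is correct and matches the paper's own proof, which simply says ``The proof is as before'' and refers back to the two-line argument for Theorem~\ref{corecog}: invoke Lemma~\ref{tfacc'} and then unfold clause~(1) of the $\leadsto$ definition on the empty stack. Your additional commentary on where the real work lies (in Lemma~\ref{tfacc'} and the single-step correctness theorems) is accurate and goes beyond what the paper spells out, but the top-level proof itself is exactly the same chaining.
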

The proof is as before.

\section{The pruner and the extractor} \label{app:prunerExtractor}

\subsection{The pruner}

First we define the pruning function for the last state
$m_n$ in the reversed parse tree $[m_1, \ldots, m_n]$.  
Only the edges that end with $L_1^\false$ for some $L_1$ so that
$(L_1\ra \epsilon)\in P$ is valid. This reflects how parsing 
is finished successfully: (1) $L_1$ tagged with false indicates no
matching rule is waiting to be finished; and (2) the parsing
can end due to $L_1\ra \epsilon$.

\begin{definition}[The pruning function for the last state]
    The function that prunes the last state, denoted as $g_\epsilon$, is defined as 
    $$g_\epsilon(m_n)=\{(\_,\_, L_1^\false)\in m_n\mid (L_1\ra \epsilon)\in P\}.$$
\end{definition}

After pruning the last state, we continue pruning previous
states. Assume we have the current pruner configuration $(m_2', T)$,
where $m_2'$ has already been pruned, and want to prune the previous
state $m_1$ from the input parse forest. The pruning function $g$ prunes
$m_1$ and transitions to a new configuration $(m_1', T_1)$, where
$m_1'$ is the pruned state of $m_1$. 

\begin{definition}[Transition functions for the pruner PDA]
    Given a current state $m_2'$, a stack $T$,  and input parse-forest state $m_1$, 
    the transition functions $g$ are defined as follows.

    \begin{enumerate}
    \item $g(m_1,m_2')=(m_1',\lambda T.T)$, where $(m_1,m_2')\subseteq (\mml\times\mml)\cup (\mml \times \mmc)\cup (\mmr \times \mml)\cup (\mmr \times \mmc)$, and
        $$m_1'=\{(\_, \_, L_1^u)\in m_1\mid (L_1^u,\_,\_)\in m_2'\}.$$
        
        For an edge in $m_1$ to be valid, it must connect to one edge in $m_2'$; otherwise, the edge cannot be in a trace that continues to $m_2'$ and can be pruned. The stack is not modified in this case.
    
    \item $g(m_1,m_2')=(m_1',\lambda T.\; m_2'\cdot T)$, where 
        $(m_1,m_2')\subseteq (\mml\times\mmr)\cup (\mmr \times \mmr)$, and
        $$
            m_1'=
            \{(\_,\_, L_1^\false)\in m_1\mid \exists(L_1^\false,\_,\_)\in m_2'\}\cup 
            \{(\_,\_, L_1^\true)\in m_1\mid (L_1\ra\epsilon)\in P\}.
        $$
        
        In this case, consider an edge $e=(\_,\_,L_1^u)\in m_1$.  If
        $u=\false$, then $e$ must connect to $m_2'$; this is similar
        to the above case.  If $u=\true$, then $(L_1\ra\epsilon) \in
        P$ so that the next edge can be a return edge.  The 
        state $m_2'$ must be pushed to the stack. To see the reason,
        assume the parser PDA applied the rule $L\ra\br{A}E$ and
        generated $(L,\ac,A)$ and $((L,A),\bc,E)$. How do we know if
        $(L,\ac,A)$ is valid?  If $((L,A),\bc,E)$ does not exist when
        we prune the state that contains $(L,\ac,A)$, then $(L,\ac,A)$
        can be pruned. Therefore, $m_2'$ is pushed to the stack for
        later use when pruning the corresponding $\mc$.

    \item $g(m_1,m_2',{\mr})=(m_1',\tail)$,
         where 
            $(m_1,m_2')\subseteq (\mmc\times\mml)\cup (\mmc \times \mmc)$, 
            and ${\mr}=\hd T$ when $T\neq \bot$ and ${\mr}=\emptyset$ when $T=\bot$, and
        \[
        \begin{array}{l}
            m_1'=\{(\_,\_, L_1^\false)\in m_1\mid \exists(L_1^\false,\_,\_ )\in m_2'\}\ \cup   \\
            \hspace{5ex} \{(L^u,\_,L_1^{\true})\in m_1\ |\
                \exists(L_1^{\true},\_,\_)\in m_2',
                ((L^u,L_1^{\true}),\_,\_)\in{\mr}\}.
        \end{array}
        \]
        As discussed above, we pop $\mr$ from the stack, so that we
        know which return edges those call edges in $m_1$ may connect
        to. If the stack is empty, only pending call edges may be
        valid. Otherwise, we must also consider matching call
        edges. For a matching call edge $e=(L^u,\ac,L_1^\true)\in m_1$,
        there are two restrictions: (1) there must be a return edge in
        $\mr$ that connects to $e$; and (2) there must be an edge in
        $m_2'$ that connects to $e$. Intuitively, after rule
        $L\ra\br{L_1}L_2$ there are two branches, one starts with
        $L_1$, and the other starts with $L_2$; a valid trace must
        cover both branches.

    \item $g(m_1,m_2')=(m_1',\lambda T. T)$, where $(m_1,m_2')\subseteq (\mmc\times\mmr)$, and
        \[
        \begin{array}{l}
            m_1'=
            \{(\_,\_,L^{\false})\in m_1\mid \exists(L^{\false}\_,\_)\in m_2'\}\ \cup \\
            \hspace{5ex} \{(L^u,\_,L_1^{\true})\in m_1\ |\
                \exists((L^u,L_1^{\true})),\_,\_)\in m_2'\land(L_1\ra\epsilon)\in P\}.
        \end{array}
        \]
        The major difference from the above case is that the edge
        $e=(\_,\_,L_1^\true)\in m_1$ must satisfy $L_1\ra\epsilon$ so that
        the parsing of the nested inner string, which is empty, can
        terminate.

\end{enumerate}
\end{definition}

\paragraph{Constructing the pruner PDA}
We use $G_0$ for the set of states in the parser PDA; they are not
pruned. We use $G_1$ for the union of $G_0$ and pruned states from
$G_0$. The pruner PDA construction is the least solution of the
following equation
\[
\begin{array}{lll}
 G_1 & = &
     G_0 \cup G_1 \cup~  \predset{g_\epsilon(m)}{m\in G_0}  \\
  &  & \cup~ \predset{g(m,m')}{m\in G_0 \land m'\in G_1} \\
  &  & \cup~ \predset{g(m,m',m'')}{m\in G_0 \land m'\in G_1 \land m''\in (G_1\cap\Pow(\mmr))\cup\{\emptyset\}} \\
\end{array}
\]

Algorithm~\ref{alg:constr_gPDA} presents an algorithm for the pruner PDA construction.

\begin{algorithm}
    \caption{Constructing the pruner PDA}
    \label{alg:constr_gPDA}
    \begin{algorithmic}[1]
        \STATE Input: the set $A$ of all states in the parser PDA, $g_{\epsilon}$, $g$.
        \STATE $G_0\la A$.
        \STATE $G_1\la G_0\cup\{g_\epsilon(m)\mid m\in G_0\}$.
        \STATE Initialize the set for new states $N=G_1$.
        \STATE Initialize the set for transitions $\gPDA=\{\}$.
        \REPEAT
        \STATE $N'\la \{ (m_2',f,m_1,m_1')\mid (m_1',f)=g(m_1,m_2'),\ m_1\in G_0, \mbox{ and } m_2'\in N\}$
        \STATE {Add edge $(m_2',m_1')$ marked with $(m_1,f)$ to $\gPDA$,
            where $(m_2',f,m_1,m_1')\in N'$}.
        \STATE $N_R\la \{ (m_2',m_1'',f,m_1,m_1')\mid (m_1',f)=g(m_1,m_2',m_1''),\ m_1\in G_0,\ m_2'\in G_1, \mbox{ and } m_1''\in (G_1\cap\Pow(\mmr))\cup\{\emptyset\}\}$
        \STATE {Add edge $(m_2',m_1')$ marked with $(m_1,m_1'',f)$ to $\gPDA$,
            where $(m_2',m_1'',f,m_1,m_1')\in N_R$}.
        \STATE {$N\la\{m_1'\mid (\_,\_,\_,m_1')\in N' \lor  (\_,\_,\_,\_,m_1')\in N_R\}-G_1$}
        \STATE $G_1\la G_1\cup N$
        \UNTIL{$N=\emptyset$}
        \STATE Return $(G_1,\gPDA)$.
    \end{algorithmic}
\end{algorithm}

Given a parse forest with last state $m_n$, the pruner PDA
starts from $(g_\epsilon(m_n),\bot)$ and transitions as follows. 
\begin{definition}[Runtime transition for the pruner PDA]
\label{def:run_gPDA}
    The \emph{runtime transition} $\gPDAtrans$ of the pruner PDA is defined as
    $
        \gPDAtrans(m_1,m_2',T) = (m_1',T'),
    $
    where
    \begin{enumerate}
        \item
              if $(m_1,m_2')\subseteq (\mmc\times\mml)\cup (\mmc \times \mmc)$ and $T=t\cdot T'$, then  $(m_2',m_1')\in \gPDA$ and is marked with $(m_1,t,\tail)$;
        \item
              if $(m_1,m_2')\subseteq (\mmc\times\mml)\cup (\mmc \times \mmc)$ and $T=\bot$, then $(m_2',m_1')\in \gPDA$ and is marked with $(m_1,\emptyset,f)$, and $T'=\bot$;
        \item if $(m_1,m_2')$ belongs to other configurations, then $(m_2',m_1')\in \gPDA$ and is marked with $(m_1,f)$, and $T'=f(T)$.
    \end{enumerate}
\end{definition}

\subsection{The extractor}
\begin{definition}[Connected parse trees]
    For two parse trees $v_1$ and $v_2$, $v_1$ can be \emph{connected to} $v_2$, denoted as $v_1\connect v_2$, is defined as
    $\exists L,u$, \st 
    $$v_1=\_+[(\_,\_,L^u)] \land v_2=[(L^u,\_,\_)]+\_,\text{ or}$$
    $$v_1=\_+[(L^u,\_,L_1^{u_1})] \land v_2=[((L^u,L_1^{u_1}),\_,\_)]+\_.$$
\end{definition}

We use terminology \emph{parse-tree sets} for a set of parse trees
together with corresponding stacks of call edges. We use symbol $V$
for a parse-tree set. Below we define some helper functions for
extracting parse-tree sets from a parse forest $[m_1, \ldots, m_n]$.

\begin{definition}[The helper function $F_1$]
    Given a parse forest $[m_1, \ldots, m_n]$, the helper function $F_1$ converts its first state $m_1$ to the initial parse-tree set $V$.
    \begin{enumerate}
        \item If $m_1 \in \mml$, then $F_1(m_1)=\{([e], \bot) \mid e\in m_1, e=(\_,\_,L^\false) \}$.
        \item If $m_1 \in \mmc$, then $F_1(m_1)=\{([e], e\cdot \bot) \mid e\in m_1, e=(\_,\_,L^u)\}$.
        \item If $m_1 \in \mmr$, then $F_1(m_1)=\{([e], \bot) \mid e\in m_1, e=(\_,\_,L^\false) \}$.
    \end{enumerate}
\end{definition}

\begin{definition}[The helper function $F$]
    Given a parse forest $[m_1, \ldots, m_n]$, the helper function $F$ extends a parse-tree set $V$ based on a state $m$ in the forest to a new parse-tree set $V'$.
    \begin{enumerate}
        \item If  $m \in \mml$ or  $\in \mmc$, $F$ finds the edges $e$ that are connected to the current parse-tree set.
        \begin{align*}
        F(V,m)&=\{(v+[e], E) \mid (v,E)\in V, e\in m \in \mml, v\connect[e] \}, \\
        F(V,m)&=\{(v+[e], e\cdot E) \mid (v,E)\in V, e\in m \in \mmc, v\connect[e] \}.
        \end{align*}
        \item If  $m \in \mmr$, $F$ in addition finds the edges that are connected to the last call edges of the current parse-tree set.
        \[
        \begin{array}{l}
        F(V,m)=\{(v+[e], E') \mid (v,e'\cdot E')\in V \lor ((v,\bot)\in V\land E'=\bot), e\in m, v\connect[e] \} \cup \\
        \hspace{5ex} \{(v+[e], E') \mid (v,e'\cdot E')\in V, v=(\_,\_,L^\true), L\ra\epsilon, e\in m, [e']\connect [e] \}.
        \end{array}
        \]
    \end{enumerate}
\end{definition}

\begin{definition}[The extraction function]
\label{def:extractor}
    Given a parse forest $[m_1, \ldots, m_n]$, we convert it to a parse-tree set in the following way:
    \begin{enumerate}
        \item $V_1=F_1(m_1)$,
        \item $V_i=F(V_{i-1},m_i),\ i=2..n$.
        \item $\extractor{[m_1, \ldots, m_n]}=V_n$.
    \end{enumerate}
\end{definition}

\section{Examples of Tagged CFGs}\label{app:Examples}
The following is a tagged CFG for JSON\footnote{\url{https://github.com/antlr/grammars-v4/tree/master/json/}},
where nonterminals start with lowercase characters, such as ``json'', 
and terminals start with uppercase characters, such as ``STRING''.
Also, call and return symbols are tagged with ``<'' or ``>'', respectively. 
The declarations of terminals are omitted.
\begin{verbatim}
json = value ;
obj = <'{' pair (',' pair)* '}'> | <'{' '}'> ;
pair = STRING ':' value ;
arr = <'[' value (',' value)* ']'> | <'[' ']'> ;
value = STRING | NUMBER | obj | arr | 'true' | 'false' | 'null' ;
\end{verbatim}

The following is a tagged CFG for HTML, 
which is adapted from the HTML grammar 
from the repository of ANTLR\footnote{\url{https://github.com/antlr/grammars-v4/tree/master/html/}}. 
In the following grammar, 
regular operators such as ``?'' or ``*'' are supported by our translator 
(operator ``+'' is also supported but not used here).
\begin{center}
\begin{verbatim}
htmlDocument = scriptletOrSeaWs* XML? scriptletOrSeaWs* DTD?  
    scriptletOrSeaWs* htmlElements* ;
scriptletOrSeaWs = SCRIPTLET | SEA_WS ;
htmlElements = htmlMisc* htmlElement htmlMisc* ;
htmlElement = TagOpen | <TagOpen htmlContent TagClose> 
    | TagSingle | SCRIPTLET | script | style ;
htmlContent = htmlChardata? 
    ((htmlElement | CDATA | htmlComment) htmlChardata?)* ;
htmlAttribute = TAG_NAME (TAG_EQUALS ATTVALUE_VALUE)? ;
htmlChardata = HTML_TEXT | SEA_WS ;
htmlMisc = htmlComment | SEA_WS ;
htmlComment = HTML_COMMENT | HTML_CONDITIONAL_COMMENT ;
script = SCRIPT_OPEN (SCRIPT_BODY | SCRIPT_SHORT_BODY) ;
style = STYLE_OPEN (STYLE_BODY | STYLE_SHORT_BODY) ;
htmlElement =  <TagOpen htmlElement TagClose> htmlElement
    | TagSingle
    | eps ;
\end{verbatim}
\end{center}

The following is a tagged CFG for XML adapted 
from ANTLR\footnote{\url{https://github.com/antlr/grammars-v4/tree/master/xml/}}.
\begin{center}
\begin{verbatim}
document  = prolog? misc* element misc*;
prolog    = XMLDeclOpen attribute* SPECIAL_CLOSE ;
content   = chardata? ((element | reference | CDATA | PI | COMMENT) chardata?)* ;
element   = OpenTag content CloseTag |   SingleTag ;
reference = EntityRef | CharRef ;
attribute = Name '=' STRING ;
chardata  = TEXT | SEA_WS ;
misc      = COMMENT | PI | SEA_WS ;
\end{verbatim}
\end{center}

\section{The translation algorithm} \label{app:terminate}

\newcommand{\transTable}{T}

\paragraph{Translating simple forms to linear forms}
We now give a description of the iterative translation algorithm.
The algorithm is based on the dependency graph $E_G$.
We start by removing from $E_G$ any dependency $(L,L')$ that is the
result of a rule like $L\ra s_1\br{L'}s_2$; intuitively, a matched
token $\br{L'}$ can be conceptually viewed as a ``plain symbol'' and
its presence does not affect the following translation.
Then, we remove any dependency $(L,L')$ that results from a rule like 
$L\ra sL'$. Based on the two conditions enforced by the validator,
the remaining $E_G$ does not have any cycles and becomes a directed acyclic graph (DAG).

In the following discussion, a rule whose head is $L$ is called \emph{a rule of $L$}. 

Our algorithm maintains 
a map $\transTable$ and a set $P'$ of translated rules.
$\transTable$ is initialized to an empty map and $P'$ is initialized to $P$. 
The algorithm uses $\transTable$ to keep track of new nonterminals created during translation.
At an iteration, a new nonterminal may be introduced for a certain string $L's$, 
where $L'\in V$ and $s \in (\Sigma\cup V)^+$. This new nonterminal is denoted as $L_{L's}$, and
the algorithm adds the mapping $(L_{L's}, L's)$ to $\transTable$.

At each iteration, if $E_G$ is empty, then the algorithm terminates (which implies
that all rules are already in linear forms).
Otherwise, since $E_G$ is a DAG, there must be a sink in $E_G$ (i.e., it does not have outgoing edges).
The algorithm selects a sink $L$ in $E_G$, and checks whether all rules of $L$ 
are in linear forms. If not, the algorithm rewrites non-linear-form rules of
$L$. There are three cases for such a rule:
\begin{enumerate}
    \item $L\ra Ls$, where $s\in (\Sigma\cup V)^*$. 
    This case cannot happen since the validator must reject the original grammar (it can be shown that
    the original grammar must have an invalid cycle rejected by the validator).
    
    \item $L\ra L's$, where $L'\neq L$ and $s\in (\Sigma\cup
    V)^*$. All rules of $L'$ must already be in linear forms; otherwise, $L'$ would
    be in $E_G$ and $L$ could not be a sink as $L$ depends on $L'$.
    The algorithm first uses the rules of $L'$ to rewrite $L \ra L's$
    into a set of new rules, then replaces $L \ra L's$ with the new rules in $P'$.
    The algorithm then updates the edges of $E_G$ based on the new rules.

    \item $L\ra t_1\cdots t_kL's$, where $t_k$ is a plain symbol
    or a matched token, and $s\in (\Sigma\cup V)^+$. First, the
    algorithm checks if there is a mapping $(L_{L's},L's)$
    in $\transTable$.  If not, the algorithm (1) creates a new
    nonterminal $L_{L's}$; (2) adds a new rule $L_{L's}\ra L's$ to $P'$;
    (3) adds a new mapping $(L_{L's},L's)$ to $\transTable$; and
    (4) adds a new node $L_{L's}$ to $E_G$.
    Second, the algorithm
    replaces $L\ra t_1\cdots t_kL's$ with $L\ra t_1\cdots t_kL_{L's}$ in $P'$
    and updates the edges of $E_G$ correspondingly.
\end{enumerate}

After all rules of $L$ are rewritten to be in linear forms, the
algorithm removes $L$ and the corresponding edges from $E_G$. After
that, the algorithm moves on to the next iteration.

Note that the updated $E_G$ after an iteration is still a DAG; so at
each iteration, a nonterminal can always be picked and progress can be
made.  Further, we can easily prove that the algorithm produces an
equivalent grammar as the original one by showing that each rewriting step
creates an equivalent grammar. Therefore, if the algorithm terminates
on some simple-form grammar that passes the validator, then it can be
translated to an equivalent linear-form grammar.  Note that we have
not proved that the algorithm always terminates (we have also not
found a counter example), which we leave for future work.

\section{An example of generating semantic actions}\label{app:gen_semanticactions}

\begin{equation*}
  \begin{split}
  & \text{Tagged CFG:} \\
  &L \ra A \br{A E}\ @L^6;\\
  &A \ra c E\ @A^2; \\
  &E \ra \epsilon\ @E^0.\\
  &
  \end{split}
  \Ra
  \begin{split}
  & \text{Simple forms:} \\
  &L \ra A \br{L_{AE}}\ @L^6;\\
  &L_{AE} \ra A E;\\
  &A \ra c E\ @A^2; \\
  &E \ra \epsilon\ @E^0.
  \end{split}
  \Ra 
  \begin{split}
  & \text{Linear forms:} \\
  &L  \ra c \br{ L_{AE} }E \ @L^6\circ A^1; \\
  &L_{AE} \ra c E\ @A^1; \\
  &A \ra c E\ @A^2; \\
  &E \ra \epsilon\ @E^0.
  \end{split}
  \Ra
  \begin{split}
  & \text{VPG:} \\
  &L  \ra c L_1 \ @L^6\circ A^1; \\
  &L_1  \ra \br{ L_{AE} }E; \\
  &L_{AE} \ra c E\ @A^1; \\
  &A \ra c E\ @A^2; \\
  &E \ra \epsilon\ @E^0.
  \end{split}
\end{equation*}

In the above tagged CFG, the first rule is attached with $L^6$ because an additional $E$ is implicitly added at the end of the rule.

The first step of translation separates $AE$ from the first rule and assigns it
to a new nonterminal $L_{AE}$. The semantic action of the first rule
does not change, since $L_{AE}$ has no semantic actions; so the
semantic values of $A$ and $E$ will be left on the stack. Thus,
$L^6$ still expects 6 values on the stack.
The second translation step expands $A$ in the first rule with $A\ra
cE @A^2$.  Then, $cE\br{ L_{AE} }E$ is simplified to $c\br{ L_{AE}}E$, 
and that is why $A^1$ is applied instead of $A^2$:
$A^1$ accepts the value for $c$.  {The same transformation is applied to the second rule.}
The last step of translation is more straightforward: a simple-form rule, e.g., $L\ra c_1c_2\dots c_nL'$,
is converted to $L\ra c_1L_1, L_1\ra c_2L_2,\cdots,L_{n-1}\ra c_n L'$.

As a concrete example, the parser will generate the following parse tree for the input string $c\br{c}$.
$$[(L,c,L_1),(L_1,\ac,L_{AE}),(L_{AE},c,E),((L_1,L_{AE}),\bc,E)].$$
Each edge in the above parse tree is then replaced with its attached action and the semantic values of the terminals.
{$$[L^6\circ A^1,c,\ac,A^1,c,E^0,\bc,E^0].$$}
And the evaluation result of the above stack machine is the following parse tree of the tagged CFG.
{$$[(L,[(A,[c]),\ac,(A,[c]),E^0,\bc,E^0])].$$}
\newpage
\section{The full evaluation}\label{app:eval}

The full evaluation is shown at 
Table~\ref{tab:evalJSON_full},
Table~\ref{tab:evalXML_full},
Table~\ref{tab:evalHTML_full},
Table~\ref{tab:total_JSON_full}, and
Table~\ref{tab:total_XML_full}.

\begin{table}[t]
    \centering
    \caption{The parsing time of JSON.}
    \label{tab:evalJSON_full}
    \begin{tabular}{crrrrr}
        \toprule
        Name	         & Size &   ANTLR     & VPG  & Conv  \\
        \midrule
        Members           & 74K & 17.30  ms & 1.468  ms & 0.534 ms \\
        poked             & 80K & 18.00  ms & 1.476  ms & 0.621 ms \\
        gists             & 89K & 14.53  ms & 0.684  ms & 0.257 ms \\
        senator           & 139K& 18.37  ms & 1.877  ms & 0.739 ms \\
        AskReddit         & 142K& 19.46  ms & 2.161  ms & 0.799 ms \\
        blog\_entries     & 148K& 10.68  ms & 0.551  ms & 0.065 ms \\
        github\_events    & 161K& 16.32  ms & 0.749  ms & 0.490 ms \\
        emojis            & 163K& 14.19  ms & 0.334  ms & 0.271 ms \\
        parliament\_events& 178K& 21.53  ms & 3.313  ms & 2.420 ms \\
        prize             & 214K& 21.71  ms & 2.433  ms & 2.443 ms \\
        y77d-th95         & 240K& 26.79  ms & 8.918  ms & 2.795 ms \\
        municipis         & 322K& 25.96  ms & 4.709  ms & 2.024 ms \\
        laureate          & 460K& 31.43  ms & 13.597 ms & 4.835 ms \\
        reddit\_all       & 480K& 26.27  ms & 6.128  ms & 3.169 ms \\
        transactions      & 530K& 30.40  ms & 5.749  ms & 7.256 ms \\
        representative    & 549K& 28.66  ms & 5.047  ms & 26.765 ms \\
        citm\_catalog     & 1.6M& 50.38  ms & 8.557  ms & 8.467 ms \\
        canada            & 2.1M& 92.99  ms & 14.898 ms & 20.273 ms \\
        twitter           & 2.1M& 23.09  ms & 6.612  ms & 2.884 ms \\
        movies            & 3.2M& 185.05 ms & 35.327 ms & 41.978 ms \\
        educativos        & 4.1M& 97.52  ms & 19.323 ms & 19.296 ms \\
        airlines          & 4.7M& 113.35 ms & 27.322 ms & 26.784 ms \\
        JSON.parse        & 7.0M& 234.66 ms & 62.955 ms & 61.066 ms \\
        \bottomrule
    \end{tabular}
\end{table}
\begin{table}[t]
    \centering
    \caption{The parsing time of XML.}
    \label{tab:evalXML_full}
    \begin{tabular}{crrrrr}
        \toprule
        Name	     & Size &	ANTLR    & VPG & Conv \\
        \midrule
        soap2       & 1.7K& 9.02 ms   & 0.0098 ms  & 0.01 ms \\
        nav\_48\_0  & 4.5K& 10.66 ms  & 0.0147 ms  & 0.02 ms \\
        nav\_63\_0  & 6.7K& 11.39 ms  & 0.0263 ms  & 0.03 ms \\
        nav\_78\_0  & 6.7K& 11.47 ms  & 0.0265 ms  & 0.03 ms \\
        cd\_catalog & 4.9K& 11.56 ms  & 0.0378 ms  & 0.065 ms \\
        form        & 15K & 12.13 ms  & 0.0494 ms  & 0.074 ms \\
        OfficeOrder & 10K & 12.37 ms  & 0.0538 ms  & 0.08 ms \\
        nav\_50\_0  & 10K & 12.51 ms  & 0.0621 ms  & 0.05 ms \\
        book        & 22K & 15.70 ms  & 0.1348 ms  & 0.264 ms \\
        book-order  & 22K & 15.88 ms  & 0.1343 ms  & 0.267 ms \\
        bioinfo     & 34K & 16.63 ms  & 0.3300 ms  & 0.301 ms \\
        soap\_small & 26K & 17.44 ms  & 0.1621 ms  & 0.326 ms \\
        cd\_big     & 30K & 18.35 ms  & 0.2271 ms  & 0.457 ms \\
        soap\_mid   & 131K& 26.39 ms  & 1.3509 ms  & 1.712 ms \\
        blog        & 1.3M& 38.13 ms  & 4.0320 ms  & 26.954 ms \\
        po1m        & 1.0M& 86.01 ms  & 4.2480 ms  & 12.795 ms \\
        soap        & 2.6M& 130.17 ms & 10.2351 ms & 34.011 ms \\
        bioinfo\_big& 4.3M& 146.46 ms & 17.0797 ms & 38.944 ms \\
        ORTCA       & 7.7M& 153.76 ms & 15.6059 ms & 6.007 ms \\
        SUAS        & 13M & 231.78 ms & 19.3162 ms & 13.343 ms \\
        address     & 15M & 429.40 ms & 67.7817 ms & 139.209 ms \\
        cd          & 26M & 912.88 ms & 192.2709 ms& 454.896 ms \\
        po          & 73M & 2058.21 ms& 425.1013 ms& 1070.37 ms \\
        \bottomrule
    \end{tabular}
\end{table}
\begin{table}[t]
    \centering
    \caption{The parsingtime of HTML.}
    \label{tab:evalHTML_full}
    \begin{tabular}{crrrr}
        \toprule
        Name     & Size & ANTLR   & VPG  & Conv   \\
        \midrule
        uglylink  & 172B& 15.76   ms & 0.0101 ms &  0.000002 ms \\
        style1    & 195B& 20.75   ms & 0.0146 ms &  0.000004 ms \\
        script1   & 277B& 20.54   ms & 0.0142 ms &  0.000005 ms \\
        attvalues & 384B& 26.44   ms & 0.0155 ms &  0.000004 ms \\
        html4     & 750B& 25.05   ms & 0.0171 ms &  0.000005 ms \\
        antlr     & 9.3K& 1322.56 ms & 0.0683 ms &  0.052 ms \\
        gnu       & 21K & 1992.10 ms & 0.0948 ms &  0.1 ms \\
        freebsd   & 27K & 3883.07 ms & 0.1026 ms &  0.146 ms \\
        abc.com   & 50K & 3839.71 ms & 0.1181 ms &  0.16 ms \\
        github    & 51K & 9818.   ms & 0.1806 ms &  0.261 ms \\
        metafilter& 63K & 11932.0 ms & 0.1739 ms &  0.275 ms \\
        wikipedia & 67K & 13005.6 ms & 0.3106 ms &  0.296 ms \\
        nbc.com   & 95K & 22154.1 ms & 0.2277 ms &  0.409 ms \\
        bbc       & 110K& 18913.9 ms & 0.2254 ms &  0.402 ms \\
        reddit    & 114K& 53793.7 ms & 0.2999 ms &  0.576 ms \\
        reddit2   & 114K& 51580.4 ms & 0.2955 ms &  0.571 ms \\
        cnn1      & 118K& 36689.5 ms & 0.5208 ms &  0.522 ms \\
        google    & 143K& 1832.42 ms & 0.0785 ms &  0.072 ms \\
        digg      & 152K& 59766.  ms & 0.6093 ms &  0.669 ms \\
        youtube   & 489K& 542537. ms & 1.7605 ms &  2.285 ms \\
        \bottomrule
    \end{tabular}
\end{table}
    
\begin{table}[t]
  \centering
  \caption{The total time (ms) of parsing JSON.}
  \label{tab:total_JSON_full}
  \begin{tabular}{crrrrrrrr}
      \toprule
    Name        & Size& \begin{tabular}{@{}c@{}}ANTLR \\ Lex\end{tabular} & \begin{tabular}{@{}c@{}}VPG \\ Parse\end{tabular}& Lex+Parse& SpiderM & JSCore& V8  & Chakra \\
      \midrule
Members       & 74K & 12.81 & 1.46 & 14.28 & 14.51 & 51.9  & 8.85 & 13.21 \\
poked         & 80K & 14.27 & 1.47 & 15.75 & 14.67 & 51.59 & 9.28 & 13.42 \\
gists         & 89K & 12.08 & 0.68 & 12.77 & 14.82 & 50.83 & 8.72 & 12.68 \\
senator       & 139K& 16.79 & 1.87 & 18.67 & 15.78 & 66.17 & 9.99 & 12.43 \\
AskReddit     & 142K& 16.3  & 2.16 & 18.48 & 15.4  & 46.53 & 9.19 & 14.14 \\
blog\_entries & 148K& 10.72 & 0.55 & 11.27 & 15.06 & 45.33 & 8.8  & 13.05 \\
github\_events& 161K& 16.57 & 0.74 & 17.32 & 16.01 & 55.43 & 9.98 & 13.87 \\
emojis        & 163K& 12.43 & 0.33 & 12.76 & 16.29 & 85.23 & 9.21 & 13.72 \\
parliament    & 178K& 16.30 & 3.31 & 19.62 & 16.36 & 70.35 & 10.38& 13.69 \\
prize         & 214K& 16.81 & 2.43 & 19.24 & 16.76 & 134.3 & 9.98 & 13.15 \\
y77d-th95     & 240K& 22.76 & 8.91 & 31.68 & 17.41 & 105.22& 11.96& 15.37 \\
municipis     & 322K& 23.58 & 4.70 & 28.29 & 20.72 & 178.03& 12.53& 16.35 \\
laureate      & 460K& 24.86 & 13.59& 38.46 & 20.72 & 178.03& 12.53& 16.35 \\
reddit\_all   & 480K& 25.51 & 6.12 & 31.64 & 20.02 & 60.2  & 11.48& 16.54 \\
transactions  & 530K& 24.18 & 5.74 & 29.93 & 21.09 & 78.9  & 11.78& 18.35 \\
representative& 549K& 29.26 & 5.04 & 34.31 & 22.27 & 132.86& 15.21& 16.19 \\
citm\_catalog & 1.6M& 38.8  & 8.55 & 47.38 & 33.81 & 70.64 & 27.59& 24.77 \\
canada        & 2.1M& 45.1  & 14.89& 60.09 & 57.14 & 67.51 & 33.95& 43.83 \\
twitter       & 2.1M& 103.7 & 6.61 & 110.36& 22    & 92.54 & 15.61& 17.44 \\
movies        & 3.2M& 85.39 & 35.32& 120.72& 70.93 & 106.41& 68   & 50.85 \\
educativos    & 4.1M& 108.14& 19.32& 127.47& 70.78 & 420.99& 44.53& 48.64 \\
airlines      & 4.7M& 80.97 & 27.32& 108.30& 73.86 & 95.12 & 41.93& 55.52 \\
Google        & 7.0M& 129.9 & 62.95& 192.93& 118.11& 139.21& 76.36& 87.81 \\
\bottomrule
\end{tabular}
\end{table}

\begin{table}[t]
  \caption{The total time (ms) of parsing XML.}
    \label{tab:total_XML_full}
  \begin{tabular}{crrrrrrrr}
  \toprule
  Name        & Size& \begin{tabular}{@{}c@{}}ANTLR \\ Lex\end{tabular}       &  \begin{tabular}{@{}c@{}}VPG \\ Parse\end{tabular}   & Lex+Parse     & Fast-XML& Libxmljs& SAX-JS  & HP2 \\
  \midrule
soap2       & 1.7K& 8.09  & 0.010  & 8.101  & 0.057   & 0.051  & 0.142   & 0.049           \\
nav\_48\_0  & 4.5K& 8.58  & 0.015  & 8.597  & 0.098   & 0.097  & 0.496   & 0.163           \\
cd\_catalog & 4.9K& 9.18  & 0.026  & 9.211  & 0.223   & 0.093  & 0.363   & 0.158           \\
nav\_63\_0  & 6.7K& 9.26  & 0.027  & 9.289  & 0.143   & 0.126  & 0.671   & 0.198           \\
nav\_78\_0  & 6.7K& 9.20  & 0.038  & 9.239  & 0.124   & 0.125  & 0.682   & 0.199           \\
nav\_50\_0  & 10K & 9.92  & 0.049  & 9.973  & 0.213   & 0.174  & 0.995   & 0.303           \\
OfficeOrder & 10K & 10.7 & 0.054  & 10.8 & 0.369   & 0.137  & 0.881   & 0.346           \\
form        & 15K & 11.6 & 0.062  & 11.7 & 0.375   & 0.168  & 0.934   & 0.272           \\
book-order  & 22K & 11.6 & 0.135  & 11.8 & 0.693   & 0.260  & 1.413   & 0.536           \\
book        & 22K & 11.8 & 0.134  & 11.9 & 0.635   & 0.260  & 1.410   & 0.544           \\
soap\_small & 26K & 12.3 & 0.330  & 12.6 & 0.855   & 0.402  & 2.421   & 0.776           \\
cd\_big     & 30K & 13.1 & 0.162  & 13.3 & 1.171   & 0.436  & 2.199   & 0.877           \\
bioinfo     & 34K & 15.6 & 0.227  & 15.9 & 0.992   & 0.379  & 2.157   & 0.901           \\
soap\_mid   & 131K& 21.2 & 1.351  & 22.6 & 4.109   & 1.953  & 11.7  & 3.813           \\
po1m        & 1.0M& 46.1 & 4.032  & 50.1 & 33.509  & 11.558 & 95.2  & 27.1          \\
blog        & 1.3M& 52.8 & 4.248  & 57.0 & 29.575  & 6.560  & 48.0  & 17.8          \\
soap        & 2.6M& 99.1 & 10.2 & 109.3& 110.1 & 38.1 & 260.8 & 76.2          \\
bioinfo\_big& 4.3M& 122.5& 17.0 & 139.6& 116.9 & 42.7 & 276.9 & 82.2          \\
ORTCA       & 7.7M& 141.8& 15.6 & 157.4& 138.4 & 91.2 & 664.6 & 88.7          \\
SUAS        & 13M & 236.5& 19.3 & 255.8& 254.3 & 182.4& 1213.6& 168.7         \\
address     & 15M & 367.2& 67.7 & 435.0& 584.4 & 196.1& 1012.0& 330.5         \\
cd          & 26M & 731.7& 192.2& 924.0& 1298& 419.0& 2102& 734.9         \\
po          & 73M & 1812 & 425  & 2237 & 3278  & 897  & 6617  & 1827        \\
  \bottomrule
  \end{tabular}
\end{table}

\section{Profiler result for the HTML grammar}\label{app:profile}
Table \ref{tab:bbc.profile} shows the profiler result of ANLTR for parsing ``bbc.com.html''.
In the table,
``Invocations'' means the number of decision invocations, 
``Time'' means the estimate time for prediction,
``Total k'' means the total number of lookahead symbols examined,
``Max k'' means the maximal number of lookahead symbols examined in any decision event,
``Ambiguity'' means the number of ambiguous input phrases,
and ``DFA cache miss'' means the number of non-DFA transitions during prediction.
The main time cost in parsing the HTML file (as well as other HTML files) is from 
the prediction and DFA cache miss.

\begin{table}[t]
    \centering
    \caption{The profiler result of ANTLR for parsing ``bbc.com.html''. The file triggers many DFA cache miss, which seriously slows down the parser.}
    \label{tab:bbc.profile}
    \begin{tabular}{l|l|l|l|l|l}
        \toprule
        Invocations & Time & Total k & Max k & Ambiguous & DFA cache miss \\
        \midrule
        2 & 0.177 & 5 & 2 & 1 & 5 \\
        1 & 0.005 & 1 & 1 & 0 & 1 \\
        1 & 0.005 & 1 & 1 & 0 & 1 \\
        1 & 0.004 & 1 & 1 & 0 & 1 \\
        2 & 0.111 & 15 & 12 & 1 & 13 \\
        2 & 0.004 & 2 & 1 & 0 & 2 \\
        1 & 0.004 & 1 & 1 & 0 & 1 \\
        2 & 0.073 & 3 & 2 & 0 & 3 \\
        2478 & 0.451 & 2478 & 1 & 0 & 3 \\
        938 & 24867.858 & 6389021 & 14687 & 0 & 5778095 \\
        1003 & 0.303 & 1003 & 1 & 0 & 2 \\
        1087 & 0.26 & 1087 & 1 & 0 & 3 \\
        938 & 0.525 & 938 & 1 & 0 & 3 \\
        1127 & 0.295 & 1127 & 1 & 0 & 4 \\
        1127 & 0.17 & 1127 & 1 & 0 & 3 \\
        2065 & 0.695 & 4005 & 2 & 0 & 6 \\
        1475 & 0.302 & 1475 & 1 & 0 & 1 \\
        1 & 0.002 & 1 & 1 & 0 & 1 \\
        \bottomrule
    \end{tabular}
\end{table}

\end{document}